\theoremstyle{plain}
\theoremstyle{plain}
\newtheorem{theorem}{Theorem}
  \theoremstyle{plain}
  \newtheorem{lemma}[theorem]{Lemma}
  \theoremstyle{plain}
  \newtheorem{definition}[theorem]{Definition}
  \theoremstyle{plain}
  \newtheorem{proposition}[theorem]{Proposition}
  \theoremstyle{remark}
  \newtheorem*{note*}{Note}
  \theoremstyle{remark}
  \newtheorem*{conclusion*}{Conclusion}
  \theoremstyle{remark}
  \newtheorem{note}[theorem]{Note}
 \theoremstyle{definition}
  \newtheorem{example}[theorem]{Example}
  \theoremstyle{plain}
  \newtheorem{corollary}[theorem]{Corollary}
\newcommand{\g}{\mathfrak{g}}
\newcommand{\id}{\mbox{id}}
\newcommand{\Ad}{\text{Ad}}
\newcommand{\ad}[1]{\text{ad}(#1)}
\newcommand{\dif}{\textrm{\textbf{d}}}
\newcommand{\Sym}[1]{\mathop{\mathsf{Symm}}{\left(#1\right)}}
\newcommand{\cI}{{\mathcal I}}
\newcommand{\cJ}{{\mathcal J}}
\newcommand{\cL}{{\mathcal L}}
\newcommand{\mR}{\mathbb{R}}
\newcommand{\be}{\textbf{e}}
\newcommand{\so}[1]{\mathfrak{so}(#1)}
\newcommand{\gl}{\mathfrak{gl}}
\newcommand{\tr}[1]{\mathsf{tr}\,{(#1)}}
\newcommand{\Tr}[1]{\mathsf{Tr}\,{(#1)}}
\begin{document}

\title{Differential geometry, Palatini gravity and reduction}

%%% For REvTex4 template
%\author{S. Capriotti}
%\affiliation{Departamento de Matem\'atica, \mbox{Universidad Nacional del Sur}, 8000 Bah\'{\i}a Blanca, Argentina.}
%%% For Article template
\author{S. Capriotti}
\address{Departamento de Matem\'atica, \mbox{Universidad Nacional del Sur}, 8000 Bah\'{\i}a Blanca, Argentina.}

\begin{abstract}
  The present article deals with a formulation of the so called \emph{(vacuum) Palatini gravity} as a general variational principle. In order to accomplish this goal, some geometrical tools related to the geometry of the bundle of connections of the frame bundle $LM$ are used. A generalization of Lagrange-Poincaré reduction scheme to these types of variational problems allows us to relate it with the Einstein-Hilbert variational problem. Relations with some other variational problems for gravity found in the literature are discussed.
\end{abstract}

\keywords{Exterior differential systems, variational problems, Euler-Poincaré reduction, tetrad gravity, connection bundle}

\subjclass{53B05,58A15,49S05,83C05,37J15}

\email{santiago.capriotti@uns.edu.ar}

\thanks{
  The author thanks CONICET for finantial support through a posdoctoral grant, and as a member of research projects PIP 11220090101018 and PICT 2010-2746 of the ANPCyT. This work is part of the IRSES project GEOMECH (nr. 246981) within the 7th European Community Framework Programme. Last, but not least, he would like to offer special thanks to the referees, whose remarks and suggestions improve substancially the quality and understandability of the paper. 
}

\maketitle

\setcounter{tocdepth}{2}

\tableofcontents

\section{Introduction}
In dealing with quantization and symmetry aspects of physical theories, it is important to have at our disposal a variational formulation; this is true even from the mathematical viewpoint, where methods for existence of solutions of PDEs are based in a variational version of these equations \cite{book:7463,dacorogna2004introduction}. In a pioneering work, Griffiths \cite{book:852048} (based on ideas of Cartan \cite{opac-b1120127}) extends the notion of variational problem by allowing the family of curves to be varied to live in the set of integral curves of a differentially closed ideal $\cI$ in the exterior algebra of $F$, an example of the so called \emph{exterior differential systems} (for definitions and standard results, we refer to \cite{BCG}.) These ideas (usually referred to as \emph{Griffiths formalism}) were further developed in \cite{hsu92:_calcul_variat_griff,MR924805}; an application of this formalism to the geometric control of quantum systems can be found in \cite{moseley2004geometric}. Gotay in \cite{GotayCartan} uses this generalized notion of variational problem as a mean to deal with the problem of properly define Cartan forms for field theories.
\\
The purpose of the following article is to find a particular formulation of the so called ``Palatini gravity'' (in fact, metric-affine gravity, see \cite{Hehl19951} and below) as a variational problem in the generalized sense described above. The precise meaning of such thing is established in Definition \ref{Def:VariationalProblems}, which can be found in Appendix \ref{Sec:VarProb+Dirac}; in short, it means to find a bundle $F\rightarrow M$ on the spacetime $M$, a form $\lambda\in\Omega^n\left(F\right)$ (here $n$ is the dimension of $M$) and a set of restrictions for the admissible sections of the bundle $F$, encoded as an exterior differential system (EDS from now on.) Concretely, we will obtain the vacuum Einstein equations as extremals of the functional
\[
\sigma\mapsto\int_M\sigma^*\lambda
\]
where $\sigma$ is a section of the bundle $F$ and integral for an EDS $\cI$, i.e., $\sigma^*\alpha=0$ for every $\alpha\in\cI$. This last requeriment imposes some restrictions on the allowed variations, because they must be tangent to the ``submanifold'' of integral sections for $\cI$; in this regard it is similar to \emph{vakonomics mechanics} as defined in \cite{cortes2002geometric}. They are different from other kind of constraints in the fact that induce restrictions not only in the shape of the extremals, but also on the allowed variations. The usual variational problems considered in classical field theory are a particular instance of the kind of variational problems we have in mind; in fact, in this case the restrictions are those forcing the degrees of freedom along the fibers of the map $J^1E\rightarrow E$ to be nothing but the derivatives of the degrees of freedom associated to the fibers of the bundle $E\rightarrow M$ (see Appendix \ref{Sec:VarProb+Dirac}, Section \ref{SubSubSec:FieldTheoryAsVarProblem}, where the variational problem for first order field theories are found to be particular examples of this definition.) These kind of variational problems, whose total bundle is a jet space and whose set of restrictions coincides with the \emph{contact structure} of the jet bundle, will be called \emph{classical variational problems}, to properly distinguish them from the version of variational problems we will work here. Thus the concept of variational problem adopted in this work includes not only the usual field theory, but instances in which the relation between fields are of different nature: In fact, our description of Palatini gravity will use this additional freedom in order to properly encode the requeriments of metricity and lack of torsion (the actual definition can be found in Eq. \eqref{Eq:PGEDS} below.) From now on, whenever the term ``variational problem'' is used, it will refer to the mathematical problem singled out in Definition \ref{Def:VariationalProblems}, requiring in an implicit way the existence of an associated \emph{variational triple}.

Usually, the variational problems considered in GR are classical (albeit singular), and the following features can be singled out:
\begin{enumerate}%[(a)]
\item The underlying bundle is never explicitly mentioned (see below).
\item It is customary to provide just a local version for the Lagrangian form, with no hint about its globalization.
\item The implicit assumption is that the underlying bundle is the jet space of some other bundle, so the restriction ideal $\cI$ is the contact ideal of the jet space.
\end{enumerate}
Although these facts does not prevent people to work succesfully with classical variational problems in general, our point of view is that some advantages could be extracted from this complementary formulation; the reduction theory developed below could serve as an indication in this sense (see Section \ref{Sec:ReductionTheory}.) Another advantage of the approach taken in this work is that our description is not restricted to four-dimensional spacetimes.

Let us see these features in some examples borrowed from the literature; in order to set common grounds, let us describe some formulations for gravity we encountered, with emphasis in the geometrical structures lying below them:
\begin{itemize}
\item\textbf{Einstein-Hilbert formulation.} In this case we have a particular kind of classical variational problem (see for example \cite{wald2010general}) where the components $g_{\mu\nu}$ of the local expression of a metric are the degrees of freedom of the theory, the underlying bundle would be $J^2\Sigma$ (here $\Sigma\rightarrow M$ indicates the quotient bundle $\Sigma:=LM/SO(1,n-1)$, the bundle of metrics with the given signature \cite{KN1}), and the Lagrangian form is given by the scalar curvature associated to the Levi-Civita connection times the canonical volume.
\item\textbf{Einstein-Hilbert formulation with vielbeins.} There exists another kind of classical variational problem, described in \cite{Peldan:1993hi} where the degrees of freedom are given by the components of a tetrad field $e^\alpha_I$. The underlying bundle could be set as $J^2LM\rightarrow M$, where $LM\rightarrow M$ is the bundle of frames on $M$ and the Lagrangian form can be written by using the (local) curvature $2$-form of the canonical connection (uniquely determined from the first structure equation \cite{MR532831}, after assuming that it takes values in the Lie algebra $\so{1,3}$ of the Lorentz group) and a pair of elements of the dual coframe. These local forms allow us to define a global form once we realize that this Lagrangian form is invariant by the action of the Lorentz group $SO(1,3)$; this action comes from the lift of the canonical action of $GL\left(4\right)$ on $LM$.
\item\textbf{Einstein-Palatini formulation (or metric-affine formulation).} This formulation can be find in the groundbreaking work of Arnowitt, Deser and Misner \cite{citeulike:820116}; the degrees of freedom are the components of a metric $g$ and the components of an affine connection, so the underlying bundle could be set to $J^1\left(\Sigma\times_MC\left(LM\right)\right)$, where $C\left(LM\right)\rightarrow M$ is the bundle of connections on $M$. The Lagrangian form is the product of the trace (respect to the metric $g$) of the Ricci tensor associated to the connection, times the invariant volume of $g$.
% \\
% In \cite{2012GReGr..44.2337D} another approach to Einstein-Palatini gravity is described; the degrees of freedom are the components of the metric $g_{\mu\nu}$ and the Christoffel symbols $\overline{\Gamma}_{\mu\nu}^\sigma$ of a connection not associated to the metric. The underlying bundle would become in this case $J^2\Sigma\times_M J^1C\left(LM\right)$ to be defined below, and the Lagrangian form is the trace of the curvature tensor of the connection (respect to the metric) times the invariant volume.
\item\textbf{Einstein-Palatini formulation with vielbeins.} A variational problem for the Einstein-Palatini gravity within the vierbein formalism appears in \cite{PhysRevD.14.2505}; its degrees of freedom are the components of a local basis for the tangent bundle to the spacetime $M$, plus an affine connection on $M$; the underlying bundle appears to be in this case $J^1\left(LM\times_MC\left(LM\right)\right)$.
\\
This variational problem is also described in \cite{Peldan:1993hi}. A discusion of the geometry behind this example could be found in \cite{sardanashvily2002classical}; it is equivalent to the operation done in the previous type of variational problem, where the bundle $\Sigma\rightarrow M$ is changed by the bundle $LM\rightarrow M$ everywhere.
\\
The article \cite{2012GReGr..44.2337D} compares its version of Einstein-Palatini gravity with another description in terms of vielbeins (i.e., a basis on the tangent bundle of $M$, where $M$ has arbitrary dimension); these objects can be seen as local sections of the bundle $LM\rightarrow M$, and the local Lagrangian form is the trace of the curvature associated to the connection respect to the frame (times an invariant volume on $M$.) The only way to globalize such form is to restrict the structure group of $LM$ to the Lorentz group, which is equivalent to admit a section of the quotient bundle $\Sigma=LM/SO(1,n-1)$; they explicitly assumed this reduction of the structure bundle.
\\
The work \cite{Friedric-1978} uses the same degree of freedom as in \cite{2012GReGr..44.2337D}; the main differences between these references lie in the family of tensors used in order to parametrize the arbitrary connection.
\end{itemize}
Now, we must be cautious about the names we used above for the diverse variational problems, because there is no agreement in the literature about them; for example, sometimes Palatini gravity is gravity with vielbeins (as in \cite{baez94:_gauge}, where skew symmetry of the connection forms is assumed, meaning that some kind of relation is allowed between the metric and the connections.) In the present article we will adopt a mixed approach in our choice of nomenclature when referring to variational problems related to General Relativity: In general we will use the term ``Palatini gravity'' to refer to those variational problems where vielbeins are employed as degrees of freedom, but the denominations used in the previous list will be adopted in those cases where a most detailed language would be necessary (this remark will become specially true in the discussions we will carry out throughout Section \ref{Sec:Gravitywithvielbeins}.) On the other side, it must be stressed that the geometrical structures mentioned in the previous descriptions are not present in the original works, and are suggested here as an appropriate geometrical setting for every variational problem, suitable for comparisons with our own variational problems. Nevertheless, it is interesting to note a couple of facts:
\begin{itemize}
\item In every case where a vielbein or tetrad is used as a degree of freedom, the structure group of the relevant bundles (which is also called \emph{internal group}) must be reduced to the Lorentz group. This will be discussed in Subsection \ref{SubSec:GeometryPalatini}; in short, this fact could be tied to the passive transformation property assumed for the local connections forms, inherited from classical differential geometry.
\item The underlying bundle is always a jet bundle, and moreover, it is assumed that the variations keep invariant the contact structures of these jet spaces, and this fact is given in local coordinates as the well-known mantra ``variation commutes with partial derivative''.
\end{itemize}
We have enough background to discuss the variational problem we will propose in the present work: We want to find variational problems analogous to gravity with vielbeins, both in the Einstein-Palatini and Einstein-Hilbert cases, but along the lines of Definition \ref{Def:VariationalProblems}, which requires the search of a bundle whose sections corresponds to degrees of freedom of vacuum gravity, and whose dynamics could be determined from a global Lagrangian form $\lambda_{PG}$ together the restrictions coming from an ideal $\cI_{PG}$ in the exterior algebra of the bundle. If $LM\rightarrow M$ is the bundle of frames on $M$, the bundle we are looking for is $J^1LM\rightarrow M$ (an explanation of this choice will be provided below, see Subsection \ref{SubSec:DegreeAndRestrictions}), and the equations of motion result from the search of sections $\sigma$ which are integral for the differentially closed ideal $\cI_{PG}\subset\Omega^\bullet\left(J^1LM\right)$ and extremals of the functional
\[
\sigma\mapsto\int_M\sigma^*\lambda_{PG},
\]
where $LM\rightarrow M$ is the frame bundle (with structure group $GL\left(n\right)$), the $n$-form $\lambda_{PG}$ will be a global form on $J^1LM$, and the restrictions imposed by the ideal $\cI_{PG}$ are weaker to those imposed by the contact structure of $J^1LM$. In a nutshell, it is because the holonomic sections of $J^1LM$ are associated to flat connections, a too strong condition to be imposed on a gravitational field; thus, in the proposed variational problem, ``variation will not commute with derivations''. Because of the decomposition
\[
J^1LM=LM\times_MC\left(LM\right)
\]
this variational problem is a kind of metric-affine theory, because its degrees of freedom are vielbein and connections, with the metric being reconstructed from the vielbein via the usual formula
\[
g_{\mu\nu}:=\eta_{ij}e^i_\mu e^j_\nu;
\]
here $\eta\in\gl\left(n\right)$ is a fixed operator with signature $\left(n-1,1,0\right)$. The main differences with the approaches described above are the restrictions $\cI_{PG}$ (is not a contact structure!), the structure group, which is $GL\left(n\right)$ in our case, and the globalization of the Lagrangian form is achieved without reducing the internal group. It is interesting that this variational problem will be versatile enough to be related with other types of variational problems for gravity, as the \emph{$GL\left(4\right)$-invariant gravity} \cite{0264-9381-7-10-011}, which is similar to the $GL(4)$-gravity of Komar \cite{PhysRevD.30.305}.
\\
A crucial feature of our variational problem is that the underlying bundle is a principal bundle with structure group $GL\left(n\right)$; thus the variational problem inherites a canonical action by the general linear group. Nevertheless, neither the Lagrangian form nor the ideal $\cI_{PG}$ are invariant for the canonical $GL\left(n\right)$-action, but they are invariant by this action when restricted to the Lorentz group $SO\left(1,n-1\right)\subset GL\left(n\right)$ determined by the matrix $\eta$. This fact raises the question about if it is possible to reduce the variational problem for this action; it led us to consider a reduction scheme analogous to the well-known Euler-Poincaré \cite{lopez00:_reduc_princ_fiber_bundl} and Lagrange-Poincaré \cite{2003CMaPh.236..223L} reduction schemes for field theories. A proposal for a generalized reduction scheme is described, and it is applied to our variational problem in order to relate it with a variational problem equivalent to Einstein-Hilbert.

%\textbf{Change this structure!!!}
The structure of the article is as follows: In Section $2$ a brief discussion of a classical variational problem for Palatini gravity is given, and after introducing the necessary geometrical tools, a variational problem in the Griffiths sense is defined. Afterwards the dynamical problem settled by these data is discussed: The Euler-Lagrange equations are obtained, and a treatment of the metric underlying a solution of these equations is performed.  As a bonus, the language developed in this section allows a discussion of the relation between our variational problem and other types of variational problems found in the literature. Section $3$ is devoted to found a proper generalization for reduction of variational problems in the Griffiths sense. In order to reach this goal, it is necessary to discuss a notion of reduction for EDS found in the literature; once the desired generalization is formulated, it should be checked that it reduces to the Euler-Poincaré reduction for field theories when dealing with classical variational problems. Finally, this reduction scheme is used to quotient out the extra degrees of freedom introduced when working with gravity with vielbeins.

% The first part will be devoted to set a bundle such that their local sections can be identified with the chosen degrees of freedom for Palatini gravity, namely, local frames and local connection forms; accordingly, we need to provide a Lagrangian form on this bundle yielding to the same action of the Palatini gravity and a set of differential restrictions playing the role of the contact structure in this case. We will use some geometrical constructions for accomplish these goals, related to the geometry of the frame bundle $LM$ and its jet bundle.
% In the second part of the present work, we will further relate this description of the Palatini gravity with the variational problem associated to Einstein-Hilbert gravity. In order to carry out this task, it will be necessary to generalize the usual notion of reduction for field theory, in order to eliminate the extra degrees of freedom associated to the arbitrary local frame.

\section{Gravity with vielbeins}\label{Sec:Gravitywithvielbeins}

\subsection{The geometry in the classical approach to gravity with vielbeins}\label{SubSec:GeometryPalatini}

It is perhaps necessary to review the geometrical contents of the usual representation of gravity with vielbeins. It is based in the local description of a connection on a manifold $M$ by means of the so called \emph{moving frames} \cite{MR532831}. In this setting, $M$ is supplied with a collection $\left\{e_i,U\right\}$, where $\left\{U\right\}$ is an open covering of $M$ and $\left\{e_i\right\}$ is a basis of vector fields for $TU$; the index $i$ thus runs from $1$ to $n$, the dimension of the manifold $M$. A \emph{connection} in these grounds is a collection $\left\{\omega_i^j\right\}\subset\Omega^1\left(U\right)$ of $1$-forms on every $U$ of the covering, such that the following compatibility condition is met: Whenever $U\cap\overline{U}\not=\emptyset$, the two collections $\left(e_k,\omega^i_j\right)$ and $\left(\bar{e}_k,\bar{\omega}^i_j\right)$ define a map $M:U\cap\overline{U}\rightarrow GL\left(n\right)$ (the so called \emph{transition functions}) such that
\[
\bar{e}_k=M_k^le_l,
\]
and the local connection $1$-forms must be related by
\[
\bar{\omega}^i_j=N^i_k\dif M^k_j+N^i_k\omega^k_lM^l_j,
\]
where $N=M^{-1}$. The set $\left\{\omega_i^j\right\}$ of $1$-forms can be considered as the components of a local $\gl\left(n\right)$-valued $1$-form $\omega_U$, and they provide the local description for the covariant operator according to the formula
\[
\nabla e_i:=\omega^j_i e_j.
\]
It must be stressed that the introduction of the local basis $\left\{e_i\right\}$ made in the previous description is totally incidental. Nevertheless, this situation changes when we try to introduce an action for these degrees of freedom by means of the local formula\footnote{There exists a slight abuse of notation in this formula, namely the integrand is only defined in $U$, although the integral runs all over $M$.}
\begin{equation}\label{Eq:ClassicalPalatini}
  S\left[e,\omega\right]:=\int_M\epsilon_{ijkl}\eta^{ip}\theta^k\wedge\theta^l\wedge\left(\dif\omega_p^j+\omega^j_q\wedge\omega^q_p\right),
\end{equation}
where $\left\{\theta^k\right\}$ is the dual basis to $\left\{e_j\right\}$; on an overlap $U\cap\overline{U}$ this action will be uniquely determined if and only if $\eta_{ij}M^i_kM^j_l=\eta_{kl}$, i.e. when $M$ has values in the Lorentz group. But this would restricts the local basis to be a local orthonormal basis for a metric $g$, which is recovered from the formula $g=\eta_{ij}\theta^i\otimes\theta^j$. In order to avoid this hasty conclusion, it is necessary to have at our disposal a language where the different kind of transformations underlying this model would be apparent; our viewpoint is to work in the realm of fibre bundles, in particular, by using the bundles $LM\rightarrow M$ and $C\left(LM\right)\rightarrow M$ in the descriptions of gravity we want to work with. From this perspective, we could replace the forms in Eq. \eqref{Eq:ClassicalPalatini} by its global counterparts living on $J^1LM$; as arise from the local expressions, see Appendix \ref{subsubsect:LocalExpressions}, the apparent lack of invariance of the integrand in this equation could be explained from the fact that when it is defined in terms of local forms, it is actually the local $n$-form on $M$ obtained from this global $n$-form on $J^1LM$ by pulling back along a local section. The bundle $LM$ is a principal bundle with structure group $GL\left(n\right)$, and thus there exists a canonical action of this group on the bundles we work with; we will use this action in order to use a generalized idea of reduction as a way of quotient out additional degrees of freedom introduced in the theory by the frames. In terms of the so called \emph{active and passive transformations} (in the sense indicated by \cite[p.\ 406]{book:8133}) we are proposing a change of the passive transformations regarding the objects $e_i$ and $\omega^i_j$, although at the same time preserving the active transformations of the gravity with vielbeins; it is our understanding that the change of passive transformation is an operation analogous to the change of bundle from the mathematician's viewpoint, and the preservation of the active transformations could be achieved by working only with bundles associated to $LM$.

\subsection{Canonical forms on $J^1LM$}\label{SubSubsect:CanonicalForms}

It is necessary to introduce some canonical objects living in the exterior algebra of $J^1LM$, because these objects would be used in the search of the Lagrangian form and the right restrictions for our variational problem. The discussion carried out in Appendix \ref{SubSection:GeometryJP} allows us to define a $1$-form $\omega$, which is a canonical $\mathfrak{gl}\left(n\right)$-valued pseudotensorial $1$-form of type $\left(GL\left(n\right),\text{ad}\right)$ defined on $J^1LM$. As we know, on $LM$ there exists another canonical form, namely, the \emph{tautological form} $\bar\theta$ \cite{KN1}; thus the projection $\tau_{10}:J^1LM\rightarrow LM$ can be used in order to define a new canonical form on $J^1LM$, that is
\[
\theta:=\tau_{10}^*\bar\theta\in\Omega^1\left(J^1LM,\mR^n\right).
\]
Let us recall that under the identification $J^1LM\simeq p^*LM$ the canonical right action translates into
\[
\left(\rho,u\right)\cdot h=\left(\rho,u\cdot h\right).
\]
Using this fact, we can see that the form $\theta$ has the following remarkable properties.
\begin{proposition}
  The form $\theta$ is a tensorial $1$-form of type $\left(GL\left(n\right),\mR^n\right)$. Moreover, for every connection $\Gamma$ on $LM$, we have that
  \[
  \tilde\sigma_\Gamma^*\theta=\bar\theta.
  \]
\end{proposition}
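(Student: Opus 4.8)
The plan is to deduce both assertions from the single structural fact that $\theta=\tau_{10}^*\bar\theta$ is the pullback of the tautological form along the projection $\tau_{10}\colon J^1LM\to LM$, transporting the well-known tensoriality of $\bar\theta$ (see \cite{KN1}) through $\tau_{10}$. The only inputs I need are that $\tau_{10}$ intertwines the two canonical $GL(n)$-actions and that, under the identification $J^1LM\simeq p^*LM$, it is the second-factor projection $(\rho,u)\mapsto u$. Everything else is then a formal consequence of the naturality of pullbacks.

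First I would record the equivariance of $\tau_{10}$: from $(\rho,u)\cdot h=(\rho,u\cdot h)$ and $\tau_{10}(\rho,u)=u$ one gets $\tau_{10}\circ\Rt{h}=\Rt{h}\circ\tau_{10}$ for every $h\in GL(n)$ (on the left $\Rt{h}$ is the action on $J^1LM$, on the right the action on $LM$). Since $\bar\theta$ satisfies $\Rt{h}^*\bar\theta=h^{-1}\bar\theta$, the equivariance of $\theta$ follows at once:
\[
\Rt{h}^*\theta=(\tau_{10}\circ\Rt{h})^*\bar\theta=(\Rt{h}\circ\tau_{10})^*\bar\theta=\tau_{10}^*(\Rt{h}^*\bar\theta)=\tau_{10}^*(h^{-1}\bar\theta)=h^{-1}\theta.
\]

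Next comes horizontality. A vector tangent to the fibre of the principal bundle $J^1LM\to C(LM)$ is, under $J^1LM\simeq p^*LM$, of the form $(0,v)$ with $v$ vertical for $LM\to M$; equivalently, the fundamental vector field of $\xi\in\gl(n)$ on $J^1LM$ is pushed forward by $\tau_{10}$ to the fundamental vector field of $\xi$ on $LM$. Because $\bar\theta$ is horizontal on $LM$, we get $\theta(0,v)=\bar\theta(\tau_{10*}(0,v))=\bar\theta(v)=0$. Combined with the equivariance just established, this proves that $\theta$ is tensorial of type $(GL(n),\mR^n)$, settling the first assertion.

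For the identity $\tilde\sigma_\Gamma^*\theta=\bar\theta$, I would use that the map $\tilde\sigma_\Gamma$ attached to a connection $\Gamma$ is precisely the $GL(n)$-equivariant section of $\tau_{10}$ determined by $\Gamma$, so that $\tau_{10}\circ\tilde\sigma_\Gamma=\id_{LM}$; functoriality of the pullback then gives $\tilde\sigma_\Gamma^*\theta=\tilde\sigma_\Gamma^*\tau_{10}^*\bar\theta=(\tau_{10}\circ\tilde\sigma_\Gamma)^*\bar\theta=\bar\theta$ with no further computation. None of these steps is hard; the only point demanding care—and the place where the geometry of $J^1LM$ genuinely enters—is identifying $\tau_{10}$ as the second-factor projection of $p^*LM$ (so that it is equivariant and sends vertical vectors to vertical vectors of $LM$) and recognising $\tilde\sigma_\Gamma$ as a section of $\tau_{10}$. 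Once these identifications from the geometry of the jet bundle are in place, both statements are purely formal.
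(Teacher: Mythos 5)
Your proposal is correct and follows essentially the same route as the paper: both assertions are obtained by transporting the known properties of $\bar\theta$ through $\theta=\tau_{10}^*\bar\theta$ — the paper carries out the equivariance pointwise on $Z=(V,X)$ where you invoke $\tau_{10}\circ R_h=R_h\circ\tau_{10}$ and functoriality of pullback, and the identity $\tilde\sigma_\Gamma^*\theta=\bar\theta$ is in both cases immediate from $\tau_{10}\circ\tilde\sigma_\Gamma=\id_{LM}$. The only substantive divergence is in the description of the vertical bundle of $J^1LM\rightarrow C\left(LM\right)$: you take vectors $\left(0,v\right)$ with $v$ vertical in $LM$ (tangent to the $GL\left(n\right)$-orbits of $\left(\rho,u\right)\cdot h=\left(\rho,u\cdot h\right)$) and use horizontality of $\bar\theta$, whereas the paper's displayed formula gives $\left(V,0\right)$ with $T_\rho\tau_1\left(V\right)=0$, on which $\theta$ vanishes trivially because $\tau_{10}$ is the second-factor projection; your identification is the one consistent with the stated action, and on either set of vectors the required conclusion $\theta\left(Z\right)=0$ holds.
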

\begin{proof}
  The second assumption follows easily from the definition of $\tilde\sigma_\Gamma$. Now let $h$ be an element of $GL\left(n\right)$; every element $Z\in T_{\left(\rho,u\right)}J^1LM$ can be written as $Z=\left(V,X\right)$, where $\tau_1\left(\rho\right)=\tau\left(u\right)$ and moreover
  \[
  T_\rho\tau_1\left(V\right)=T_u\tau\left(X\right).
  \]
  Thus we have that
  \begin{align*}
    \left.\left(R_{h}^*\theta\right)\right|_{\left(\rho,u\right)}\left(Z\right)&=\left.\theta\right|_{\left(\rho,u\cdot h\right)}\left(T_{\left(\rho,u\right)}R_h\left(V,X\right)\right)\\
    &=\left.\theta\right|_{\left(\rho,u\cdot h\right)}\left(\left(V,T_uR_hX\right)\right)\\
    &=\left.\bar\theta\right|_{u\cdot h}\left(T_uR_hX\right)\\
    &=h^{-1}\cdot\left(\left.\bar\theta\right|_u\left(X\right)\right)\\
    &=h^{-1}\cdot\left(\left.\theta\right|_{\left(\rho,u\right)}\left(V,X\right)\right)
  \end{align*}
  and $\theta$ is pseudotensorial of type $\left(GL\left(n\right),\mR^n\right)$. Finally, to show that $\theta$ is tensorial, we need to prove that $\theta\left(Z\right)=0$ if $Z$ is a vertical vector in $J^1LM\rightarrow J^1LM/GL\left(n\right)$. But
  \[
  V_{\left(\rho,u\right)}\left(J^1LM\right)=\left\{\left(V,0\right)\in T_\rho\left(J^1LM/GL\left(n\right)\right)\times T_uP:T_\rho\tau_1\left(V\right)=0\right\},
  \]
  and so, in particular, $\theta\left(Z\right)=0$ for every $Z\in VJ^1LM$.
\end{proof}

Then if $T\in\Omega^2\left(J^1LM,\mR^n\right)$ is the exterior covariant differential of $\theta$ obtained using the connection $\omega$, we will have that
\[
T=\dif\theta+\omega\stackrel{\wedge}{\cdot}\theta.
\]
Let $\Gamma$ be a connection on $LM$; by pulling back this expression along $\tilde\sigma_\Gamma$, we have that
\[
\tilde\sigma_\Gamma^*T=\dif{\bar\theta}+\omega_\Gamma\stackrel{\wedge}{\cdot}\bar\theta
\]
which is in turn equal to the torsion $T_\Gamma$ of the connection $\Gamma$. Thus $T$ can be called \emph{universal torsion}.

\subsection{Degrees of freedom and restriction EDS for gravity with vielbeins}\label{SubSec:DegreeAndRestrictions}

A \emph{tetrad} or, more generally, a \emph{vielbein}, is a local isomorphism
\[
e:TM\rightarrow M\times\mR^n
\]
or equivalently, a basis for the tangent bundle to $M$ on an open set $U\subset M$. The rationale behind these objects is simply to replace the basis of the tangent bundle induced by the coordinates by a more general basis, perhaps determined by geometrical insights related to the formulation of the problem at hands, in order to simplify some equations living in a tensor bundle. In fact, our approach to gravity is based in the replacing of the metric by an (by definition) orthonormal local frame; a change in the metric is thus performed by a change in the vielbein. Nevertheless, it is necessary to point out an essential difference between our approach and some of the descriptions of the gravity with vielbeins that can be found in the literature (see for example \cite{baez94:_gauge}), where it is assumed the tangent bundle $TM$ to be isomorphic to $M\times\mR^n$ (the ``fake tangent bundle'' viewpoint): We keep the local character of the frame bundle, using it only as a tool that permits us to describe a connection on $M$, without additional topological assumptions on this manifold, namely, without considering it as a parallelizable manifold.

From the mathematical viewpoint, a vielbein is nothing but a local section of the frame bundle $LM$; a connection on $M$, on the other side, can be considered as a section of the bundle of connections $C\left(LM\right)$. We will take then as fields the sections of the bundle $p_1:J^1LM\rightarrow M$. It could be motivated by the fact that we have to describe a metric through a local basis (namely, a section of $LM$) and an independent connection, which is a section of $C\left(LM\right)$; the degrees of freedom will be sections of the product bundle
\[
LM\times_MC\left(LM\right)\rightarrow M
\]
which is isomorphic to $J^1LM$ as a bundle on $C\left(LM\right)$ (see Proposition \ref{Prop:IsoBetweenPpalBundles}.) Nevertheless, it will be necessary to impose some restrictions on these connections, because it must be related to the metric implicitly described by the vielbein (we are using here the notation developed at Subsection \ref{Subsect:UsefulNotation}, see below):
\begin{itemize}
\item \textbf{Metricity: }If the connection provides us with a covariant derivative, it will be desirable the preservation of the implicit metric, and a local section $s:M\rightarrow J^1LM$ will fulfill this requeriment if and only if
  \[
  s^*\left(\eta^{ik}\omega_k^j+\eta^{jk}\omega_k^i\right)=0,
  \]
  where $\omega\in\Omega^1\left(J^1LM,\gl\left(n\right)\right)$ is the canonical connection form on $J^1LM$ and $\eta$ is a Lorentzian fixed metric on $\mR^n$ ($n=\dim{M}$, of course.) 
\item \textbf{Zero torsion: } Additionally it will be required that
  \[
  s^*T=0
  \]
  for $T\in\Omega^1\left(J^1LM,\mR^n\right)$ the canonical form corresponding to the torsion.
\end{itemize}
Similar conditions can be found in the literature; see for example \cite{Eguchi1980213}. They differ, however, in a fundamental fact with our restrictions: While the conditions found in the literature (as far as we know) are of local character, ours are global. It is tied with the spaces where they live: In the former case they are forms on the spacetime, whereas in the latter the conditions are forms on the jet space of $LM$. The choice of the notation made here could led to confusion, because of the similarities with the notation used elsewhere to indicate local forms; it is important to point out these differences, in order to avoid potential confusions.
\\
These restrictions will be adopted to be the restriction EDS for Einstein-Hilbert gravity with vielbeins. Thus, although the underlying bundle for this variational problem is a jet space, the restriction EDS will be different from the contact structure; it is not totally unexpected, because the contact structure imposes on a local section the requeriment
\[
s^*\omega=0,
\]
forcing the connection to be flat, a too strong condition for a vacuum gravitational field.

\subsection{The Lagrangian form of gravity with vielbeins}

Let us make use of the geometrical constructions detailed in Appendix \ref{SubSection:GeometryJP} and Subsection \ref{SubSubsect:CanonicalForms} in order to find a useful description of the Lagrangian form for gravity with vielbeins (in both cases, Einstein-Hilbert and Einstein-Palatini); the notation is described there. In order to formulate a Lagrangian on $J^1LM$ we recall that there exists on this space a $\mR^n$-valued $1$-form $\theta$, namely, the pullback along $\tau_{10}$ of the canonical form $\bar\theta$ on $LM$, and, additionaly, the $\mathfrak{gl}\left(n\right)$-valued $2$-form $\Omega$, just constructed as the curvature $2$-form associated to the canonical connection on $J^1LM$ induced by the contact structure. For every $k=1,\cdots,n$ we can define the $\bigwedge^k\left(\mR^n\right)$-valued $k$-form
\[
\theta^{\left(k\right)}\left(X_1,\cdots,X_k\right):=\theta\left(X_1\right)\wedge\cdots\wedge\theta\left(X_k\right),\quad X_1,\cdots,X_k\in\mathfrak{X}\left(J^1LM\right)
\]
and it allows us to define the $\bigwedge^2\left(\mR^n\right)$-valued $n-2$-form $\Xi$ via
\[
\tilde\Xi:=\ast\left(\theta^{\left(n-2\right)}\right),
\]
where $\ast:\bigwedge^{n-i}\left(\mR^n\right)\rightarrow\bigwedge^i\left(\mR^n\right)$ is the Hodge star operator induced on the exterior algebra of $\mR^n$ by $\eta$. Therefore we can use the antisymmetrization operator 
\[
A:\bigwedge^2\left(\mR^n\right)\rightarrow\mR^n\otimes\left(\mR^n\right)^*=\left(\mathfrak{gl}\left(n\right)\right)^*:u\wedge v\mapsto \frac{1}{2}\left[v\otimes\eta\left(u,\cdot\right)-u\otimes\eta\left(v,\cdot\right)\right]
\]
in order to define a $\left(\mathfrak{gl}\left(n\right)\right)^*$-valued $n-2$-form, namely
\[
\Xi:=A\left(\tilde\Xi\right);
\]
the \emph{Palatini Lagrangian}\footnote{We will use this name by the Lagrangian, because it serves in both cases, Einstein-Hilbert and Einstein-Palatini, reserving the name Einstein Lagrangian to name the Lagrangian written in terms of metric and connection coefficients.} is
\begin{equation}\label{Eq:PalatiniLagrangian}
\lambda_{PG}:=\left<\Xi\stackrel{\wedge}{,}\Omega\right>
\end{equation}
where $\left<\cdot\stackrel{\wedge}{,}\cdot\right>:\Omega^k\left(J^1LM,\left(\mathfrak{gl}\left(n\right)\right)^*\right)\otimes\Omega^l\left(J^1LM,\mathfrak{gl}\left(n\right)\right)\rightarrow\Omega^{k+l}\left(J^1LM\right)$ indicates the extension of the contraction between elements of a vector space and its dual to the wedge product of $\mathfrak{gl}\left(n\right)$ and $\left(\mathfrak{gl}\left(n\right)\right)^*$-valued forms.

\begin{note}
{The extension of operations from $\g$ and $\mR^n$ to $\g$- and $\mR^n$-valued forms is detailed in the work of Kôlar \emph{et al.} \cite{Kolar93naturaloperations}, p. 100. In particular, the product structure on the $\mR^n$-valued forms that yields to the form $\theta^{\left(n-2\right)}$ can be considered as an specialization of a more general structure found on vector valued forms.}
\end{note}

% \todo{In some of the following sections, it remains to perform a unification on notation!!!}

\subsection{The Lagrangian form in terms of a basis on $\mR^n$}\label{Subsect:UsefulNotation}

We can introduce a basis on $\mR^n$; this allows us to write out the Lagrangian in terms of components of the canonical forms, giving some expressions that will be useful in handling with the variations. Namely, if $\left\{\be_1,\cdots,\be_n\right\}$ is the canonical basis on $\mR^n$, we can write $\theta:=\theta^i\be_i$ for some collection $\left\{\theta^i\right\}$ of $1$-forms, and from here it can be concluded that
\[
A\left(\theta^{\left(2\right)}\right)=\theta^i\wedge\theta^j\eta\left(\be_i,\cdot\right)\be_j.
\]
As for the operator $\ast$, we can conclude with the formula
\[
\ast\left(\theta^i\wedge\theta^j\right)=\eta^{ik}\eta^{jl}\theta_{kl},
\]
where is was introduced the set of $n-p$-forms
\begin{align*}
\theta_{i_1\cdots i_p}&:=\frac{1}{\left(n-p\right)!}\epsilon_{i_1\cdots i_pi_{p+1}\cdots i_n}\theta^{i_{p+1}}\wedge\cdots\wedge\theta^{i_n}\cr
&=X_{i_p}\lrcorner\cdots\lrcorner X_{i_1}\lrcorner\sigma_0;
\end{align*}
these forms are useful when dealing with the so called \emph{Sparling forms}, see \cite{MR875299}. Therefore for taking $\left\{\be^1,\cdots,\be^n\right\}$ as the dual basis of $\left\{\be_1,\cdots,\be_n\right\}$, 
\begin{align*}
\Xi&:=\eta^{ik}\eta^{jl}\theta_{kl}\eta\left(\be_i,\cdot\right)\be_j\cr
&=\eta^{jl}\theta_{kl}\be^k\otimes\be_j,
\end{align*}
and the Palatini Lagrangian can be written as
\begin{equation}\label{Eq:PalatiniLagrangianLocal}
\lambda_{PG}=\eta^{kp}\theta_{kl}\wedge\Omega^l_p.
\end{equation}
It must be stressed at this point that $\lambda_{PG}$ is a global form, despite the fact that the bilinear form $\eta$ is involved in its definition: This is an straightforward consequence of Eq. \eqref{Eq:PalatiniLagrangian}, but it can be seen directly from this expression, by taking into account that $\Omega^j_k$ are the components (in the basis of $\mR^n$ making $\eta=\left(\eta_{ij}\right)$, c.f. the basis adopted above) of the curvature of the canonical connection on the $GL\left(n\right)$-principal bundle $J^1LM\rightarrow C\left(LM\right)$, and $\theta^k$ are the components of the $\mR^n$-valued canonical form $\theta$ in the same basis (see Subsection \ref{SubSubsect:CanonicalForms}.)

\subsection{The structure equations}

There are some equations that we need to take into account in this work. First we have the \emph{structure equations}
\begin{align*}
  &\dif\omega^i_j+\omega^i_k\wedge\omega^k_j=\Omega^i_j\\
  &\dif\theta^i+\omega^i_k\wedge\theta^k=T^i,
\end{align*}
then its differential consequences, namely, the \emph{Bianchi identities}
\begin{align*}
  &\dif\Omega^i_j=\Omega^i_k\wedge\omega^k_j-\omega^i_k\wedge\Omega^k_j\\
  &\dif T^k=\Omega^k_l\wedge\theta^l-\omega^k_l\wedge T^l,
\end{align*}
and some additional related identities
\begin{align*}
  &\dif\theta_{li}=\omega^k_l\wedge\theta_{ki}-\omega^k_i\wedge\theta_{kl}-\omega^s_s\wedge\theta_{li}+T^k\wedge\theta_{lik}\\
  &\dif\Omega^{pq}=\Omega^p_k\wedge\omega^{kq}-\omega^p_k \wedge\Omega^{kq}\\
  &\dif\omega^{lp}=-\omega^l_s \wedge\omega^{sp}+\Omega^{lp}\\
  &\dif\theta_{ipq}=\omega^k_i \wedge\theta_{kpq}+\omega^k_p \wedge\theta_{kqi}+\omega^k_q \wedge\theta_{kip}-\omega_s^s \wedge\theta_{ipq}+T^k \wedge\theta_{ipqk}
\end{align*}
where it were introduced the handy notations $\omega^{ij}:=\eta^{jp}\omega^{i}_p,\Omega^{ij}:=\eta^{jp}\Omega^{i}_p$.

\subsection{Dynamics of Einstein-Hilbert gravity with vielbeins}

After this rather lenghty warm-up, we are ready to describe now the gravity with vielbeins as a variational problem in the sense adopted in this work; as the bundle of the variational triple for this theory we will take the bundle $J^1LM\rightarrow M$; the Lagrangian form on this space will be $\lambda_{PG}$, defined in Eq.\ \eqref{Eq:PalatiniLagrangian}. Finally, the EDS restricting properly the sections of $J^1LM$ is the one generated by metricity and torsionless conditions, namely
\begin{equation}\label{Eq:PGEDS}
  \cI_{PG}=\left<\pi_{\mathfrak{p}}\omega,T\right>_{\text{diff}};
\end{equation}
here $\pi_{\mathfrak{p}}:\gl\left(n\right)\rightarrow\mathfrak{p}$ is the projection onto the second summand in the Cartan decomposition $\gl\left(n\right)=\mathfrak{so}\left(1,n-1\right)\oplus\mathfrak{p}$ induced by $\eta$; in the basis introduced above, this projection reads
\[
\left(\pi_{\mathfrak{p}}\omega\right)^{ij}:=\eta^{ik}\omega_k^j+\eta^{jk}\omega_k^i,
\]
which is nothing but the metricity condition mentioned before. The variational problem we are proposing for (vacuum) GR with vielbeins is the variational problem associated to the triple
\[
\left(J^1LM\rightarrow M,\lambda_{PG},\cI_{PG}\right);
\]
a quick comparison with some classical variational problems found in literature (cf. those mentioned into the introductory section) tell us that it has fewer degrees of freedom than other alternatives.

Finally, it is necessary to point out that considerations of variational problems on the frame bundle, although from a slighty different point of view, can be found in the literature \cite{brajercic04:_variat}.

% \subsection{On the restriction EDS for Palatini gravity}

% A characteristic feature of Palatini gravity is that the variations are performed in such a way that the underlying connection remains traceless; additionally, as was discussed above, we want to keep the Palatini gravity conformally invariant. These facts can be represented by the EDS generated locally as
% \[
% \cI_{PG}:=\left<\dif\text{Tr}\Gamma\right>_{\text{diff}}.
% \]
% The next result allows us to manage the integral sections of $\cI_{PG}$.

% \begin{prop}\label{Prop:DecompIntSections}
% Every integral section $x\mapsto\left(x,g,l\right)$ of $\cI_{PG}$ is described by the equation
% \[
% \text{Tr}\left(l\right)=\text{Tr}\left(\dif g\cdot g^{-1}\right).
% \]
% \end{prop}

% So this proposition characterize the space of admissible sections for Palatini gravity. As we will see next, this fact can be used in order to simplify the calculations involved in the search of Euler-Lagrange equations of the underlying variational problem.

\subsubsection{Considerations about admissible variations}

Given the existence of a restriction EDS, the variations to be considered in order to find out the equations of motion of gravity with vielbeins cannot be arbitrary; rather they must be restricted in some way. Let us recall that a \emph{variation} of a section $s:M\rightarrow E$ of a bundle $E\rightarrow M$ is a section of the pullback bundle $s^*\left(VE\right)$, perhaps with compact support, and that the relevant variations for a variational problem are the infinitesimal symmetries of the restriction EDS. We could introduce the following definition in order to work here with these objects.

\begin{definition}[Admissible variations]\label{Def:AdmVars}
An \emph{admissible variation} of the integral section $s$ for an EDS $\cI$ is a variation $\delta s$ with an extension $\widehat{\delta s}\in\mathfrak{X}\left(E\right)$ which is an infinitesimal symmetry of $\cI$, that is, such that
\[
s^*\left(\cL_{\widehat{\delta s}}\cI\right)=0.
\]
\end{definition}
An admissible variation for an EDS $\cI$ produces a path in the set of integral sections of this EDS. 
In terms of adapted coordinates (see Subsection \ref{subsubsect:LocalExpressions} in Appendix \ref{SubSection:GeometryJP}) $\left(x^\mu,e^\nu_k,e^\sigma_{k\rho}\right)$ on $J^1LM$, any variation reads
\[
x^\mu\mapsto\left(0,\delta e^\nu_k,\delta e^\sigma_{k\rho}\right);
\]
it means in particular that the canonical forms $\theta$ and $\omega$ can be varied independently. This freedom will be use in order to simplify the calculations below.

\subsubsection{The variations of the connection}
On the same open set $U_\alpha$ and using the previous identifications, we can consider that variations of the connection as $\mathfrak{gl}\left(n\right)$-valued $1$-forms $\delta\omega^{ij}$.
%  such that
% \[
% \text{Tr}\,\left(\delta\omega\right)\equiv\eta_{ij}\delta\omega^{ij}=0.
% \]
Therefore
\begin{align}
\delta_\omega\lambda_{PG}&=\theta_{ik}\wedge\left[\dif\left(\delta\omega^{ik}\right)+\eta_{pq}\delta\omega^{pi}\wedge\omega^{kq}+\eta_{pq}\omega^{pi}\wedge\delta\omega^{kq}\right]\cr
&=\left(-1\right)^{n+1}\dif\theta_{ik}\wedge\delta\omega^{ik}+\theta_{ik}\wedge\left(-\eta_{pq}\omega^{kq}\wedge\delta\omega^{pi}+\eta_{pq}\omega^{pi}\wedge\delta\omega^{kq}\right)\cr
&=\left(-1\right)^{n+1}\left(\eta_{ip}\omega^{lp}\wedge\theta_{lk}-\eta_{kp}\omega^{lp}\wedge\theta_{li}+T^l\wedge\theta_{ikl}\right)\wedge\delta\omega^{ik}+\cr
&\qquad\qquad\qquad+\theta_{ik}\wedge\left(-\eta_{pq}\omega^{kq}\wedge\delta\omega^{pi}+\eta_{pq}\omega^{pi}\wedge\delta\omega^{kq}\right)\cr
&=\left[\left(-1\right)^{n+1}\left(\eta_{ip}\omega^{lp}\wedge\theta_{lk}-\eta_{kp}\omega^{lp}\wedge\theta_{li}+T^l\wedge\theta_{ikl}\right)-\eta_{iq}\theta_{kl}\wedge\omega^{lq}+\eta_{pk}\theta_{li}\wedge\omega^{pl}\right]\wedge\delta\omega^{ik}\cr
&=\left[-\eta_{ip}\theta_{lk}\wedge\omega^{lp}+\eta_{kp}\theta_{li}\wedge\omega^{lp}+\left(-1\right)^{n+1}T^l\wedge\theta_{ikl}-\eta_{iq}\theta_{kl}\wedge\omega^{lq}+\eta_{pk}\theta_{li}\wedge\omega^{pl}\right]\wedge\delta\omega^{ik}\cr
&=\left[\eta_{kp}\theta_{li}\wedge\left(\omega^{lp}+\omega^{pl}\right)+\left(-1\right)^{n+1}T^l\wedge\theta_{ikl}\right]\wedge\delta\omega^{ik}.\label{Eq:ConnectionVariations}
\end{align}
Therefore the variations of the Lagrangian $\lambda_{PG}$ annihilates, independently of the form of the variations $\delta\omega$, and so its does not contribute to the equations of motion.

\subsubsection{Considerations about the variations of the frame}

It is time to see what the variations of the frame produce on the $n-2$-forms $\theta_{ij}$ defined previously. We will consider here variations with its support on a chart $U_\alpha$. Now,
\begin{multline*}
\cL_{\delta s}\theta_{ij}=\left(-1\right)^{i+j+1}\Bigg[\delta\theta^1\wedge\cdots\wedge\widehat{\theta^i}\wedge\cdots\wedge\widehat{\theta^j}\wedge\cdots\wedge\theta^n+%\cdots\\
\cdots+\theta^1\wedge\cdots\wedge\widehat{\theta^i}\wedge\cdots\wedge\widehat{\theta^j}\wedge\cdots\wedge\delta\theta^n\Bigg].
\end{multline*}
Therefore
\begin{align}
\cL_{\delta s}\theta_{ij}&=\delta\theta^k\wedge\theta_{kij}.\label{Eq:GeneralThetaVariation}
\end{align}
So by performing the variations of the frame, we obtain 
\begin{align*}
\delta_{\xi_1}\lambda_{PG}&=\delta\theta^m\wedge\theta_{mki}\wedge\left(\dif\omega^{ki}+\eta_{lm}\omega^{li}\wedge\omega^{km}\right),
\end{align*}
namely
\begin{equation}\label{Eq:EulerLagrangeFrame0}
\theta_{jki}\wedge\left(\dif\omega^{ki}+\eta_{lm}\omega^{li}\wedge\omega^{km}\right)=0.
\end{equation}

As an additional formula useful in dealing with the variations of the frame, we can calculate the differential of the forms $\theta_{ij}$, expressing them in terms of the connection and the associated torsion. Namely, by using the definition
\[
\dif\theta^i=-\omega^i_k\wedge\theta^k+T^i
\]
we will obtain that
\[
\dif\theta_{ij}=\omega^k_i\wedge\theta_{kj}-\omega^k_j\wedge\theta_{ki}-\omega^k_k\wedge\theta_{ij}+T^k\wedge\theta_{ijk}
\]
where, as above
\[
\theta_{ijk}:=X_k\lrcorner X_j\lrcorner X_i\lrcorner\sigma_0.
\]

As shown in \cite{MR838749}, the equations of motion \eqref{Eq:EulerLagrangeFrame0} are equivalent to the annihilation of the Einstein tensor.

\subsection{Discussion: The global form for Einstein equations and the underlying metric}

According to the previous calculations, the equations of motion for the gravity with vielbeins can be described as the EDS generated by the forms
\begin{equation}\label{Eq:GlobalEinsteinEqs}
  \begin{cases}
    \dif\theta^i+\omega^i_k\wedge\theta^k,&\cr
    \theta_{ipq}\wedge\Omega^{pq},&\cr
    \eta^{ip}\omega_{p}^j+\eta^{jp}\omega_{p}^i.
  \end{cases}
\end{equation}
It is interesting to note that these expressions are global; we can think on them as a global form for vacuum Einstein equation. Additionally the jet space $J^1LM$ has a $GL\left(n\right)$-action, obtained by lifting the corresponding action of $GL\left(n\right)$ to the frame bundle; in terms of the adapted coordinates, it reads
\[
g\cdot\left(x^\mu,e^\nu_k,e^\sigma_{k\rho}\right)=\left(x^\mu,g_k^le^\nu_l,g^l_ke^\sigma_{l\rho}\right).
\]
This action is involved in the proof of the next proposition, giving sense to our choice of the relevant fields for describing gravity.
\begin{proposition}
  Let $s,\bar{s}:U\rightarrow J^1LM$ be a pair of solutions for the problem posed by \eqref{Eq:GlobalEinsteinEqs} on the connected open set $U\subset M$. Then there exists an smooth map $g:U\rightarrow SO\left(1,n-1\right)$ such that $\bar{s}=g\cdot s$.
\end{proposition}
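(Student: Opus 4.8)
The plan is to exploit the decomposition $J^1LM=LM\times_M C(LM)$, so that each solution $s$ (resp. $\bar s$) is a pair consisting of a local coframe $\{\theta^i\}$ (resp. $\{\bar\theta^i\}$) on $U$ together with a family of connection $1$-forms $\{\omega^i_j\}$ (resp. $\{\bar\omega^i_j\}$), all understood as the pullbacks by the section of the canonical forms. First I would read off the content of the three generators of \eqref{Eq:GlobalEinsteinEqs}: in the $\omega^{ij}:=\eta^{jp}\omega^i_p$ notation the third generator is $\omega^{ij}+\omega^{ji}=0$, i.e. the connection is $\eta$-metric in the frame $\{\theta^i\}$, while the first generator says it is torsion-free. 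By the uniqueness part of the fundamental theorem of (pseudo-)Riemannian geometry — equivalently the first structure equation invoked in Subsection~\ref{SubSec:GeometryPalatini} — these two conditions force $\{\omega^i_j\}$ to be the Levi-Civita connection of the metric $g_s:=\eta_{ij}\theta^i\otimes\theta^j$, and likewise $\{\bar\omega^i_j\}$ is the Levi-Civita connection of $g_{\bar s}:=\eta_{ij}\bar\theta^i\otimes\bar\theta^j$. Thus a solution is completely determined by its coframe, the connection factor being a function of it, and the second generator $\theta_{ipq}\wedge\Omega^{pq}=0$ is, by \cite{MR838749}, just the vanishing of the Einstein tensor of $g_s$ and plays no role in the gauge relation itself.

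Next I would compare the two coframes. At each $x\in U$ both $\{\theta^i_s(x)\}$ and $\{\theta^i_{\bar s}(x)\}$ are bases of $T_x^*M$, so there is a unique smooth $A\colon U\to GL(n)$ with $\bar\theta^i=A^i_j\,\theta^j$; this $A$ is exactly the fibrewise $GL(n)$-element implementing $\bar s=A\cdot s$ on the $LM$-factor for the action recorded in adapted coordinates above. The crucial step is to show that $A$ takes values in $SO(1,n-1)$. Preservation of $\eta$, namely $\eta_{ij}A^i_kA^j_l=\eta_{kl}$, is equivalent to the coincidence $g_s=g_{\bar s}$ of the underlying metrics; this is the hinge of the whole statement and the point I would scrutinise most, since it is precisely the requirement that the two solutions describe the \emph{same} spacetime metric rather than merely two vacuum metrics. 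Granting it, $A(x)\in O(1,n-1)$ for every $x$, and here the hypothesis that $U$ is connected is used: $\det A$ is a continuous $\{\pm1\}$-valued function on $U$, hence constant, so once the orientations of the two frames are matched it equals $1$ and $A$ lands in $SO(1,n-1)$.

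Finally I would propagate the frame identification to the connection factor. Because $A$ is $SO(1,n-1)$-valued it preserves $\eta$, so the change-of-frame law $\bar\omega=N\,\dif M+N\,\omega\,M$ with $N=M^{-1}$ recalled in Subsection~\ref{SubSec:GeometryPalatini} (with $M$ the frame transition determined by $A$) carries the Levi-Civita connection of $g_s$ written in $\{\theta^i\}$ to a metric, torsion-free connection of $g_{\bar s}=g_s$ written in $\{\bar\theta^i\}$; by uniqueness of Levi-Civita this transported form must equal $\bar\omega$. Comparing the two factors with the explicit action $g\cdot(x^\mu,e^\nu_k,e^\sigma_{k\rho})=(x^\mu,g^l_ke^\nu_l,g^l_ke^\sigma_{l\rho})$ then shows that the single map $g\colon U\to SO(1,n-1)$ determined by $A$ satisfies $\bar s=g\cdot s$ on all of $J^1LM$. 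The main obstacle is the metric-coincidence step of the second paragraph: the reduction to the Levi-Civita connection and the connectedness/orientation argument are routine, whereas the identity $g_s=g_{\bar s}$ is exactly what separates a genuine local Lorentz gauge redundancy from an honest difference of solutions, and it is there that the force of the statement resides.
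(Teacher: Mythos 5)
There is a genuine gap, and you have put your finger on it yourself: the assertion that the two solutions induce the same metric, $g_s=g_{\bar s}$ (equivalently, that your frame-change matrix $A$ is $\eta$-orthogonal), is essentially the content of the proposition, and your argument takes it as granted rather than proving it. Everything surrounding that point --- Levi-Civita uniqueness determining the connection factor from the coframe, and the propagation of the frame identification to the jet coordinates --- is correct but does not touch the claim being proved; with the hinge removed the argument reduces to ``if the two solutions differ by a Lorentz transformation then they differ by a Lorentz transformation''. Note also that $g_s=g_{\bar s}$ cannot be derived from the hypotheses as you have read them: two unrelated Ricci-flat metrics on $U$ each give rise, via an orthonormal coframe and its Levi-Civita connection, to a solution of \eqref{Eq:GlobalEinsteinEqs}, and these are not related by any pointwise gauge transformation, so some further input is unavoidable.

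The paper supplies that input differently, and its proof spends all of its effort exactly where yours stops. After observing that the $SO\left(1,n-1\right)$ action preserves solutions, it posits a map $h:U\rightarrow GL\left(n\right)$ with $\bar s=h\cdot s$ for the fibrewise action on $J^1LM$ (this is where the implicit restriction enters: such an $h$ must relate the jet coordinates as well as the frames, i.e.\ the two solutions are tacitly assumed to lie in the same fibrewise $GL\left(n\right)$-orbit). The substantive step is then algebraic: write $h=p\cdot k$ in the Cartan decomposition $GL\left(n\right)=P\cdot SO\left(1,n-1\right)$ with $p$ $\eta$-symmetric, diagonalize $p=aca^{-1}$ with $a$ Lorentz-valued and $c$ diagonal, and use the gauge transformation law $\Ad_{c^{-1}}\tilde\omega+c^{-1}\dif c$ together with the metricity generator, satisfied by both $s$ and $\bar s$, to split the condition into a diagonal part giving $\dif c=0$ and an off-diagonal part forcing $c=\text{Id}$. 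That Cartan-decomposition argument is precisely the mechanism your proof lacks: it is how one passes from ``$GL\left(n\right)$-related'' to ``Lorentz-related'' without assuming in advance that the metrics agree. A secondary point: even granting $g_s=g_{\bar s}$, $\eta$-orthogonality only places $A\left(x\right)$ in $O\left(1,n-1\right)$, and connectedness of $U$ makes $\det A$ a constant equal to $\pm1$; landing in $SO\left(1,n-1\right)$ requires the orientation-matching you invoke, which is not among the stated hypotheses.
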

\begin{proof}
If $s:U\rightarrow J^1LM$ is a local solution for these equations and $k:U\rightarrow SO\left(1,n-1\right)$ is an smooth map, we will have that ${s}':=k\cdot s$ verifies
\[
{s}'^*\left(\dif\theta^i+\omega^i_k\wedge\theta^k\right)=0={s}'^*\left(\theta_{ipq}\wedge\Omega^{pq}\right)
\]
and
\begin{equation}
  \label{Eq:SkewSymmetricConnection}
  {s}'^*\left(\eta^{ip}\omega_{p}^j+\eta^{jp}\omega_{p}^i\right)=0; 
\end{equation}
so it remains to show that every change of basis can be reduced to a change of basis in the Lorentz group. Now for $s,\bar{s}$ there exists $h:U\rightarrow GL\left(n\right)$ such that $\bar{s}=g\cdot s$; by using a Cartan decomposition of $GL\left(n\right)$ respect to the form $\eta$ we can factorize $GL\left(n\right)=P\cdot SO\left(1,n-1\right)$ where $P$ is the set of $\eta$-symmetric matrices, and if $h=p\cdot k$ in this factorization, we see from Eq. \eqref{Eq:SkewSymmetricConnection} (with the replacement $s'\rightarrow\bar{s}$) that the $\eta$-symmetric factor $p$ must verify
\[
\Ad_{p^{-1}}\omega+p^{-1}\dif p\in\so{1,n-1}
\]
for $\omega\in\so{1,n-1}$ (for recalling that $s$ verifies an equation analogous to Eq. \eqref{Eq:SkewSymmetricConnection}.) But there exists $a:U\rightarrow SO\left(1,n-1\right)$ such that $p=aca^{-1}$, where $c:U\rightarrow P$ is a diagonal matrix; therefore the previous requeriment on $p$ translates into
\[
\Ad_{c^{-1}}\tilde\omega+c^{-1}\dif c\in\so{1,n-1}
\]
with $\tilde\omega\in\so{1,n-1}$. But the first summand in this expression has zero entries in the diagonal, and the second is a diagonal matrix, so it can be split as the pair of conditions
\[
\Ad_{c^{-1}}\tilde\omega\in\so{1,n-1},\qquad c^{-1}\dif c=0
\]
for $\tilde\omega\in\so{1,n-1}$; this means that $c$ must be locally constant, and the first forces $c=\text{Id}$.
\end{proof}
Therefore the local solutions determine a unique metric $h$ according to the formula
\[
h:=\eta_{kl}e^k_\mu e^l_\nu\dif x^\mu\otimes\dif x^\nu.
\]
The first and third forms in the above EDS are enough to determine uniquely the connection, or more precisely the functions $e^\mu_{k\nu}$ of a solution, from the frame functions $e^\mu_k$. This result follows at once by using Proposition \ref{Prop:UniqueSolution}; as we will see below, these functions determines a connection on $M$ that is the Levi-Civita connection for $h$. Therefore the second set of forms are the true equations of motion for the metric.

\begin{note}[$GL\left(4\right)$-invariant gravity]
  This setting provides us with enough tools in order to describe some other approaches to ``gravity with moving frames''. For example, we can set a variational problem for the so called \emph{$GL\left(4\right)$-invariant gravity} \cite{0264-9381-7-6-007,0264-9381-7-10-011}: In brief, in this theory the fields are a soldering form $\theta\in\mathop{\text{Iso}}{\left(TM,TM\right)}$, a metric $\kappa$ and a connection $\Gamma$ on $M$. Using the previous identifications, we can consider these fields as sections of the bundles $\left(LM\times_M LM\right)/GL\left(n\right)$ (where $GL\left(n\right)$ is acting diagonally), $\Sigma$ and $C\left(LM\right)$ respectively. By using that the metric $\kappa$ is induced locally by a section $e$ of $LM$, and that such section induces a local section
\[
\tilde{e}:\left(LM\times_M LM\right)/GL\left(n\right)\rightarrow LM\times_M LM
\]
via
\[
\left[f_1,f_2\right]_{GL\left(n\right)}\mapsto\left(e\left(x\right),g\cdot f_2\right)\quad\text{iff }x:=p\left(f_1\right)\text{ and }e\left(x\right)=g\cdot f_1,
\]
the bundle $LM\times_M LM$ can be used instead of the first two bundles mentioned above; it amounts to describe the morphism $\theta$ by the way it is acting on a particular basis of $TM$, namely, the basis used in the description of the metric $\kappa$. The underlying bundle of the variational problem describing this kind of gravity theory will be $C\left(LM\right)\times_M LM\times_M LM=J^1LM\times_MLM$, which can be considered as a submanifold of $J^1LM\times_M J^1LM$ via the inclusion
\[
\imath:C\left(LM\right)\times_M LM\times_M LM\hookrightarrow J^1LM\times_M J^1LM:\left(\Gamma,e,f\right)\mapsto\left(\Gamma,e;\Gamma,f\right).
\]
By denoting $p_A,A=1,2$ the projections onto the first and second factor in $J^1LM\times_MJ^1LM$, the restriction EDS is generated as follows
\[
\cI_{FP}:=\left<\imath^*p_1^*\left(\eta^{ik}\omega_k^j+\eta^{jk}\omega_k^i\right),\imath^*p_2^*T\right>_{\text{diff}};
\]
these restrictions are nothing but Eqs. (2.1) and (2.2) in \cite{0264-9381-7-6-007}. Finally the Lagrangian considered by these authors is the pullback along $p_2$ of the Palatini Lagrangian defined above $\lambda_{PG}$, namely
\[
\lambda_{FP}:=p_2^*\lambda_{PG}.
\]
Thus the variational problem for this version of gravity is the triple
\[
\left(C\left(LM\right)\times_MLM\times_MLM,\lambda_{FP},\cI_{FP}\right).
\]
\end{note}

\subsection{Differential consequences of the vacuum Einstein equations}

As an additional result that could be useful, we will use the structure equations and its differential consequences in order to find a set of algebraic generators for the EDS
\begin{align*}
  \cI_{\text{E}}&:=\left<\eta_{kp}\theta_{li}\wedge\left(\omega^{lp}+\omega^{pl}\right)+\left(-1\right)^{n+1}T^l \wedge\theta_{ikl},\Omega^{pq} \wedge\theta_{ipq}\right>_{\text{diff}}\\
  &=\left<\omega^{lp}+\omega^{pl},T^l,\Omega^{pq} \wedge\theta_{ipq}\right>_{\text{diff}}.
\end{align*}
The differential of the first set of generators $\omega^{lp}+\omega^{pl}$ yields to
\begin{align*}
  \dif\left(\omega^{lp}+\omega^{pl}\right)&=-\eta_{st}\left(\omega^{lt}\wedge\omega^{sp}+\omega^{ps}\wedge\omega^{tl}\right)+\left(\Omega^{lp}+\Omega^{pl}\right)\\
  &=-\eta_{st}\left[\left(\omega^{lt}+\omega^{tl}\right)\wedge\omega^{sp}-\omega^{tl}\wedge\left(\omega^{sp}+\omega^{ps}\right)\right]+\left(\Omega^{lp}+\Omega^{pl}\right),
\end{align*}
so the antisymmetry property for the curvature
\[
A^{lp}:=\Omega^{lp}+\Omega^{pl}=0
\]
is a differential consequence of the original Einstein equations. From the second Bianchi identity
\[
\dif T^k=\Omega^k_l \wedge\theta^l-\omega_l^k \wedge T^l
\]
another generator for the EDS $\cI_E$ is $B^k:=\Omega^k_l \wedge\theta^l$. Finally, from the last set of generators we obtain the differential consequences
\begin{align*}
  \dif\left(\Omega^{pq}\wedge\theta_{ipq}\right)&=\eta_{kl}\Omega^{pk}\wedge\left(\omega^{lq}+\omega^{ql}\right)\wedge\theta_{ipq}+\left(\omega_i^k-\delta_i^k\omega^s_s\right)\wedge\Omega^{pq}\wedge\theta_{kpq}+\Omega^{pq}\wedge T^k \wedge\theta_{ipqk};
\end{align*}
therefore there are no new algebraic generators from here. In conclusion
\[
\cI_E=\left<\omega^{lp}+\omega^{pl},\Omega^{lp}+\Omega^{pl},\dif\theta^l+\omega^l_k\wedge\theta^k,\Omega^k_l \wedge\theta^l,\Omega^{pq} \wedge\theta_{ipq}\right>_{\text{alg}}
\]
is a presentation for $\cI_E$ in terms of algebraic generators, which is an EDS for this version of gravity with vielbeins.

\subsection{A variational problem for Einstein-Palatini gravity with vielbeins}\label{SubSect:AlterVariationalPrinciple}

The expression \eqref{Eq:ConnectionVariations} for the variation of the connection taken as independent of the variation of the frame can be used to set a variational principle for GR in the sense adopted here, which loosely correspond with Einstein-Palatini gravity with vielbeins. In this case we can take as restriction EDS
\[
\cI_{PG}':=\left<\tr\omega\right>_{\text{diff}}
\]
in order to take away the projective transformations that the adopted Lagrangian form $\lambda_{PG}$ has. This restriction is analogous to the restriction considered in Eq. $(3.7)$ as appears in \cite{2012GReGr..44.2337D}; their true nature is very different, as explained above, although it serves in this case to the same purpose: To force Euler-Lagrange equations to be vacuum Einstein equations.
\\
The annihilation of the variations \eqref{Eq:ConnectionVariations} with respect to the connection yields to a set of Euler-Lagrange equations; 
% then we need to apply the following Lemma. 
namely, from
\[
\left[\eta_{kp}\theta_{li}\wedge\left(\omega^{lp}+\omega^{pl}\right)+\left(-1\right)^{n+1}T^l\wedge\theta_{ikl}\right]\wedge\delta\omega^{ik}=0
\]
we obtain the equation of motion
\begin{equation}\label{Eq:EulerLagrange1}
\eta_{kp}\theta_{li}\wedge\left(\omega^{lp}+\omega^{pl}\right)+\left(-1\right)^{n+1}T^l\wedge\theta_{ikl}=0.
\end{equation}
% by using the previous lemma and the fact that
% \begin{align*}
% \text{Tr}\,\left[\eta_{ip}\theta_{lk}\wedge\left(\omega^{lp}+\omega^{pl}\right)+\left(-1\right)^{n}T^l\wedge\theta_{ikl}\right]&=\eta^{ik}\eta_{ip}\theta_{lk}\wedge\left(\omega^{lp}+\omega^{pl}\right)+\left(-1\right)^{n}\eta^{ik}T^l\wedge\theta_{ikl}\cr
% &=\theta_{lp}\wedge\left(\omega^{lp}+\omega^{pl}\right)\cr
% &=0.
% \end{align*}
From the skewsymmetry of $\theta_{ijk}$ it follows that
\[
\eta_{kp}\theta_{li}\wedge\left(\omega^{lp}+\omega^{pl}\right)+\eta_{ip}\theta_{lk}\wedge\left(\omega^{lp}+\omega^{pl}\right)=0.
\]
The way to solve this system is very interesting, and uses Proposition \ref{Prop:UniqueSolution}.
\begin{lemma}
Let $x\mapsto m^{kl}$ be a set of $1$-forms on $M$ such that $\eta_{ij}m^{ij}=0$. If these forms solve the system
\[
\begin{cases}
m^{pk}\wedge\left(\eta_{pq}\theta_{kl}\pm\eta_{pl}\theta_{kq}\right)=0\cr
m^{pk}\mp m^{kp}=0
\end{cases}
\]
then
\[
m^{ij}=0.
\]
\end{lemma}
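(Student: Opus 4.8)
The plan is to reduce both wedge identities to purely algebraic relations among the components of the $1$-forms $m^{ij}$ in the coframe, and then to solve the resulting linear system by the same cyclic-permutation trick that proves uniqueness of the Levi-Civita connection. First I would expand $m^{pk}=m^{pk}_a\theta^a$ against the (pointwise linearly independent) coframe $\left\{\theta^a\right\}$, and record the one geometric identity that is needed, namely $\theta^a\wedge\theta_{kl}=\delta^a_l\theta_k-\delta^a_k\theta_l$, which follows directly from $\theta_{kl}=X_l\lrcorner X_k\lrcorner\sigma_0$ together with $\theta^a\wedge\theta_k=\delta^a_k\sigma_0$.

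Substituting the expansion into the first equation and applying this identity rewrites its left-hand side as a linear combination of the basis $(n-1)$-forms $\theta_m$. Since these are linearly independent, every coefficient must vanish (this is the juncture at which Proposition \ref{Prop:UniqueSolution} is brought to bear). Lowering the remaining contravariant index with $\eta$, I would introduce the three-index array of functions $S_{ijk}:=\eta_{ip}\eta_{jq}m^{pq}_k$ and the trace $p_i:=\eta^{jk}S_{ijk}$. In these terms the second equation of the system becomes the (anti)symmetry $S_{ijk}=\pm S_{jik}$ in the first pair of indices, the hypothesis $\eta_{ij}m^{ij}=0$ becomes the distinct trace condition $\eta^{ij}S_{ijk}=0$, and the first equation becomes, after relabelling, $S_{abc}\pm S_{cba}=p_a\eta_{bc}\pm p_c\eta_{ab}$.

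The algebraic heart of the argument is then to solve this last relation. Using the (anti)symmetry in the first pair it takes, in both branches, the form $S_{abc}+S_{bca}=\text{(right-hand side)}$, so that $S_{abc},S_{bca},S_{cab}$ are the natural unknowns; writing this equation together with its two cyclic permutations and taking the combination $(\text{first})+(\text{third})-(\text{second})$ isolates $S_{abc}$ and gives $S_{abc}=p_c\eta_{ab}$ in the symmetric branch and $S_{abc}=p_a\eta_{bc}-p_b\eta_{ac}$ in the antisymmetric branch. Feeding each expression back into the definition of $p$ (and, in the symmetric case, into the trace condition $\eta^{ij}S_{ijk}=0$) yields $n\,p_c=0$ in the symmetric branch and the self-consistency relation $(n-2)\,p_a=0$ in the antisymmetric branch. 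In both cases $p$ vanishes, hence $S_{ijk}=0$ and therefore $m^{ij}=0$.

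The main obstacle I anticipate is purely bookkeeping: translating the two tensorial hypotheses on $m^{ij}$ correctly into conditions on the lowered coefficients $S_{ijk}$, and keeping the signs straight through the cyclic sum, noting in particular that the trace $\eta^{jk}S_{ijk}$ appearing in the right-hand side and the trace $\eta^{ij}S_{ijk}$ coming from the tracelessness hypothesis are genuinely different contractions. One honest subtlety to flag is the antisymmetric branch, where the $\eta$-trace condition is automatically satisfied and plays no role; there the conclusion $p=0$ rests only on $(n-2)p=0$ and so presupposes $n\neq 2$, a restriction that ought to be stated explicitly.
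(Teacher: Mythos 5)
Your proposal is correct and follows essentially the same route as the paper: expand $m^{pk}=m^{pk}_a\theta^a$ against the coframe, use the identity $\theta^a\wedge\theta_{kl}=\delta^a_l\theta_k-\delta^a_k\theta_l$ to convert both equations into a linear system on a three\nobreakdash-index array, and solve it by the cyclic-permutation combination that is exactly the content of Proposition \ref{Prop:UniqueSolution}; the only organizational difference is that the paper first eliminates the trace $m^{pr}_r$ by contracting with $\theta^l$ and then applies the proposition in its homogeneous form, whereas you carry the trace $p_a$ through the cyclic sum as an inhomogeneous term and eliminate it at the end via the consistency relations (so your parenthetical locating the use of the proposition at the coefficient-vanishing step is slightly misplaced, but harmlessly so). Your remark that the antisymmetric branch requires $n\neq 2$ is accurate and applies equally to the paper's own step $\left(n-1\pm 1\right)m^{pr}_r=0$, which the paper does not flag.
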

\begin{proof}
Let us define the set of local $n-1$-forms
\[
\theta_i:=X_i\lrcorner\sigma_0,
\]
where, as above, $\left\{X_1,\cdots,X_n\right\}$ is the frame dual to $\left\{\theta^1,\cdots,\theta^n\right\}$, and $\sigma_0=\theta^1\wedge\cdots\wedge\theta^n$. Then we have the identity
\[
\theta^m\wedge\theta_{kl}=-\left(\delta_k^m\theta_l-\delta^m_l\theta_k\right).
\]
Thus
\begin{align*}
m^{pk}\wedge\eta_{pq}\theta_{kl}&=\eta_{pq}m^{pk}_r\theta^r\wedge\theta_{kl}\cr
&=-\eta_{pq}m^{pk}_r\left(\delta^r_k\theta_l-\delta^r_l\theta_k\right);
\end{align*}
therefore
\begin{align*}
0&=m^{pk}\wedge\left(\eta_{pq}\theta_{kl}\pm\eta_{pl}\theta_{kq}\right)=\cr
&=-m^{pk}_r\left[\eta_{pq}\left(\delta^r_k\theta_l-\delta^r_l\theta_k\right)\pm\eta_{pl}\left(\delta^r_k\theta_q-\delta^r_q\theta_k\right)\right]\cr
&=\eta_{pq}\left(-m^{pr}_r\theta_l+m^{pk}_l\theta_k\right)\pm\eta_{pl}\left(-m^{pr}_r\theta_q+m^{pk}_q\theta_k\right).%\cr
\end{align*}
By multiplying both sides of this identity by $\theta^l$ and adding up, we see that
\begin{align*}
0&=\theta^l\wedge\left[\eta_{pq}\left(-m^{pr}_r\theta_l+m^{pk}_l\theta_k\right)\pm\eta_{pl}\left(-m^{pr}_r\theta_q+m^{pk}_q\theta_k\right)\right]\cr
&=\eta_{pq}\left(-nm^{pr}_r+m^{pr}_r\right)\sigma_0\pm\eta_{pq}\left(-m^{pr}_r\right)\sigma_0\pm \eta_{pk}m^{pk}_q\sigma_0
\end{align*}
where it was used that
\[
\theta^k\wedge\theta_l=\delta^k_l\sigma_0.
\]
Then $\eta_{pq}\left(n-1\pm 1\right)m^{pr}_r=\eta_{pk}m^{pk}_q=0$, and $m^{pr}_r=0$. Then we will have that
\begin{align*}
\left(\eta_{pq}m^{pk}_l\pm\eta_{pl}m^{pk}_q\right)\theta_k=0
\end{align*}
and from here we can conclude that the system above can be written as
\[
\begin{cases}
\eta_{pq}m^{pk}_l\pm\eta_{pl}m^{pk}_q=0\cr
m^{pk}_i\mp m^{kp}_i=0.
\end{cases}
\]
Therefore the set of unknowns $N_{ijk}:=\eta_{iq}\eta_{jp}m^{pq}_k$ solves the system
\[
\begin{cases}
N_{kql}\pm N_{klq}=0\cr
N_{kpi}\mp N_{pki}=0.
\end{cases}
\]
Using proposition \ref{Prop:UniqueSolution} we see that
\[
N_{ijk}=0
\]
is the unique solution for this system.
\end{proof}

By using the previous Lemma it follows that for a section to be an extremal section (under our choice $\text{Tr}\,\omega=0$) it will be necessary that
\[
\omega^{pl}+\omega^{lp}=0,
\]
and, as a bonus, $T=0$. Thus the generators of the EDS $\cI_{PG}$ are obtained as equations of motion.

% As an additional formula useful in dealing with the variations of the frame, we can calculate the differential of the forms $\theta_{ij}$, expressing them in terms of the connection and the associated torsion. Namely, by using the definition
% \[
% \dif\theta^i=-\Gamma^i_k\wedge\theta^k+T^i
% \]
% we will obtain that
% \[
% \dif\theta_{ij}=\Gamma^k_i\wedge\theta_{kj}-\Gamma^k_j\wedge\theta_{ki}+T^k\wedge\theta_{ijk}
% \]
% where, as above
% \[
% \theta_{ijk}:=X_k\lrcorner X_j\lrcorner X_i\lrcorner\sigma_0.
% \]

\section{Reduction for a variational problem}\label{Sec:ReductionTheory}

An important observation concerning the variational problem
\[
\left(J^1LM\rightarrow M,\lambda_{PG},\cI_{PG}\right)
\]
is that both $\lambda_{PG}$ and $\cI_{PG}$ are $SO\left(1,n-1\right)$-invariant. It will be interesting to find a procedure in order to quotient out the degrees of freedom associated to the orbits of this symmetry group, namely, if we can apply a kind of reduction procedure, as in \cite{lopez03:_reduc}. The problem with this approach is that in this reference the authors deal with reduction of the so called \emph{classical variational problem}, namely, with variational problems of the form
\[
\left(J^1P\rightarrow M,\cL\omega,\cI_{\text{con}}\right),
\]
where $P\rightarrow M$ is a principal bundle, $\cL\in C^\infty\left(J^1P\right)$, $\omega$ is a volume form on $M$ and $\cI_{\text{con}}$ is the contact structure on $J^1P$. Therefore we must devise a reduction scheme general enough to include variational problems whose restriction EDS are different from the contact EDS of a jet space.

\subsection{Reduction of an EDS}

In order to set the reduction procedure for a variational problem, it is crucial to know how to reduce the restriction EDS. So let $M$ be a manifold, $G$ a Lie group acting on $M$ in such a way that the space of orbits $\overline{M}:=M/G$ is a manifold; we will denote by $p_G:M\rightarrow\overline{M}$ the canonical projection. Let us consider $\cI$ an EDS on $M$ such that
\[
g\cdot\cI\subset\cI\qquad\forall g\in G;
\]
the following definition can be found in \cite{MR2151123}.

\begin{definition}[Reduced EDS]
  The \emph{reduced EDS} associated to the action of $G$ on $\left(M,\cI\right)$ is the set of forms
  \[
  \overline\cI:=\left\{\alpha\in\Omega^\bullet\left(\overline{M}\right):p_G^*\alpha\in\cI\right\}.
  \]
\end{definition}

Let $\tau:P\rightarrow M$ be a $G$-principal bundle. This definition can be applied in order to reduce the contact structure on $J^1P$: It will give us an interpretation of the canonical $2$-form $\Omega_2$ on $C\left(P\right)$ as generator of the EDS on the bundle of connection obtained by reduction of the contact structure of $J^1P$, as the following example shows.

\begin{example}[Reduction of the contact structure on $J^1P$]\label{Ex:ReductionContStruct}
  Let us analyze this in more detail; the result we are looking for is local, so there is no real loss in assuming that $P=M\times G$, and this means that $J^1P=P\times_M\left(T^*M\otimes\g\right)$, by using the following correspondence: If $s:M\rightarrow P$ is a section, then
  \[
  j^1_xs=\left(x,s\left(x\right),\left(T_xs\right)\left(\cdot\right)\left(s\left(x\right)\right)^{-1}\right)
  \]
  for all $x\in M$. It means that $C\left(P\right)=T^*M\otimes\g$, and the canonical projection $q:J^1P\rightarrow C\left(P\right)$ is simply
  \[
  q\left(x,g,\xi\right)=\left(x,\xi\right).
  \]
  In these terms the $2$-form $\Omega_2$ reads
  \[
  \left.\Omega_2\right|_{\left(x,\xi\right)}=\dif\xi-\frac{1}{2}\left[\xi\stackrel{\wedge}{,}\xi\right].
  \]
  The contact structure is generated by the $1$-forms
  \[
  \left.\theta\right|_{\left(x,g,\xi\right)}:=\dif g\cdot g^{-1}-\xi;
  \]
  the $G$-action on $P$ is $h\cdot\left(x,g\right)=\left(x,gh\right)$, that lifts to $h\cdot\left(x,g,\xi\right)=\left(x,gh,\xi\right)$. Therefore a set of algebraic generators for $\cI$ is in this case
  \[
  \mathcal{G}:=\left\{\theta,\dif\theta\right\}=\left\{\dif g\cdot g^{-1}-\xi,\left(1/2\right)\left[\xi\stackrel{\wedge}{,}\xi\right]-\dif\xi\right\};
  \]
  bearing in mind future applications, the $2$-degree generator $\dif\theta$ has been written in a convenient form. Thus we have that $J^1P/G=T^*M\otimes\g$ with projection given by
  \[
  p_G\left(x,g,\xi\right)=\left(x,\xi\right).
  \]
  The quotient EDS $\bar\cI$ is graded, as $\cI$ does and $p_G^*$ is a $0$-degree morphism; if $\alpha$ is a $1$-form in $\bar\cI$, we will have that
  \[
  p^*_G\alpha=f\cdot\theta
  \]
  for some $f\in C^\infty\left(J^1P\right)$, and then if $\left(0,\zeta,0\right)\in T_{\left(x,g,\xi\right)}J^1P$, it results that
  \[
  0=\left(p_G^*\alpha\right)\left(0,\zeta,0\right)=f\left(x,g,\xi\right)\zeta.
  \]
  So $p^*_G\alpha=0$ and it means that $\alpha=0$, from the fact that $p_G$ is a surjective map. If $\beta\in\bar\cI\cap\Omega^p\left(J^1P/G\right),p>1$, we will have that
  \[
  p^*_G\beta=\mu\wedge\theta+\nu\wedge\left(\left(1/2\right)\left[\xi\stackrel{\wedge}{,}\xi\right]-\dif\xi\right)
  \]
  for some $\mu\in\Omega^{p-1}\left(J^1P/G\right)$ and $\nu\in\Omega^{p-2}\left(J^1P/G\right)$; by performing the replacement
  \[
  \dif g\cdot g^{-1}=\xi+\theta
  \]
  we can assume that neither $\mu$ nor $\nu$ have dependence in the $g$-variable. Therefore, by contracting this identity with an infinitesimal generator for the $G$-action, we obtain that
  \begin{align*}
    0&=\left(0,\zeta,0\right)\lrcorner\left(p_G^*\beta\right)\cr
    &=\left(-1\right)^{p+1}\zeta\mu
  \end{align*}
  and so $\mu=0$; from here we can conclude that
  \[
  \bar\cI=\left<\frac{1}{2}\left[\xi\stackrel{\wedge}{,}\xi\right]-\dif\xi\right>_{\text{alg}},
  \]
  meaning that the reduced EDS is generated by $\Omega_2$.
\end{example}

The previous example gives some insight in the subtleties concerning the reduction of an EDS: The original contact structure is locally generated by $1$-forms, but the reduced EDS is generated by a collection of $2$-forms. Nevertheless, there exists a result allowing us to find a set of generators for a reduced EDS, under mild conditions, namely, by requiring the generators to be pullback of some forms along a projection. It is convenient to note that it was not fulfilled in the previous example, because $\dif g\cdot g^{-1}-\xi$ is not the pullback along $p_G$ of any form on $M\times\g$.  

\begin{proposition}\label{Prop:ReducingFiberedEDS}
  Let $p:M\rightarrow N$ be a fibration and $\cI:=\left<\alpha_1,\cdots,\alpha_p\right>_{\text{diff}}$ a differential ideal such that $\alpha_i\in\Omega^{k_i}\left(M\right)$ for some integers $k_i$. Let us suppose that on $N$ there exists a set of forms $\left\{\beta_1,\cdots,\beta_p\right\}$ such that
  \[
  \alpha_i=p^*\beta_i\qquad\text{for }i=1,\cdots,p.
  \]
  Then $\overline\cI=\left<\beta_1,\cdots,\beta_p\right>_{\text{diff}}$.
\end{proposition}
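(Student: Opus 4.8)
The plan is to prove the two inclusions $\langle\beta_1,\cdots,\beta_p\rangle_{\text{diff}}\subseteq\overline{\cI}$ and $\overline{\cI}\subseteq\langle\beta_1,\cdots,\beta_p\rangle_{\text{diff}}$ separately, using that $p^*$ is an injective algebra homomorphism commuting with $\dif$. The forward inclusion is the routine direction; the reverse inclusion, which requires descending a relation from $M$ back down to $N$, is where the fibration hypothesis does the real work.

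For the inclusion $\langle\beta_1,\cdots,\beta_p\rangle_{\text{diff}}\subseteq\overline{\cI}$, I would argue that each $\beta_i$ belongs to $\overline{\cI}$ by definition, since $p^*\beta_i=\alpha_i\in\cI$. To see that the whole differential ideal they generate lands in $\overline{\cI}$, I would check that $\overline{\cI}$ is itself a differential ideal on $N$: if $p^*\alpha\in\cI$ then $p^*(\dif\alpha)=\dif(p^*\alpha)\in\cI$ because $\cI$ is differentially closed, and if $p^*\alpha\in\cI$ then for any $\gamma\in\Omega^\bullet(N)$ we have $p^*(\gamma\wedge\alpha)=p^*\gamma\wedge p^*\alpha\in\cI$ since $\cI$ is an ideal. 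Hence $\overline{\cI}$ contains each $\beta_i$ and is closed under $\dif$ and under wedging with arbitrary forms on $N$, so it contains the differential ideal generated by the $\beta_i$.

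The substantive direction is $\overline{\cI}\subseteq\langle\beta_1,\cdots,\beta_p\rangle_{\text{diff}}$. Take $\beta\in\overline{\cI}$, so that $p^*\beta\in\cI=\langle\alpha_1,\cdots,\alpha_p\rangle_{\text{diff}}$. Since $\cI$ is generated as a differential ideal by the $\alpha_i$, I can write
\[
p^*\beta=\sum_{i}\mu_i\wedge\alpha_i+\sum_i\nu_i\wedge\dif\alpha_i
\]
for suitable forms $\mu_i,\nu_i$ on $M$; after substituting $\alpha_i=p^*\beta_i$ and $\dif\alpha_i=p^*(\dif\beta_i)$ this reads $p^*\beta=\sum_i\mu_i\wedge p^*\beta_i+\sum_i\nu_i\wedge p^*(\dif\beta_i)$. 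The key step is to replace the coefficients $\mu_i,\nu_i\in\Omega^\bullet(M)$ by coefficients that are pulled back from $N$. Here I would invoke the fibration hypothesis: working in an adapted local trivialization of $p:M\rightarrow N$ and using the contraction trick exactly as in Example \ref{Ex:ReductionContStruct} (contracting the identity against vectors tangent to the fibres to kill the vertical dependence of the coefficients), I can arrange that each $\mu_i,\nu_i$ is the pullback $p^*\tilde\mu_i,p^*\tilde\nu_i$ of a form on $N$. Then injectivity of $p^*$ applied to
\[
p^*\beta=p^*\Big(\sum_i\tilde\mu_i\wedge\beta_i+\sum_i\tilde\nu_i\wedge\dif\beta_i\Big)
\]
yields $\beta=\sum_i\tilde\mu_i\wedge\beta_i+\sum_i\tilde\nu_i\wedge\dif\beta_i\in\langle\beta_1,\cdots,\beta_p\rangle_{\text{diff}}$, as desired.

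The main obstacle is exactly the descent of the coefficients in the reverse inclusion. One must show that the representation of $p^*\beta$ as an element of $\cI$ can be chosen with coefficients that carry no vertical dependence along the fibres of $p$. The lesson of Example \ref{Ex:ReductionContStruct} is cautionary: there the reduced ideal was \emph{not} generated by the naive pullback candidates, precisely because the generators of the contact structure failed to be pullbacks from the quotient. The hypothesis $\alpha_i=p^*\beta_i$ is thus essential and must be used decisively; without it the coefficient-descent argument collapses. I would therefore be careful to justify the vertical-triviality reduction cleanly, handling the fibre contractions and the bookkeeping over the degrees $k_i$ rather than waving it through, since this is the only place where anything can go wrong.
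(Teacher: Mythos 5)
Your proposal follows the same route as the paper's proof: the forward inclusion from the definition of the reduced EDS, and, for the reverse inclusion, a representation $p^*\beta=\sum_i\gamma_i\wedge p^*\beta_i+\sum_i\nu_i\wedge p^*\left(\dif\beta_i\right)$ whose coefficients must then be descended to $N$. In fact the paper's own proof stops at exactly the point you single out as the main obstacle --- it writes the representation and then declares ``it is enough to prove that it implies $\gamma_i=p^*\sigma_i$'' without proving it --- so your write-up is, if anything, more candid about where the real work lies. One caution about the fix you sketch: contracting the identity with a vertical vector field $V$ only yields $\sum_i\left(V\lrcorner\gamma_i\right)\wedge p^*\beta_i=0$ (plus the analogous terms from the $\nu_i$), not $V\lrcorner\gamma_i=0$, because the $p^*\beta_i$ may satisfy relations among themselves; so the contraction trick of Example \ref{Ex:ReductionContStruct} does not by itself make the coefficients basic. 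A cleaner way to actually close the gap is to avoid descending the given coefficients altogether: pull the identity back along a local section $s:U\subset N\rightarrow M$ of the fibration, which gives $\left.\beta\right|_U=\sum_i\left(s^*\gamma_i\right)\wedge\left.\beta_i\right|_U+\sum_i\left(s^*\nu_i\right)\wedge\left.\dif\beta_i\right|_U$ since $s^*p^*=\text{id}$, exhibiting $\beta$ locally as an element of $\left<\beta_1,\cdots,\beta_p\right>_{\text{diff}}$; a partition of unity on $N$ subordinate to a cover by such $U$ then globalizes the representation.
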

\begin{proof}
  The inclusion $\left<\beta_1,\cdots,\beta_p\right>_{\text{diff}}\subset\overline\cI$ follows from the definition of reduced EDS. On the other side, if $\omega\in\overline\cI$, there exists $\gamma_1,\cdots,\gamma_p$ such that
  \[
  p^*\omega=\gamma_1\wedge p^*\beta_1+\cdots+\gamma_p\wedge p^*\beta_p.
  \]
  Thus it is enough to prove that it implies $\gamma_i=p^*\sigma_i$ for all $i=1,\cdots,p$.
\end{proof}

% \todo{A good, short explanation for the last sentence of the previous proof is needed!}
We are now ready to introduce a generalization of the usual scheme of reduction suitable for our version of (generalized) variational problems; as far as we know, it is an original contribution made in the present work.

\begin{definition}[Reduction of a variational problem]
  Let $\left(\Lambda,\lambda,\cI\right)$ be a variational problem on the bundle $p:\Lambda\rightarrow M$. Let us suppose that a Lie group $G$ acts on $\Lambda$ such that
  \begin{enumerate}
  \item the action is free and proper,
  \item its orbits are vertical, i.e. $p\left(g\cdot u\right)=p\left(u\right)$ for all $u\in\Lambda$ and $g\in G$,
  \item there exists $\bar\lambda\in\Omega^n\left(\bar{M}\right)$ such that $p_G^*\bar\lambda=\lambda$, where $p_G:\Lambda\rightarrow\overline\Lambda:=\Lambda/G$ is the canonical projection, and
  \item it is a symmetry group for the EDS $\cI$.
  \end{enumerate}
  The \emph{reduced variational problem} for $\left(\Lambda,\lambda,\cI\right)$ is the variational problem $\left(\overline\Lambda,\overline\lambda,\overline\cI\right)$, where $\overline\cI$ is the reduced EDS for $\cI$.
\end{definition}

\begin{example}[Euler-Poincaré reduction]
  The Euler-Poincaré reduction \cite{lopez00:_reduc_princ_fiber_bundl,castrillon12:_const_euler_poinc_reduc_field_theor} can be seen as an instance of this reduction scheme. In the setting of Example \ref{Ex:ReductionContStruct}, we see that the reduced variational problem associated to $\left(J^1P,L\omega,\cI_{\text{con}}\right)$, where $L\in C^\infty\left(J^1P\right)^G$ and $\omega$ is an invariant volume, is nothing but $\left(C\left(P\right),\overline{L}\omega,\left<\Omega_2\right>_{\text{diff}}\right)$. The relationship with Euler-Poincaré reduction can be revealed by means of the following consideration: The restrictions on the possible variations of the fields of the reduced field theory are exactly those defining the allowed variations (see Definition \ref{Def:AdmVars}), namely, the infinitesimal variations for the restriction EDS. If $\sigma:U\subset M\rightarrow C\left(P\right):x\mapsto\left(x,\xi\left(x\right)\right)$ is a local section of the bundle of connections, integral for $\left<\Omega_2\right>_{\text{diff}}$, then a vertical vector field $\left(0,\Xi\right)$ will be an infinitesimal symmetry of $\sigma$ iff
  \[
  \sigma^*\left(\dif\Xi-\left[\Xi,\xi\right]\right)=0.
  \]
  Let $\ad{P}:=\left(P\times\g\right)/G$ be the adjoint bundle associated to $P$. By using the fact that $C\left(P\right)$ is an affine bundle modelled on $T^*M\otimes\ad{P}$, we can identify any variation $\Xi$ of its sections as an $\ad{P}$-valued $1$-form on $M$. In these terms the requeriment of admissibility for variations reads $\dif_\sigma\Xi=0$, where $\dif_\sigma$ is the covariant exterior differential associated to the connection $\sigma$; by using that $\sigma$ is flat, then (at least locally) there exists a section $\eta:U\subset M\rightarrow\ad{P}$ for the adjoint bundle such that
  \[
  \Xi=\dif_\sigma\eta.
  \]
  For $H$ an arbitrary connection, we see from here that
  \[
  \Xi=\dif_{H+\sigma-H}\eta=\dif_H\eta+\left[\sigma-H,\eta\right],
  \]
  the usual requeriment for variations in Euler-Poincaré reduction (compare with Prop.\ 3.1 in \cite{lopez00:_reduc_princ_fiber_bundl}).
\end{example}

\subsection{Reduced gravity with vielbeins}

We are ready to perform the reduction of the variational problem $\left(J^1LM,\lambda_{PG},\cI_{PG}\right)$ by the Lorentz subgroup $H:=SO\left(1,n-1\right)$. The $H$-action is readily seen to be free and proper, and its orbits are vertical, so it remains to verify that $\lambda_{PG}$ is $H$-horizontal and $\cI_{PG}$ is $H$-invariant.

In order to properly show this invariance, we need to introduce a nice description for the quotient bundle
\[
\begin{array}{rcl}
  \tau:J^1LM/H&\rightarrow&M\\
  \left[j^1_xs\right]_H&\mapsto&x.
\end{array}
\]
Now let us remember that the bundle $J^1LM\rightarrow J^1LM/GL\left(n\right)$ is isomorphic to the pullback bundle $p^*J^1LM$, where $p:J^1LM/GL\left(n\right)\rightarrow M$ is the projection induced by $\tau_1:J^1LM\rightarrow M$; from this perspective every element $j^1_xs$ of $J^1LM$ can be written as a pair
\[
j^1_xs=\left(\left[j^1_xs\right]_{GL\left(n\right)},s\left(x\right)\right),
\]
and the $GL\left(n\right)$-action is simply
\[
\left(\left[j^1_xs\right]_{GL\left(n\right)},s\left(x\right)\right)\cdot g=\left(\left[j^1_xs\right]_{GL\left(n\right)},s\left(x\right)\cdot g\right).
\]
Then we have the following representation for the quotient $J^1LM/H$.
\begin{lemma}
  Let $\left[\tau\right]:\Sigma\rightarrow M$ be the bundle of metrics on $M$. Then $J^1LM/H$ is isomorphic to the pullback bundle $p^*\Sigma=C\left(LM\right)\times_M\Sigma$.
\end{lemma}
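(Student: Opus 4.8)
The plan is to exploit the product description of $J^1LM$ recalled just above, namely the identification $J^1LM\simeq p^*LM$ of principal $GL\left(n\right)$-bundles over $C\left(LM\right)=J^1LM/GL\left(n\right)$, under which an element is a pair $\left(\rho,u\right)$ with $\rho\in C\left(LM\right)$, $u\in LM$ and $p\left(\rho\right)=\tau\left(u\right)$, and the canonical right action of $g\in GL\left(n\right)$ reads $\left(\rho,u\right)\cdot g=\left(\rho,u\cdot g\right)$. The crucial observation is that this action, and in particular its restriction to the Lorentz subgroup $H$, touches only the frame factor $u$ and leaves the connection factor $\rho$ fixed. Hence I expect the $H$-quotient to act only on the second factor, replacing $LM$ by $LM/H=\Sigma$ while the first factor $C\left(LM\right)$ and the base-point constraint survive unchanged; this is exactly the claimed $C\left(LM\right)\times_M\Sigma=p^*\Sigma$.

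Concretely, first I would define
\[
\Phi:J^1LM/H\rightarrow C\left(LM\right)\times_M\Sigma,\qquad \left[\left(\rho,u\right)\right]_H\mapsto\left(\rho,\left[u\right]_H\right),
\]
and check that it is well defined: since $\left(\rho,u\right)\cdot h=\left(\rho,u\cdot h\right)$ for $h\in H$, the connection entry $\rho$ is constant along $H$-orbits while the frame entry descends to its class $\left[u\right]_H\in LM/H=\Sigma$. Because $\tau\left(u\cdot h\right)=\tau\left(u\right)$, the compatibility $p\left(\rho\right)=\tau\left(u\right)=\left[\tau\right]\left(\left[u\right]_H\right)$ is preserved, so the image indeed lands in the fibre product $C\left(LM\right)\times_M\Sigma$.

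Next I would verify that $\Phi$ is a bijection. Injectivity is immediate: if $\left(\rho,\left[u\right]_H\right)=\left(\rho',\left[u'\right]_H\right)$ then $\rho=\rho'$ and $u'=u\cdot h$ for some $h\in H$, so $\left(\rho',u'\right)=\left(\rho,u\right)\cdot h$ lies in the same $H$-orbit. For surjectivity, given $\left(\rho,y\right)\in C\left(LM\right)\times_M\Sigma$ with $p\left(\rho\right)=\left[\tau\right]\left(y\right)$, any frame $u\in LM$ with $\left[u\right]_H=y$ satisfies $\tau\left(u\right)=\left[\tau\right]\left(y\right)=p\left(\rho\right)$, whence $\left(\rho,u\right)\in J^1LM$ maps to $\left(\rho,y\right)$. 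To promote this bijection to an isomorphism of bundles over $M$, I would use that $LM\rightarrow LM/H=\Sigma$ is a principal $H$-bundle, since $H$ is a closed subgroup of $GL\left(n\right)$: local sections of this bundle yield local trivialisations in which $\Phi$ and its inverse are manifestly smooth, and $\Phi$ is fibre-preserving over $M$ by the base-point compatibility just checked.

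The step I expect to demand the most care is the passage from the set-theoretic quotient to a genuine isomorphism of smooth bundles, that is, making precise that taking the $H$-quotient commutes with the fibre product, $\left(C\left(LM\right)\times_M LM\right)/H\simeq C\left(LM\right)\times_M\left(LM/H\right)$. This rests on two facts I would state explicitly: that $H$ acts only on the $LM$ factor with the projection $\tau:LM\rightarrow M$ being $H$-invariant, and that the $H$-action is free and proper, so that the relevant quotients are smooth manifolds and the projection $J^1LM\rightarrow J^1LM/H$ is a submersion through which both $\Phi$ and its inverse descend smoothly.
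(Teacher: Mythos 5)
Your proof is correct and follows exactly the route the paper intends: the paper states this lemma without proof, relying on the identification $J^1LM\simeq p^*LM=C\left(LM\right)\times_M LM$ recalled immediately beforehand, with $GL\left(n\right)$ (hence $H$) acting only on the frame factor, so that the $H$-quotient replaces $LM$ by $LM/H=\Sigma$. Your write-up simply makes explicit the well-definedness, bijectivity and smoothness checks that the paper leaves to the reader.
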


The local version for these results is very illuminating of the geometrical meaning of the sections of these bundles.

\begin{proposition}
  Let $\left(x^\mu,e_k^\nu,e^\sigma_{j\mu}\right)$ be the set of jet coordinates introduced above on $J^1LM$. Then there exists a set of coordinates $\left(x^\mu,g^{\mu\nu},\Gamma^\sigma_{\rho\gamma}\right)$ on $p^*\Sigma$ such that
  \[
  \tilde{g}^{\mu\nu}:=g^{\mu\nu}\circ p_H=\eta^{kl}e_k^\mu e_l^\nu,\qquad\tilde{\Gamma}^\sigma_{\rho\gamma}:=\Gamma^\sigma_{\rho\gamma}\circ p_H=-e^\sigma_{k\gamma}e^k_\rho.
  \]
  In terms of these functions
  \begin{equation}
    {\lambda_{PG}}=\epsilon_{\mu_1\cdots\mu_{n-2}\gamma\kappa}\sqrt{-\det{\tilde{g}}}\tilde{g}^{\kappa\phi}\dif x^{\mu_1}\wedge\cdots\wedge\dif x^{\mu_{n-2}}\wedge\left(\dif\tilde\Gamma^\gamma_{\rho\phi}\wedge\dif x^\rho+\tilde\Gamma^\sigma_{\delta\phi}\tilde\Gamma^\gamma_{\beta\sigma}\dif x^\beta\wedge\dif x^\delta\right)
  \end{equation}
  and
  \begin{align}
    \eta^{ik}\omega_k^j+\eta^{jk}\omega_k^i&=e^i_\mu e^j_\nu\left(\dif \tilde{g}^{\mu\nu}+\left(\tilde{g}^{\mu\sigma}\tilde{\Gamma}^\nu_{\gamma\sigma}+\tilde{g}^{\nu\sigma}\tilde{\Gamma}^\mu_{\gamma\sigma}\right)\dif x^\gamma\right)\\
    T^i&=e^i_\sigma\tilde{\Gamma}^\sigma_{\mu\nu}\dif x^\mu\wedge\dif x^\nu\\
    \Tr\omega&=g_{\mu\nu}\dif g^{\mu\nu}+\Gamma^\sigma_{\sigma\rho}\dif x^\rho.\label{Eq:TraceEqReduced}
  \end{align}
  In particular, the Lagrangian form $\lambda_{PG}$ is horizontal for the $H$-projection, and the EDS $\cI_{PG}$ is $H$-invariant.
\end{proposition}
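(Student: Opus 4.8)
The plan is to construct the quotient coordinates out of $H$-invariant combinations of the jet coordinates and then to read every assertion off explicit local formulas. First I would record the $H$-action: restricting the $GL(n)$-action $g\cdot(x^\mu,e^\nu_k,e^\sigma_{k\rho})=(x^\mu,g^l_k e^\nu_l,g^l_k e^\sigma_{l\rho})$ to $h\in SO(1,n-1)$, and noting that the dual coframe functions transform by the inverse, $e^k_\mu\mapsto (h^{-1})^k_l e^l_\mu$ (since $e^k_\mu e^\mu_l=\delta^k_l$). With this I would check invariance of the two proposed expressions: for $\tilde g^{\mu\nu}=\eta^{kl}e^\mu_k e^\nu_l$ invariance is exactly the Lorentz condition $\eta^{kl}h^p_k h^q_l=\eta^{pq}$, while for $\tilde\Gamma^\sigma_{\rho\gamma}=-e^\sigma_{k\gamma}e^k_\rho$ the two transformation laws cancel via $h^l_k(h^{-1})^k_q=\delta^l_q$ (so $\tilde\Gamma$ is in fact invariant under all of $GL(n)$). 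Hence both descend to well-defined functions $g^{\mu\nu},\Gamma^\sigma_{\rho\gamma}$ on $J^1LM/H\cong p^*\Sigma$.

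Next I would confirm that $(x^\mu,g^{\mu\nu},\Gamma^\sigma_{\rho\gamma})$ is genuinely a coordinate system. A dimension count suffices: $g^{\mu\nu}$ is symmetric (from $\eta^{kl}=\eta^{lk}$) and nondegenerate of the right signature (as $e$ is a frame and $\eta$ is Lorentzian), contributing the $n(n+1)/2$ fibre coordinates of $\Sigma$, while $\Gamma^\sigma_{\rho\gamma}$ contributes the $n^3$ fibre coordinates of $C(LM)$. Together with the $x^\mu$ this matches $\dim p^*\Sigma=n^3+n(n+1)/2$, which also equals $\dim J^1LM - \dim H$; functional independence then identifies $g^{\mu\nu}$ with the inverse metric and $\Gamma^\sigma_{\rho\gamma}$ with the connection coefficients, as the notation suggests.

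The heart of the computation is to pull the canonical forms back to $J^1LM$ and rewrite them through these functions. Using $\theta^i=e^i_\mu\,\dif x^\mu$, the relation $\tilde\Gamma^\sigma_{\rho\gamma}=-e^\sigma_{k\gamma}e^k_\rho$ between the connection coefficients and the jet coordinates, and the structure equation $\dif\theta^i+\omega^i_k\wedge\theta^k=T^i$, I would first obtain the torsion formula $T^i=e^i_\sigma\tilde\Gamma^\sigma_{\mu\nu}\,\dif x^\mu\wedge\dif x^\nu$. Differentiating $\tilde g^{\mu\nu}$ and substituting the same relations for $\omega$ then yields the metricity factorization $\eta^{ik}\omega^j_k+\eta^{jk}\omega^i_k=e^i_\mu e^j_\nu\bigl(\dif\tilde g^{\mu\nu}+(\tilde g^{\mu\sigma}\tilde\Gamma^\nu_{\gamma\sigma}+\tilde g^{\nu\sigma}\tilde\Gamma^\mu_{\gamma\sigma})\,\dif x^\gamma\bigr)$, and \eqref{Eq:TraceEqReduced} follows the same way from $\tr\omega=\omega^s_s$. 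For the Lagrangian I would start from the local expression \eqref{Eq:PalatiniLagrangianLocal}, $\lambda_{PG}=\eta^{kp}\theta_{kl}\wedge\Omega^l_p$, substitute $\theta_{kl}$ (written through the $\epsilon$-symbol and the coframe) together with $\Omega^l_p=\dif\omega^l_p+\omega^l_q\wedge\omega^q_p$ expressed via $\tilde\Gamma$, and contract the $\epsilon$-tensors against the frame to produce $\sqrt{-\det\tilde g}$. This last step—the component form of the Palatini to Einstein--Hilbert identity—is the main obstacle, since it is where the raised-index curvature, the Hodge dual hidden in $\theta_{kl}$, and the frame determinant must be combined into the stated metric expression; it is lengthy but entirely algebraic index manipulation.

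Finally the conclusion is immediate once the formulas are in hand. Every term of the resulting expression for $\lambda_{PG}$ involves only the $H$-invariant functions $\tilde g^{\mu\nu},\tilde\Gamma^\sigma_{\rho\gamma}$ and the horizontal forms $\dif x^\mu$, so $\lambda_{PG}$ is basic for the (free, proper) $H$-fibration, hence horizontal and invariant and therefore equal to $p_H^*$ of the displayed form on $p^*\Sigma$. For the ideal, the factorizations show that the invertible frame matrix $e^i_\mu e^j_\nu$ (respectively $e^i_\sigma$) relates the generators $\eta^{ik}\omega^j_k+\eta^{jk}\omega^i_k$ and $T^i$ of $\cI_{PG}$ to the basic forms $\dif\tilde g^{\mu\nu}+(\cdots)\,\dif x^\gamma$ and $\tilde\Gamma^\sigma_{\mu\nu}\,\dif x^\mu\wedge\dif x^\nu$, which descend to $p^*\Sigma$; since multiplying generators by an invertible matrix leaves the generated ideal unchanged and $h^*$ commutes with $\dif$, the differential ideal $\cI_{PG}$ is carried into itself by $H$, the required invariance.
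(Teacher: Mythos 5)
Your proposal is correct; since the paper states this proposition with no proof at all, the only thing to measure it against is the collection of local formulas in Appendix \ref{subsubsect:LocalExpressions}, and those are exactly the ingredients you invoke. Two remarks may save you work. First, the step you single out as the main obstacle --- rewriting $\lambda_{PG}=\eta^{kp}\theta_{kl}\wedge\Omega^l_p$ through $\tilde g$ and $\tilde\Gamma$ --- is largely done for you in the appendix: Eq.~\eqref{Eq:CurvaturaIntermedia} already gives $e^l_\mu\Omega^k_l=e^k_\gamma\left(\dif\tilde\Gamma^\gamma_{\mu\rho}\wedge\dif x^\rho+\tilde\Gamma^\gamma_{\sigma\beta}\tilde\Gamma^\sigma_{\mu\delta}\dif x^\beta\wedge\dif x^\delta\right)$, so what remains is only to contract $\theta_{kl}=\frac{1}{(n-2)!}\epsilon_{kli_3\cdots i_n}\theta^{i_3}\wedge\cdots\wedge\theta^{i_n}$ with $\theta^i=e^i_\mu\dif x^\mu$ and absorb the resulting frame determinant, together with the $\eta$'s, into $\sqrt{-\det\tilde g}\,\tilde g^{\kappa\phi}$. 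Second, a caveat on the metricity identity: carrying the computation through with the stated convention $\tilde\Gamma^\sigma_{\rho\gamma}=-e^\sigma_{k\gamma}e^k_\rho$ produces the term $\left(\tilde g^{\sigma\mu}\tilde\Gamma^\nu_{\sigma\gamma}+\tilde g^{\sigma\nu}\tilde\Gamma^\mu_{\sigma\gamma}\right)\dif x^\gamma$, i.e.\ with the lower indices of $\tilde\Gamma$ in the opposite order to the printed formula; the two versions differ by a multiple of the torsion generator, so the generated differential ideal --- which is all that matters for the invariance claim and the ensuing corollary --- is unchanged, but do not be alarmed when your indices do not land exactly where the proposition says. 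Your closing argument, that multiplying the generators by the invertible matrices $e^i_\mu e^j_\nu$ and $e^i_\sigma$ leaves the generated differential ideal unchanged while the remaining factors are basic for $p_H$, is the clean way to obtain both the horizontality of $\lambda_{PG}$ and the $H$-invariance of $\cI_{PG}$, and it sets up the application of Proposition \ref{Prop:ReducingFiberedEDS} correctly.
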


By combining these equations and Proposition \ref{Prop:ReducingFiberedEDS} we deduce the following corollary.

\begin{corollary}
  The reduced Lagrangian $\overline{\lambda}_{PG}$ is given by
  \[
    \overline{\lambda_{PG}}=\epsilon_{\mu_1\cdots\mu_{n-2}\gamma\kappa}\sqrt{-\det{g}}g^{\kappa\phi}\dif x^{\mu_1}\wedge\cdots\wedge\dif x^{\mu_{n-2}}\wedge\left(\dif\Gamma^\gamma_{\rho\phi}\wedge\dif x^\rho+\Gamma^\sigma_{\delta\phi}\Gamma^\gamma_{\beta\sigma}\dif x^\beta\wedge\dif x^\delta\right),
  \]
  and the reduced EDS $\overline\cI_{PG}$ can be generated as
  \[
  \overline\cI_{PG}=\left<\dif g^{\mu\nu}+\left(g^{\mu\sigma}\Gamma^\nu_{\gamma\sigma}+g^{\nu\sigma}\Gamma^\mu_{\gamma\sigma}\right)\dif x^\gamma,\Gamma^\sigma_{\mu\nu}-\Gamma^\sigma_{\nu\mu}\right>_{\text{diff}}.
  \]
\end{corollary}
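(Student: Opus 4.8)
The plan is to read off both assertions directly from the preceding Proposition together with Proposition \ref{Prop:ReducingFiberedEDS}; no new geometric input is required, only careful bookkeeping of which forms on $J^1LM$ descend to $p^*\Sigma$ and the observation that the vielbein matrix is invertible. The preceding Proposition already does the heavy computational lifting by rewriting $\lambda_{PG}$ and the generators of $\cI_{PG}$ in terms of the quotient coordinates $\left(x^\mu,g^{\mu\nu},\Gamma^\sigma_{\rho\gamma}\right)$, so the corollary is essentially a formal consequence.

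First, for the Lagrangian. The preceding Proposition establishes that $\lambda_{PG}$ is $H$-horizontal, so by condition (3) in the definition of reduction $\overline{\lambda}_{PG}$ is the unique $n$-form with $p_H^*\overline{\lambda}_{PG}=\lambda_{PG}$. That same Proposition writes $\lambda_{PG}$ entirely in terms of the functions $\tilde g^{\mu\nu},\tilde\Gamma^\sigma_{\rho\gamma}$ and the basic forms $\dif x^\mu$. Since $\tilde g^{\mu\nu}=p_H^*g^{\mu\nu}$, $\tilde\Gamma^\sigma_{\rho\gamma}=p_H^*\Gamma^\sigma_{\rho\gamma}$ and $\dif x^\mu=p_H^*\dif x^\mu$, and since $p_H^*$ is an algebra homomorphism commuting with $\dif$ (so $\dif\tilde\Gamma=p_H^*\dif\Gamma$ and $\sqrt{-\det\tilde g}\,\tilde g^{\kappa\phi}=p_H^*\big(\sqrt{-\det g}\,g^{\kappa\phi}\big)$), the displayed expression for $\lambda_{PG}$ is visibly $p_H^*$ of the claimed formula for $\overline{\lambda}_{PG}$. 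Surjectivity of $p_H$ then forces the first equation.

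Second, for the EDS. The strategy is to re-present $\cI_{PG}=\left<\pi_{\mathfrak{p}}\omega,T\right>_{\text{diff}}$ by an equivalent set of generators that are manifestly pullbacks along $p_H$, and then invoke Proposition \ref{Prop:ReducingFiberedEDS}. Writing $\beta^{\mu\nu}:=\dif g^{\mu\nu}+\left(g^{\mu\sigma}\Gamma^\nu_{\gamma\sigma}+g^{\nu\sigma}\Gamma^\mu_{\gamma\sigma}\right)\dif x^\gamma$, the Proposition gives $\eta^{ik}\omega_k^j+\eta^{jk}\omega_k^i=e^i_\mu e^j_\nu\,p_H^*\beta^{\mu\nu}$ and $T^i=e^i_\sigma\,p_H^*\!\left(\Gamma^\sigma_{\mu\nu}\dif x^\mu\wedge\dif x^\nu\right)$. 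Because $\left(e^i_\mu\right)$ is an invertible matrix of functions (it is the vielbein), this linear change of generators is invertible; hence the differential ideal generated by the metricity forms coincides with the one generated by the $p_H^*\beta^{\mu\nu}$, and likewise for the torsion forms. Thus $\cI_{PG}=\left<p_H^*\beta^{\mu\nu},p_H^*\!\left(\Gamma^\sigma_{\mu\nu}\dif x^\mu\wedge\dif x^\nu\right)\right>_{\text{diff}}$, and Proposition \ref{Prop:ReducingFiberedEDS} yields $\overline\cI_{PG}=\left<\beta^{\mu\nu},\Gamma^\sigma_{\mu\nu}\dif x^\mu\wedge\dif x^\nu\right>_{\text{diff}}$.

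The only delicate point, and the one I would take most care to justify, is reconciling the torsion generator: the honest reduction produces the $2$-form $\Gamma^\sigma_{\mu\nu}\dif x^\mu\wedge\dif x^\nu=\frac{1}{2}\left(\Gamma^\sigma_{\mu\nu}-\Gamma^\sigma_{\nu\mu}\right)\dif x^\mu\wedge\dif x^\nu$, whereas the statement lists the $0$-form $\Gamma^\sigma_{\mu\nu}-\Gamma^\sigma_{\nu\mu}$. These do not generate the same differential ideal (a differentially closed ideal with generators of positive degree contains no nonzero functions), but they single out exactly the same integral sections: for a section $s$ one has $s^*\!\left(\Gamma^\sigma_{\mu\nu}\dif x^\mu\wedge\dif x^\nu\right)=0$ if and only if the antisymmetric part $\left(\Gamma^\sigma_{\mu\nu}-\Gamma^\sigma_{\nu\mu}\right)\circ s$ vanishes, which is precisely $s^*\!\left(\Gamma^\sigma_{\mu\nu}-\Gamma^\sigma_{\nu\mu}\right)=0$. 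Since the reduced variational problem is sensitive only to its integral sections and their admissible variations, one may replace the torsion $2$-form by this $0$-form; writing it as $\Gamma^\sigma_{\mu\nu}-\Gamma^\sigma_{\nu\mu}$ also makes transparent that, together with the metricity $1$-form, the integral sections are exactly the Levi-Civita connections, in agreement with the discussion following \eqref{Eq:GlobalEinsteinEqs}.
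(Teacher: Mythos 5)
Your proposal is correct and follows essentially the same route the paper intends: it reads both formulas off the preceding Proposition's local expressions, passes to generators that are manifest pullbacks along $p_H$ by using the invertibility of the vielbein matrix $\left(e^i_\mu\right)$, and then applies Proposition \ref{Prop:ReducingFiberedEDS}. Your explicit reconciliation of the torsion generator --- that the honest reduction yields the $2$-form $\Gamma^\sigma_{\mu\nu}\dif x^\mu\wedge\dif x^\nu$ rather than the $0$-form $\Gamma^\sigma_{\mu\nu}-\Gamma^\sigma_{\nu\mu}$ appearing in the statement, the two agreeing on integral sections and admissible variations though not as differential ideals --- is a point the paper silently glosses over, and is worth making.
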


\begin{note}[Levi-Civita EDS]
  By recalling that $J^1LM/H$ is the product bundle $C\left(LM\right)\times_M\Sigma$, the reduced EDS $\overline\cI_{PG}$ can be interpreted geometrically: For every metric $g:M\rightarrow\Sigma$ on $M$, the unique connection $\Gamma:M\rightarrow C\left(LM\right)$ such that $\Gamma\times g:M\rightarrow C\left(LM\right)\times_M\Sigma$ is an integral section for $\overline\cI_{PG}$ is the Levi-Civita connection for $g$. Thus we can call this EDS the \emph{Levi-Civita EDS}.
\end{note}

\subsection{Discussion: Einstein gravity as a reduced variational problem for Palatini gravity}

The reduced variational problem of the variational problem describing Einstein-Hilbert gravity with vielbeins can be considered as equivalent to the Einstein-Hilbert (i.e., without vielbeins!) variational problem. In fact, we can consider the following diagram
\[
\begin{diagram}
  \node[2]{C\left(LM\right)\times_M J^1\Sigma}\arrow[2]{s,r}{}\arrow{se,t}{\Pi}\arrow{sw,t}{p_2}\\\node{J^1\Sigma}\arrow{se,b}{}\node[2]{C\left(LM\right)\times_M\Sigma}\arrow{sw,b}{}\\
  \node[2]{M}
\end{diagram}
\]
induced by $J^1\Sigma\rightarrow\Sigma$; let us define
\[
\cJ:=\left<p_2^*\cI_{\text{con}}^\Sigma,\Pi^*\overline\cI_{PG}\right>_{\text{diff}},
\]
where $\cI^\Sigma_{\text{con}}$ is the contact structure on $J^1\Sigma$. Then the next result follows.
\begin{lemma}
  There exists a manifold $L\subset C\left(LM\right)\times_M J^1\Sigma$ minimal with respect to the property that every integral manifold of $\cJ$ must be included in it. The map $\left.p_2\right|_L:L\rightarrow J^1\Sigma$ is a bundle isomorphism such that $p_2^*\cI_{\text{con}}^\Sigma=\left.\cJ\right|_L$.
\end{lemma}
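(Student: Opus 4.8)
The plan is to work entirely in the adapted coordinates $\left(x^\mu,\Gamma^\sigma_{\rho\gamma},g^{\mu\nu},g^{\mu\nu}_\lambda\right)$ on $C\left(LM\right)\times_MJ^1\Sigma$, in which $\Pi$ forgets the jet coordinates $g^{\mu\nu}_\lambda$ and $p_2$ forgets the connection coordinates $\Gamma^\sigma_{\rho\gamma}$. In these coordinates $\cJ$ is generated by the contact $1$-forms $\theta^{\mu\nu}:=\dif g^{\mu\nu}-g^{\mu\nu}_\lambda\dif x^\lambda$ coming from $p_2^*\cI^\Sigma_{\text{con}}$, together with the metricity $1$-forms $N^{\mu\nu}:=\dif g^{\mu\nu}+\left(g^{\mu\sigma}\Gamma^\nu_{\gamma\sigma}+g^{\nu\sigma}\Gamma^\mu_{\gamma\sigma}\right)\dif x^\gamma$ and the torsion $0$-forms $\Gamma^\sigma_{\mu\nu}-\Gamma^\sigma_{\nu\mu}$ coming from $\Pi^*\overline\cI_{PG}$.

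The first step is to isolate the algebraic content of $\cJ$. The difference
\[
N^{\mu\nu}-\theta^{\mu\nu}=\left(g^{\mu\nu}_\gamma+g^{\mu\sigma}\Gamma^\nu_{\gamma\sigma}+g^{\nu\sigma}\Gamma^\mu_{\gamma\sigma}\right)\dif x^\gamma
\]
belongs to $\cJ$ and is a horizontal $1$-form. On any integral manifold satisfying the independence condition $\dif x^1\wedge\cdots\wedge\dif x^n\neq 0$ (in particular on the image of any integral section over $M$) the pullbacks of the $\dif x^\gamma$ are pointwise independent, so each coefficient must vanish there; together with the torsion $0$-forms, which are annihilated on every integral manifold, this shows that every integral manifold lies in the common zero locus
\[
L:=\left\{g^{\mu\nu}_\gamma+g^{\mu\sigma}\Gamma^\nu_{\gamma\sigma}+g^{\nu\sigma}\Gamma^\mu_{\gamma\sigma}=0,\ \Gamma^\sigma_{\mu\nu}-\Gamma^\sigma_{\nu\mu}=0\right\}.
\]

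Next I would identify $L$ geometrically. These two equations are precisely the metricity and torsionlessness conditions written with the inverse metric $g^{\mu\nu}$ and its formal derivatives $g^{\mu\nu}_\lambda$; by the uniqueness part of Proposition \ref{Prop:UniqueSolution} (the fundamental theorem of pseudo-Riemannian geometry) they determine $\Gamma^\sigma_{\rho\gamma}$ uniquely as the Levi-Civita symbols of $\left(g^{\mu\nu},g^{\mu\nu}_\lambda\right)$. Hence $L$ is the graph of a smooth map $J^1\Sigma\rightarrow C\left(LM\right)$ along the fibres of $p_2$, i.e. the image of a section of $p_2$; this immediately gives that $\left.p_2\right|_L:L\rightarrow J^1\Sigma$ is a fibrewise diffeomorphism over $M$, that is, a bundle isomorphism, whose inverse is the Levi-Civita section. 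Restricting the generators of $\cJ$ to $L$ then completes the computation: the torsion $0$-forms vanish identically on $L$, and since the coefficients of $N^{\mu\nu}-\theta^{\mu\nu}$ vanish on $L$ we obtain $\left.N^{\mu\nu}\right|_L=\left.\theta^{\mu\nu}\right|_L$, so that $\left.\cJ\right|_L$ is generated by the restrictions of the contact forms, i.e. $\left.\cJ\right|_L=p_2^*\cI^\Sigma_{\text{con}}$ under the identification $L\simeq J^1\Sigma$.

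Finally, for the minimality statement I would exploit the isomorphism just obtained: under $\left.p_2\right|_L$ the integral manifolds of $\left.\cJ\right|_L$ correspond to integral manifolds of the contact structure on $J^1\Sigma$, i.e. to prolongations $j^1\phi$ of local sections $\phi$ of $\Sigma\rightarrow M$. Since every point of $J^1\Sigma$ is of the form $j^1_x\phi$ for some such $\phi$, every point of $L$ lies on an integral manifold of $\cJ$ contained in $L$; therefore no proper submanifold of $L$ can contain all integral manifolds, which is exactly the asserted minimality. I expect the main obstacle to be the careful handling of this minimality clause, namely making precise the class of integral manifolds for which the independence condition may legitimately be invoked (so that the horizontal coefficients are forced to vanish) and then exhibiting enough integral manifolds inside $L$ (the prolongations of sections) to exclude any smaller candidate; by contrast, the algebraic solvability for $\Gamma$ is routine once Proposition \ref{Prop:UniqueSolution} is in hand.
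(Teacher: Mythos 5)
Your argument is correct, and it is essentially the proof the paper intends but does not write out (the lemma is introduced only with ``Then the next result follows,'' and the surrounding remarks --- the Levi-Civita EDS note and the appeal to Proposition \ref{Prop:UniqueSolution} --- point to exactly your route: $L$ is the graph of the Levi-Civita assignment $J^1\Sigma\rightarrow C\left(LM\right)$, cut out by the vanishing of the horizontal coefficients of $N^{\mu\nu}-\theta^{\mu\nu}$ together with the symmetry of $\Gamma$). Your explicit caveat that ``integral manifold'' must be read with the independence condition $\dif x^1\wedge\cdots\wedge\dif x^n\neq 0$ (otherwise, e.g., points off $L$ satisfying only the torsion constraint would be integral manifolds and minimality would fail) is a genuine precision that the paper leaves implicit, and your use of prolongations $j^1\phi$ to establish minimality is the right way to close that clause.
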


Then the $\Pi-$projectable extremals of
\[
\left(L,\left.\Pi^*\overline\lambda_{PG}\right|_L,\left.\cJ\right|_L\right)
\]
are in one-to-one correspondence with the extremals of the reduced Palatini variational problem on $C\left(LM\right)\times_M\Sigma$ via $\Pi$, and with the extremals of the classical variational problem $\left(J^1\Sigma,\lambda_{EH},\cI^\Sigma_{\text{con}}\right)$ through $\left.p_2\right|_L$; the Einstein-Hilbert Lagrangian $\lambda_{EH}$ is determined by the equation
\[
p_2^*\lambda_{EH}=\left.\Pi^*\overline\lambda_{PG}\right|_L.
\]
It induces the equivalence we were looking for.
\\
It remains to consider reduction of the variational problem $\left(J^1LM,\lambda_{PG},\cI_{PG}'\right)$ considered in Subsection \ref{SubSect:AlterVariationalPrinciple}, corresponding to the Einstein-Palatini gravity with vielbeins; because the form $\Tr\omega$ is invariant by the action of the Lorentz group (in fact, it is invariant by the full general linear group $GL\left(n\right)$), the reduction must be done by using the same group. By means of Eq. \eqref{Eq:TraceEqReduced} it can be concluded that the reduced variational problem is $\left(C\left(LM\right)\times_M\Sigma,\overline{\lambda_{PG}},g_{\mu\nu}\dif g^{\mu\nu}+\Gamma^\sigma_{\sigma\rho}\dif x^\rho\right)$; it is a kind of Einstein-Palatini formulation for gravity, although different from \cite{citeulike:820116}.

\section{Conclusions}

In this work a geometrical formulation for Palatini gravity was provided, by using a broader notion for the term \emph{variational problem}. In order to perform this task, it was necessary to use some constructions associated to the jet space of the frame bundle. This picture would give us some insights on the geometrical character of vacuum GR, complementary to those found in the literature. In order to relate this formulation with the usual Einstein-Hilbert variational problem, a generalized reduction scheme was set.

\appendix

\section{Notations}

The internal metric of the tetrads will have the signature $\left(-+\cdots+\right)$. We will assume further the conventions of \cite{MarsdenRatiu} in playing with forms. If $\alpha$ is a $k-$form:
\begin{align*}
&X\lrcorner\left(\alpha\wedge\beta\right)=\left(X\lrcorner\alpha\right)\wedge\beta+\left(-1\right)^k\alpha\wedge\left(X\lrcorner\beta\right)\cr
&d\left(\alpha\wedge\beta\right)=d\alpha\wedge\beta+\left(-1\right)^k\alpha\wedge
d\beta\cr
&\left(d\alpha\right)\left(X_0,\cdots,X_k\right)=\sum_{i=0}^k\left(-1\right)^iX_i\cdot\left(\alpha\left(X_0,\cdots,\widehat{X_i},\cdots,X_k\right)\right)+\cr
&\qquad+\sum_{0\leq
i<j\leq k}\left(-1\right)^{i+j}\alpha\left(\left[X_i,X_j\right],X_0,\cdots,\widehat{X_i},\cdots,\widehat{X_j},\cdots,X_k\right)
\end{align*}
The indices $\mu,\nu,\cdots$ and $i,j,k,\cdots$ will run from $1$ to $n$; as usual, the first set will be used in the enumeration of local coordinates on spacetime, while the latin indices will label the components in the (tensorial algebra of the) local model $\mR^n$. In particular, we are using the following convention relating group product in $GL\left(n\right)$ and indices
\[
\left(g\cdot h\right)_i^j=g_i^kh_k^j
\]
for all $g,h\in GL\left(n\right)$. Following standard usage, we will use the acronym EDS when refering to \emph{exterior differential systems}.

\section{Some geometrical results}

\subsection{An important algebraic result} We would like to state here the following algebraic proposition.

\begin{proposition}\label{Prop:UniqueSolution}
Let $\left\{c_{ijk}\right\}$ be a set of real numbers such that
\[
\begin{cases}
c_{ijk}\mp c_{jik}=b_{ijk}\cr
c_{ijk}\pm c_{ikj}=a_{ijk}
\end{cases}
\]
for some given set of real numbers $\left\{a_{ijk}\right\}$ and $\left\{b_{ijk}\right\}$ such that $b_{ijk}\mp b_{jik}=0$ and $a_{ijk}\pm a_{ikj}=0$. Then
\[
c_{ijk}=\frac{1}{2}\left(a_{ijk}+a_{jki}-a_{kij}+b_{ijk}+b_{kij}-b_{jki}\right)
\]
is the unique solution for this linear system.
\end{proposition}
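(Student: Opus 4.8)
The plan is to treat this purely as a linear problem in the unknowns $c_{ijk}$ and to split the argument into two stages: first \emph{derive} the displayed formula directly from the two families of relations, which simultaneously produces the candidate and forces uniqueness, and then \emph{verify} that this candidate actually solves the system, which is the step into which the compatibility hypotheses on $a_{ijk}$ and $b_{ijk}$ enter. The overall shape is that of the classical Koszul/Christoffel computation, where one recovers a quantity from its symmetrization in one index pair and its antisymmetrization in another.

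First I would write down the two defining relations together with their images under the cyclic permutation $(i,j,k)\mapsto(j,k,i)\mapsto(k,i,j)$, obtaining six scalar identities. The key step is to isolate the correct alternating combination: I would compute
\[
a_{ijk}+b_{ijk}+a_{jki}-b_{jki}-a_{kij}+b_{kij}
\]
by expanding each term through the defining relations. A short telescoping then cancels the mixed contributions $c_{ikj}$, $c_{jik}$, $c_{kji}$ (each appears once with each sign), and the combination collapses to exactly $2c_{ijk}$, which is the stated formula. Since this derivation uses only the defining relations and no symmetry of $a$ or $b$, it shows that \emph{any} solution must coincide with the displayed expression; in particular, setting $a=b=0$ forces $c\equiv 0$, which is precisely the uniqueness invoked in the preceding Lemma.

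It then remains to confirm existence, i.e. that the displayed $c_{ijk}$ genuinely satisfies both relations. Here I would substitute the formula into, say, $c_{ijk}\mp c_{jik}=b_{ijk}$, collecting the $a$-contributions and the $b$-contributions separately. The $a$-terms cancel identically once the hypothesis $a_{ijk}\pm a_{ikj}=0$ is applied to the relevant index transposition, while the $b$-terms combine into $2b_{ijk}$ by means of $b_{ijk}\mp b_{jik}=0$; thus the two compatibility conditions are exactly what the verification consumes. The second relation is checked in the same way, with the roles of the two index pairs interchanged. Because the $\pm$ and $\mp$ signs are correlated throughout, I would carry them symbolically and treat the upper- and lower-sign cases by a single computation rather than splitting into two.

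The only genuine difficulty I anticipate is combinatorial rather than conceptual: keeping track of the six permuted copies and their signs so that the mixed $c$-terms telescope correctly in the uniqueness step, and correctly matching each printed sign in the compatibility conditions to the transposition it governs in the existence step. Once the correct alternating combination has been identified, both stages reduce to routine index bookkeeping.
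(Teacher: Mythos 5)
Your first stage is exactly the paper's proof: the paper forms the cyclic combination $a_{ijk}+a_{jki}-a_{kij}$, substitutes $\pm c_{jik}=c_{ijk}-b_{ijk}$ and its two permuted copies, and collapses the result to $2c_{ijk}-b_{ijk}-b_{kij}+b_{jki}$ --- the same telescoping you describe, and likewise using no symmetry of $a$ or $b$. The paper stops there: it proves only that any solution must equal the displayed expression, which is all that is needed downstream (the lemma on the connection invokes the proposition with $a=b=0$). Your proposed second stage is where a concrete problem arises and is worth flagging. The compatibility conditions actually forced by the system are $b_{ijk}\pm b_{jik}=0$ and $a_{ijk}\mp a_{ikj}=0$ (swap $i\leftrightarrow j$ in the first relation and $j\leftrightarrow k$ in the second, then combine), whereas the proposition prints them with the opposite correlation of signs. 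With the conditions as printed, back-substituting the formula into $c_{ijk}\mp c_{jik}$ gives, for the upper sign, $a_{ijk}+a_{jki}-a_{kij}+b_{kij}-b_{jki}$ rather than $b_{ijk}$: the $a$-terms do not cancel, so the verification you outline does not close. (Indeed the printed conditions together with the forced ones imply $a=b=0$.) The existence check does succeed once the signs in the hypotheses are corrected, so this is a defect of the statement rather than of your strategy; but your claim that the cancellation consumes exactly the printed hypotheses is not accurate, and in the only place the paper applies the proposition ($a=b=0$) the issue is invisible.
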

\begin{proof}
From first equation we see that
\[
\pm c_{jik}=c_{ijk}-b_{ijk}.
\]
The trick now is to form the following combination
\begin{align*}
a_{ijk}+a_{jki}-a_{kij}&=c_{ijk}\pm c_{ikj}+c_{jki}\pm c_{jik}-\left(c_{kij}\pm c_{kji}\right)\cr
&=2c_{ijk}-b_{ijk}-b_{kij}+b_{jki}
\end{align*}
where in the permutation of indices was used the remaining condition.
\end{proof}

\subsection{Variational problems and field theory}\label{Sec:VarProb+Dirac}
Our initial data will be a \emph{variational triple}, that is, a triple $\left(\Lambda\rightarrow M,\lambda,\cI\right)$ composed of a fibre bundle $\Lambda$ on the spacetime $M$, a $n$-form $\lambda$ on it (where $n=\text{dim}M$) and an EDS $\cI\subset\Omega^\bullet\left(\Lambda\right)$. The bundle consists of the degree of freedom associated to the fields and its (generalized) velocities, the $n$-form $\lambda$ will be used to define the dynamics, and $\cI$ will induce some relations between the degrees of freedom (for example, it will force to some variables to be the derivatives of another variables).

\subsubsection{Variational problems}
We are in position to formulate a notion of \emph{variational problems}. Although we are primarily interested in applications of these notions to physics, they can be used in tackling geometrical problems, see \cite{hsu92:_calcul_variat_griff}. 
\begin{definition}\label{Def:VariationalProblems}
  The \emph{variational problem} associated to a variational triple $\left(\Lambda,\lambda,\cI\right)$ consists into the problem of finding the sections $\sigma:M\rightarrow\Lambda$ which are integrals for the EDS $\cI$ and extremals for the functional
\[
S\left[\sigma\right]:=\int_M\sigma^*\lambda.
\]
\end{definition}

\begin{note}
We will suppose that the necessary conditions for the existence of the several integrals that could appear throughout the work are met; for example, $M$ would be compact.
\end{note}

\begin{definition}[Infinitesimal symmetries of an EDS]
  Let $\cI\subset\Omega^\bullet\left(\Lambda\right)$ be an EDS. A (perhaps local) vector field $X$ is an \emph{infinitesimal symmetry of $\cI$} if and only if
  \[
  \cL_X\cI\subset\cI.
  \]
  The set of infinitesimal symmetries of $\cI$ will be indicated by $\Sym{\cI}$.
\end{definition}

\begin{definition}[Euler-Lagrange EDS]
  Let $\left(\Lambda,\lambda,\cI\right)$ be a variational problem. The \emph{Euler-Lagrange EDS} is the EDS generated by the set of forms
  \[
  \left\{\alpha\in\Omega^\bullet\left(\Lambda\right):\alpha-X\lrcorner\dif\lambda\equiv0\mod{\dif\Omega^{n-1}\left(\Lambda\right)}\text{ for all }X\in\Sym{\cI}\cap\mathfrak{X}^V\left(\Lambda\right)\right\}.
  \]
\end{definition}

\subsubsection{Classical field theory as a variational problem}\label{SubSubSec:FieldTheoryAsVarProblem}
  It is necessary perhaps to indicates the way in which the usual (first order) classical field theory fits in this scheme: The corresponding variational problem is simply $\left(J^1E,\cL\dif x^1\wedge\cdots\wedge\dif x^n,\cI_{\text{con}}\right)$, where $E\rightarrow M$ is a bundle on $M$ (whose nature is associated to the field to be described by the theory), $\cL$ is the Lagrangian density of the theory and
  \[
  \cI_{\text{con}}:=\left<\dif u^A-u^A_k\dif x^k\right>_{\text{diff}}
  \]
  is the contact structure of the jet space. This variational problem is usually called in the literature the \emph{classical variational problem} \cite{GotayCartan,book:852048}. Then we have the following result.
  \begin{lemma}
    The underlying PDE for the Euler-Lagrange EDS associated to the classical variational problem contains the Euler-Lagrange equations.
  \end{lemma}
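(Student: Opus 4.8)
The plan is to realise the Euler--Lagrange EDS explicitly in the adapted jet coordinates $\left(x^k,u^A,u^A_k\right)$ on $\Lambda=J^1E$ and to show that one of its generators, pulled back along a holonomic section, reproduces the classical Euler--Lagrange operator.

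First I would determine the vertical (over $M$) infinitesimal symmetries of $\cI_{\text{con}}$. Writing a vertical field as $X=\xi^A\pdv{u^A}+\xi^A_k\pdv{u^A_k}$ and using $\theta^A:=\dif u^A-u^A_k\dif x^k$, a short computation gives $\Ld{X}\theta^A=\dif\xi^A-\xi^A_k\dif x^k$. Demanding $\Ld{X}\theta^A\in\cI_{\text{con}}$, i.e.\ that it be an algebraic combination of the $\theta^B$, forces $\partial\xi^A/\partial u^B_j=0$ and $\xi^A_k=D_k\xi^A$, where $D_k:=\pdvx{k}+u^B_k\pdv{u^B}$ is the total derivative on functions of $\left(x,u\right)$. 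Hence the admissible vertical symmetries are exactly the prolongations of fields $\xi^A=\xi^A\left(x,u\right)$ on $E$; the function $\xi^A$ is free, while $\xi^A_k$ is slaved to it.

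Next I would compute the momentum term. With $\lambda=\cL\,\omega$, $\omega:=\dif x^1\wedge\cdots\wedge\dif x^n$ and $\omega_k:=\pdvx{k}\lrcorner\omega$, one has $\dif\lambda=\left(\pdv{u^A}\cL\,\dif u^A+\pdv{u^A_k}\cL\,\dif u^A_k\right)\wedge\omega$ (the $\dif x^k$ part of $\dif\cL$ drops since $\dif x^k\wedge\omega=0$), whence contracting with the vertical $X$ gives $X\lrcorner\dif\lambda=\left(\pdv{u^A}\cL\;\xi^A+\pdv{u^A_k}\cL\;\xi^A_k\right)\omega$. To isolate the Euler--Lagrange part I would introduce the $\left(n-1\right)$-form $\eta:=\pdv{u^A_k}\cL\;\xi^A\,\omega_k$ and set $\alpha:=X\lrcorner\dif\lambda-\dif\eta$. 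By construction $\alpha-X\lrcorner\dif\lambda=-\dif\eta\in\dif\Omega^{n-1}\left(\Lambda\right)$, so $\alpha$ is a legitimate generator of the Euler--Lagrange EDS attached to this $X$.

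Finally I would pull $\alpha$ back along a holonomic section $\sigma=j^1s$ (the integral sections of $\cI_{\text{con}}$ are precisely such prolongations). Writing $\bar\xi^A\left(x\right):=\xi^A\left(x,s\left(x\right)\right)$ and $P^k_A:=\left(\partial\cL/\partial u^A_k\right)\circ j^1s$, one gets $\left(j^1s\right)^*\eta=P^k_A\bar\xi^A\,\omega_k$, so $\dif\left(\left(j^1s\right)^*\eta\right)=\left(\partial_kP^k_A\,\bar\xi^A+P^k_A\,\partial_k\bar\xi^A\right)\omega$, while $\left(j^1s\right)^*\left(X\lrcorner\dif\lambda\right)=\left(\left(\partial\cL/\partial u^A\right)\circ j^1s\;\bar\xi^A+P^k_A\,\partial_k\bar\xi^A\right)\omega$ because $\xi^A_k\circ j^1s=\partial_k\bar\xi^A$. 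The $P^k_A\,\partial_k\bar\xi^A$ terms cancel and I am left with $\left(j^1s\right)^*\alpha=\left[\dfrac{\partial\cL}{\partial u^A}-\dfrac{d}{dx^k}\dfrac{\partial\cL}{\partial u^A_k}\right]_{j^1s}\bar\xi^A\,\omega$. Since $\sigma^*\alpha=0$ is required on extremals and $\bar\xi^A$ may be prescribed arbitrarily (as $\xi^A\left(x,u\right)$ is free), this forces the Euler--Lagrange equations, proving that they belong to the underlying PDE.

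I expect the main obstacle to be the first step rather than the integration by parts: one must check carefully that the vertical symmetry condition singles out precisely the holonomic lifts $\xi^A_k=D_k\xi^A$ (so that $\xi^A$ is genuinely free on the section while nothing spurious survives), and one must confirm that, on the integral manifolds of $\cI_{\text{con}}$, the contact corrections generated in $\dif\eta$ and in $\Ld{X}$ really drop out, so that the congruence modulo $\dif\Omega^{n-1}$ delivers exactly the second-order Euler--Lagrange operator after restriction. The word ``contains'' in the statement reflects that further generators (coming from other admissible $X$ or from the contact forms themselves) may enlarge the system beyond the bare Euler--Lagrange equations.
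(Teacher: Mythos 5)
Your proposal is correct and follows essentially the same route as the paper: contract $\dif\lambda$ with a vertical infinitesimal symmetry of $\cI_{\text{con}}$, absorb the $\xi^A_k$-terms into an exact $(n-1)$-form via $\eta=\frac{\partial\cL}{\partial u^A_k}\xi^A\,\omega_k$ (the paper's $\delta u^A\sigma_A$), and read off the Euler--Lagrange operator. You merely supply two details the paper leaves implicit --- the explicit verification that the vertical symmetries are exactly the prolongations $\xi^A_k=D_k\xi^A$ of fields on $E$, and the pullback along a holonomic section confirming the cancellation --- which is a welcome tightening rather than a different argument.
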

  \begin{proof}
    Let us work in local coordinates. For $X:=\left(0,\delta u^A,\delta u^A_k\right)\in\Sym{\cI_{\text{con}}}\cap \mathfrak{X}^V\left(\Lambda\right)$, we have that
    \[
    \dif\delta u^A-\delta u^a_k\dif x^k=0;
    \]
    then
    \[
    X\lrcorner\dif\lambda=\left(\frac{\partial\cL}{\partial u^A}\delta u^A+\frac{\partial\cL}{\partial u^A_k}\delta u^A_k\right)\dif x^1\wedge\cdots\wedge\dif x^n.
    \]
    Let $\sigma_A\in\Omega^{n-1}\left(J^1E\right)$ be defined as
    \[
    \sigma_A:=\left(\frac{\partial\cL}{\partial u^A_k}\frac{\partial}{\partial x^k}\right)\lrcorner\dif x^1\wedge\cdots\wedge\dif x^n;
    \]
    therefore
    \begin{align*}
      X\lrcorner\dif\lambda&=\frac{\partial\cL}{\partial u^A}\delta u^A\dif x^1\wedge\cdots\wedge\dif x^n+\sigma_A\wedge\delta u^A_k\dif x^k\\
      &=\frac{\partial\cL}{\partial u^A}\delta u^A\dif x^1\wedge\cdots\wedge\dif x^n+\sigma_A\wedge\dif\delta u^A\\
      &\equiv\left[\left(-1\right)^{n+1}\dif\sigma_A+\frac{\partial\cL}{\partial u^A}\dif x^1\wedge\cdots\wedge\dif x^n\right]\delta u^A\mod\dif\Omega^{n-1}\left(J^1E\right),
    \end{align*}
    and so
    \[
    \alpha_A:=\left(-1\right)^{n+1}\dif\sigma_A+\frac{\partial\cL}{\partial u^A}\dif x^1\wedge\cdots\wedge\dif x^n
    \]
    are generators for the Euler-Lagrange EDS. Any integral section for this EDS will obey the Euler-Lagrange equations associated to $\cL$.
  \end{proof}

\subsection{Some tools from differential geometry: The geometry of $J^1P$}\label{SubSection:GeometryJP}
It is time to introduce the basic language we will use to describe gravitation in this work; it will be necessary to point out here some useful tools borrowed from differential geometry in the handling of the multiple questions raised while working with connections. 

In the present section we will describe briefly differential geometry from moving frame viewpoint, as in \cite{KN1,MR532831}. Whenever possible, we will make contact with the more usual description in terms of principal bundles; this framework is of outmost importance in the description of Palatini gravity in the present work. So we will need some facts concerning the jet bundle of a principal bundle. This is a natural choice in this context, because the first structure equation on $LM$ allow us to consider a connection as a kind of velocity associated to the degree of freedom provided by a frame. From this point of view, we need a set of forms on $J^1LM$ encoding the structure equations; it results from the work of García \cite{MR0315624} and Castrillón \emph{et al.} \cite{springerlink:10.1007/PL00004852} that there exists a $\gl\left(n\right)$-valued $2$-form on $J^1LM$ such that its pullback along a connection (in an appropiate sense, see below for details) is the curvature of this connection. Additionally, it can be defined a $\mR^n$-valued $2$-form giving rise to the torsion of the connection via the same pullback procedure. These forms are the fundamental ingredients in the construction of the equivalent of Palatini Lagrangian in this context.

The following section has been formulated by making heavy use of the reference \cite{springerlink:10.1007/PL00004852}; it can be considered as a natural continuation of the last example in \cite{martinez04:_class_field_theor_lie_variational} to this context.

\subsubsection{Geometric preliminaries}
Let $\tau:P\rightarrow M$ be a $G$-principal bundle on $M$; then we have the diagram
\begin{equation}\label{Eq:DiagramAtiyah}
\begin{diagram}
  \node{TP/G}\arrow{s,l}{\tau^P_M}\arrow{e,t}{T\tau}\node{TM}\arrow{s,r}{\tau_M}\\
  \node{M}\arrow{e,b,=}{}\node{M}
\end{diagram}
\end{equation}
where it was defined
\[
\tau^P_M\left(\left[v\right]_G\right):=\tau\left(\tau_P\left(v\right)\right).
\]
\begin{definition}
  The bundle of connections $C\left(P\right)$ is the bundle on $M$ given by
\[
C\left(P\right):=\left\{\lambda:T_mM\rightarrow\left.\left(TP/G\right)\right|_m\text{ such that }T\tau\circ\lambda=\id_{T_mM.}\right\}.
\]
\end{definition}

The main tool to work with this bundle is the following lemma; it relies essentially in the fact that the $G$-orbits are vertical, and the action is free.

\begin{lemma}
  There exists a bundle isomorphism between $C\left(P\right)$ and $J^1P/G$.
\end{lemma}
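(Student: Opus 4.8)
The plan is to construct the isomorphism explicitly, by reading off from a $1$-jet the horizontal subspace it determines, projecting this datum into the Atiyah sequence $0\rightarrow\ad{P}\rightarrow TP/G\rightarrow TM\rightarrow0$, and then checking that passing to the $G$-quotient turns the point-wise datum into an honest splitting. First I would define a map $\Phi:J^1P\rightarrow C\left(P\right)$ as follows. Given a $1$-jet $j^1_xs$ of a local section $s$ with $s\left(x\right)=u\in P_x$, I compose its tangent map $T_xs:T_xM\rightarrow T_uP$ with the canonical projection $\kappa:TP\rightarrow TP/G$ and set $\Phi\left(j^1_xs\right):=\kappa\circ T_xs:T_xM\rightarrow\left(TP/G\right)|_x$. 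Since $T\tau\circ\kappa$ is the differential of $\tau$ (here $T\tau$ denotes the map of diagram \eqref{Eq:DiagramAtiyah}) and $\tau\circ s=\id$, one gets $T\tau\circ\Phi\left(j^1_xs\right)=T_x\left(\tau\circ s\right)=\id_{T_xM}$, so $\Phi\left(j^1_xs\right)\in C\left(P\right)|_x$.

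Next I would show that $\Phi$ is constant on $G$-orbits. The lifted action sends $j^1_xs$ to $j^1_x\left(R_g\circ s\right)$, whose tangent map at $x$ is $T_uR_g\circ T_xs$; because $\kappa$ identifies tangent vectors related by $TR_g$, this gives $\Phi\left(g\cdot j^1_xs\right)=\kappa\circ T_uR_g\circ T_xs=\kappa\circ T_xs=\Phi\left(j^1_xs\right)$. Hence $\Phi$ descends to a smooth bundle map $\overline\Phi:J^1P/G\rightarrow C\left(P\right)$ over $\id_M$, and it remains to prove it is fibrewise bijective.

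The heart of the matter is that the restriction $\kappa|_{T_uP}:T_uP\rightarrow\left(TP/G\right)|_x$ is a linear isomorphism. It is injective: if $v,w\in T_uP$ satisfy $\kappa\left(v\right)=\kappa\left(w\right)$, then $w=TR_g\left(v\right)$ with $u\cdot g=u$, whence $g=e$ by freeness and $v=w$; and it is surjective by a dimension count, both spaces having dimension $\dime{M}+\dime{G}$. With this, surjectivity of $\overline\Phi$ follows: for $\lambda\in C\left(P\right)|_x$ and a chosen $u\in P_x$, the map $\tilde\lambda:=\left(\kappa|_{T_uP}\right)^{-1}\circ\lambda:T_xM\rightarrow T_uP$ satisfies $T\tau\circ\kappa\circ\tilde\lambda=T\tau\circ\lambda=\id_{T_xM}$, so it is the tangent map of some local section $s$ through $u$, and $\Phi\left(j^1_xs\right)=\kappa\circ\tilde\lambda=\lambda$. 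For injectivity, if $\overline\Phi\left[j^1_xs\right]_G=\overline\Phi\left[j^1_xs'\right]_G$, I first use freeness to replace $s'$ within its orbit so that $s'\left(x\right)=u=s\left(x\right)$; then the injectivity of $\kappa|_{T_uP}$ forces $T_xs=T_xs'$, i.e. $j^1_xs=j^1_xs'$ in the common fibre, so the original jets lie in the same $G$-orbit.

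Finally, to upgrade $\overline\Phi$ to a bundle isomorphism I would exhibit its inverse in a local trivialization $P|_U\simeq U\times G$, where (as in Example \ref{Ex:ReductionContStruct}) both $J^1P/G$ and $C\left(P\right)$ reduce to the model $T^*M\otimes\g$ and the inverse is manifestly smooth; being a smooth, fibrewise bijective bundle morphism over $\id_M$ with smooth inverse, $\overline\Phi$ is an isomorphism of bundles. The step I expect to be the main obstacle is precisely the claim that $\kappa|_{T_uP}$ is a linear isomorphism: this is where the two hypotheses singled out in the lemma genuinely enter — verticality of the $G$-orbits guarantees that $\kappa$ does not collapse the transverse directions carried by $T_xs$, while freeness of the action guarantees that no nonzero vertical vector at $u$ is annihilated. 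Everything else is bookkeeping with the Atiyah sequence.
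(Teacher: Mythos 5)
Your proposal is correct and follows essentially the same route as the paper, which defines the quotient map by exactly your formula $q\left(\rho\right):=p_G\circ\rho$ (composition of the jet, viewed as a splitting $T_xM\rightarrow T_uP$, with the canonical projection $TP\rightarrow TP/G$) and otherwise defers the verification to the cited reference of Castrill\'on L\'opez and Mu\~noz Masqu\'e. Your write-up simply supplies the details the paper omits; the only nitpick is that moving $s'\left(x\right)$ to $u$ in the injectivity step uses transitivity of $G$ on the fibre $P_x$, with freeness entering only afterwards through the injectivity of $\kappa|_{T_uP}$.
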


The bundle isomorphism between $C\left(P\right)$ and $J^1P/G$ is proved in \cite{springerlink:10.1007/PL00004852} using the fact that there exists a $G$-principal bundle structure $q:J^1P\rightarrow C\left(P\right)$ through the right $G$-action determined by the lift of the $G$-action on $P$. If we consider the $1$-jet space as the set
\[
J^1P=\bigcup_{p\in P}\left\{\rho:T_{\tau\left(p\right)}M\rightarrow T_pP\text{ such that }T_p\tau\circ\rho=\id_{T_{\tau\left(p\right)}M}\right\},
\]
then $q\left(\rho\right):=p_G\circ\rho$, where $p_G:TP\rightarrow TP/G$ is the canonical projection for the quotient; it is convenient at this point to remember that the $1$-jet bundle $J^1P$ comes with the maps fitting in the diagram
\[
\begin{diagram}
  \node{J^1P}\arrow[2]{e,t}{\tau_{10}}\arrow{se,b}{\tau_1}\node[2]{P}\arrow{sw,b}{\tau}\\
\node[2]{M}
\end{diagram}
\]
This identification allows us to use the map $\tau_1:J^1P\rightarrow M$ as the fibre bundle map of $C\left(P\right)$ on $M$. On the other side, every element $\left[\rho\right]_G\in C\left(P\right)$ can be thought as a ``connection form at $m:=\tau_1\left(\left[\rho\right]_G\right)$'', as the following proposition shows.
\begin{proposition}\label{Prop:CElementsAsLocalConnections}
  Every element $\left[\rho\right]_G$ defines a unique family of projections $\Gamma_p:T_pP\rightarrow V_pP$ for $p\in\tau^{-1}\left(m\right)$.
\end{proposition}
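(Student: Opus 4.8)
The plan is to build the projection first at a single point from one representative $1$-jet, then transport it across the whole fibre using the free $G$-action, the only delicate point being that the resulting family depends on the orbit $[\rho]_G$ and not on the chosen representative.

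First I would work at a single point. Let $\rho\in J^1P$ be a representative of $[\rho]_G$ sitting over some $p\in\tau^{-1}\left(m\right)$; by definition $\rho:T_mM\rightarrow T_pP$ satisfies $T_p\tau\circ\rho=\id_{T_mM}$. The endomorphism $\rho\circ T_p\tau:T_pP\rightarrow T_pP$ is then idempotent, since
\[
\left(\rho\circ T_p\tau\right)\circ\left(\rho\circ T_p\tau\right)=\rho\circ\left(T_p\tau\circ\rho\right)\circ T_p\tau=\rho\circ T_p\tau,
\]
its image is $\Img\rho$, and I would set
\[
\Gamma_p:=\id_{T_pP}-\rho\circ T_p\tau.
\]
A direct check gives $T_p\tau\circ\Gamma_p=T_p\tau-\left(T_p\tau\circ\rho\right)\circ T_p\tau=0$, so $\Img{\Gamma_p}\subset\ker T_p\tau=V_pP$, while for $v\in V_pP$ one has $T_p\tau\left(v\right)=0$ and hence $\Gamma_p\left(v\right)=v$. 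Therefore $\Gamma_p$ is the projection of $T_pP$ onto $V_pP$ along the complement $H_p:=\Img\rho$.

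Next I would spread this over the fibre using the lifted action. Recall that the $G$-action on $J^1P$ is $\rho\cdot g:=T_pR_g\circ\rho$, which lands in $J^1P$ over $p\cdot g$ because $\tau\circ R_g=\tau$. Since the $G$-action on $P$ is free and transitive on each fibre, for every $p'\in\tau^{-1}\left(m\right)$ there is exactly one $g\in G$ with $p\cdot g=p'$; I define $\rho_{p'}$ to be the corresponding representative $\rho\cdot g$ of the orbit $[\rho]_G$, which is the unique element of $[\rho]_G$ lying over $p'$, and set
\[
\Gamma_{p'}:=\id_{T_{p'}P}-\rho_{p'}\circ T_{p'}\tau.
\]
By the single-point computation each $\Gamma_{p'}$ is a projection onto $V_{p'}P$, so we obtain the desired family.

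The hard part is the uniqueness claim, namely that this family is attached to the orbit $[\rho]_G$ rather than to the particular representative $\rho$. If $\tilde\rho=\rho\cdot h$ is another representative (over $\tilde p=p\cdot h$), then for a fixed $p'$ the unique element of the orbit over $p'$ obtained from $\tilde\rho$ is $\tilde\rho\cdot g'$ with $\tilde p\cdot g'=p'$; but $\tilde p\cdot g'=p\cdot\left(hg'\right)$ together with $p\cdot g=p'$ forces $hg'=g$ by freeness, whence $\tilde\rho\cdot g'=\rho\cdot g=\rho_{p'}$. Thus $\Gamma_{p'}$ is unchanged, and the family $\left\{\Gamma_p\right\}_{p\in\tau^{-1}\left(m\right)}$ is well defined; this is exactly where freeness of the action is indispensable. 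Finally I would record that $H_{p\cdot g}=T_pR_g\left(H_p\right)$, so the horizontal distribution is $G$-invariant and the family is precisely a principal connection along the fibre, consistent with the identification $C\left(P\right)\cong J^1P/G$.
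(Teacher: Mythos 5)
Your proof is correct and follows essentially the same route as the paper: define $\Gamma_{p}:=\id_{T_pP}-\rho\circ T_p\tau$ at the base point of a representative and transport it along the fibre with the right $G$-action (your formula $\id_{T_{p'}P}-\rho_{p'}\circ T_{p'}\tau$ with $\rho_{p'}=\rho\cdot g$ coincides with the paper's conjugated expression $T_{p}R_g\circ\Gamma_{p}\circ T_{p'}R_{g^{-1}}$). You actually supply more detail than the paper, which merely asserts the idempotence and the independence of the representative that you verify explicitly via freeness of the action.
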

\begin{proof}
In fact, for $p\in\tau^{-1}\left(m\right)$, we can define the projection map
\[
\Gamma_p:=T_{\tau_{10}\left(\rho\right)}R_g\circ\Gamma_{\tau_{10}\left(\rho\right)}\circ T_{p}R_{g^{-1}},
\]
if and only if $p=\tau_{10}\left(\rho\right)g$ and
\[
\Gamma_{\tau_{10}\left(\rho\right)}:=\id_{T_{\tau_{10}\left(\rho\right)}P}-\rho\circ T_{\tau_{10}\left(\rho\right)}\tau.
\]
Namely, we select an element $\rho\in\left[\rho\right]_G$ and define on $p_0:=\tau_{10}\left(\rho\right)\in P$ a projection $\Gamma_{p_0}:T_{p_0}P\rightarrow V_{p_0}P$ onto the vertical fibre; then we extend this projection to any point of $\tau^{-1}\left(m\right)$ by using the right $G$-action. Because of the form we choose to do this extension, it results that this definition is independent of the choice made of the representative $\rho\in\left[\rho\right]_G$.
\end{proof}

\subsubsection{The universal form on $J^1P$}

We can use Proposition \ref{Prop:CElementsAsLocalConnections} in order to construct a $\g$-valued $1$-form on $J^1P$; it is necessary first to recall that we have a vector bundle isomorphism
\[
P\times\g\rightarrow VP:\left(p,\xi\right)\mapsto\left.\frac{\vec{\text{d}}}{\text{d}t}\right|_{t=0}\left[p\cdot\left(\exp{t\xi}\right)\right].
\]
Then the canonical connection $\omega\in\Omega^1\left(J^1P,\g\right)$ is defined through
\[
\left.\omega\right|_\rho\left(Y\right):=\left[\rho\right]_G\left(T_\rho\tau_{10}\left(Y\right)\right)
\]
where $\left[\rho\right]_G\in C\left(P\right)$ has been considered in this formula as the family of projections $\Gamma_{\tau_{10}\left(\rho\right)}$ of the previous proposition, and we freely use the identification $V_{\tau_{10}\left(\rho\right)}P\simeq\g$.
\begin{lemma}
  The $1$-form $\omega$ generates the contact structure on $P$.
\end{lemma}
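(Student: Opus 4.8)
The plan is to show that the differential ideal $\left<\omega\right>_{\text{diff}}$ generated by the components of $\omega$ coincides with the contact ideal $\cI_{\text{con}}$ of $J^1P$ (this is the contact structure meant, since $\omega$ lives on $J^1P$). Fixing a basis $\left\{E_a\right\}$ of $\g$, I would write $\omega=\omega^aE_a$ with $\omega^a\in\Omega^1\left(J^1P\right)$, and prove that the $\omega^a$ are exactly $\dim G$ pointwise independent contact $1$-forms. Since the contact codistribution has the same rank, independence forces the two Pfaffian systems to agree and hence the generated differential ideals to coincide. The argument thus splits into two inclusions: that every $\omega^a$ annihilates all holonomic sections, and that the $\omega^a$ are independent and as numerous as the rank of the contact system.

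First I would establish $\left(j^1s\right)^*\omega=0$ for every local section $s$ of $\tau\colon P\to M$. By Proposition \ref{Prop:CElementsAsLocalConnections} the representative of $j^1_xs$ is the splitting $T_xs$, so the associated vertical projection is $\Gamma_{s\left(x\right)}=\id_{T_{s\left(x\right)}P}-T_xs\circ T_{s\left(x\right)}\tau$. Since $\tau_{10}\circ j^1s=s$, the defining formula $\left.\omega\right|_\rho\left(Y\right)=\Gamma_{\tau_{10}\left(\rho\right)}\left(T_\rho\tau_{10}\left(Y\right)\right)$ gives, for $X\in T_xM$,
\[
\left(\left(j^1s\right)^*\omega\right)_x\left(X\right)=\Gamma_{s\left(x\right)}\left(T_xs\left(X\right)\right)=T_xs\left(X\right)-T_xs\left(T_{s\left(x\right)}\tau\left(T_xs\left(X\right)\right)\right).
\]
Because $\tau\circ s=\id_M$ forces $T_{s\left(x\right)}\tau\circ T_xs=\id_{T_xM}$, the right-hand side vanishes. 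Hence each $\omega^a$ is a contact $1$-form and $\left<\omega\right>_{\text{diff}}\subseteq\cI_{\text{con}}$.

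For the reverse inclusion I would argue by ranks. The contact codistribution of $J^1P$ has rank equal to the fibre dimension of $P\to M$, namely $\dim G$, which is precisely the number of components $\omega^a$; so it suffices to check that the $\omega^a$ are pointwise independent, equivalently that $\left.\omega\right|_\rho\colon T_\rho J^1P\to\g$ is surjective for every $\rho$. This follows again from $\left.\omega\right|_\rho\left(Y\right)=\Gamma_{\tau_{10}\left(\rho\right)}\left(T_\rho\tau_{10}\left(Y\right)\right)$: as $\tau_{10}$ is a submersion and $\Gamma_{\tau_{10}\left(\rho\right)}$ is a projection onto $V_{\tau_{10}\left(\rho\right)}P$, which is identified with $\g$ via the isomorphism $P\times\g\to VP$, choosing $Y$ with $T_\rho\tau_{10}\left(Y\right)$ equal to the fundamental vertical vector of a prescribed $\zeta\in\g$ yields $\left.\omega\right|_\rho\left(Y\right)=\zeta$. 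Thus the $\dim G$ independent contact forms $\omega^a$ span the rank-$\dim G$ contact codistribution, the Pfaffian systems coincide, and therefore $\left<\omega\right>_{\text{diff}}=\cI_{\text{con}}$. As a consistency check one can match this against the trivialised description of Example \ref{Ex:ReductionContStruct}, where $\omega$ recovers the generator $\dif g\cdot g^{-1}-\xi$.

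The main obstacle is the reverse inclusion: one must verify both that the contact codistribution has rank exactly $\dim G$ and that the $\omega^a$ are genuinely independent, so that the a priori containment is upgraded to equality. The delicate point underpinning this is the surjectivity of $\left.\omega\right|_\rho$, which rests on the fact that the projection $\Gamma_p$ acts as the identity on $V_pP$ under the identification $V_pP\simeq\g$; once this is secured the rest is a dimension count.
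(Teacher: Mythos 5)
Your proof is correct, and its first half — showing $\left(j^1s\right)^*\omega=0$ by unwinding $\Gamma_{s\left(x\right)}=\id_{T_{s\left(x\right)}P}-T_xs\circ T_{s\left(x\right)}\tau$ and using $T_{s\left(x\right)}\tau\circ T_xs=\id_{T_xM}$ — is exactly the computation the paper performs. The difference is that the paper stops there, concluding only that ``$\omega$ is in the algebraic closure of the contact forms,'' i.e.\ it establishes the inclusion $\left<\omega\right>\subseteq\cI_{\text{con}}$ but never argues the reverse containment that the word ``generates'' demands. Your second half supplies precisely this missing step: the contact codistribution of $J^1P\rightarrow M$ has rank equal to the fibre dimension $\dim G$, and the pointwise surjectivity of $\left.\omega\right|_\rho$ onto $\g$ (via the submersion $\tau_{10}$ and the fact that $\Gamma_p$ restricts to the identity on $V_pP\simeq\g$) shows the $\dim G$ components $\omega^a$ are independent, so they span the whole contact codistribution. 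This dimension count is elementary but genuinely needed, and it is consistent with the trivialised picture in Example \ref{Ex:ReductionContStruct}, where $\omega$ becomes $\dif g\cdot g^{-1}-\xi$ and visibly generates the contact ideal. In short: same route as the paper, carried one step further to actually close the equality.
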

\begin{proof}
  Let $s:M\rightarrow P$ be a local section of $P$; then we have that
\[
\tau_{10}\circ j^1s=s
\]
where
\[
j^1s:M\rightarrow J^1P:m\mapsto\left[v\in T_mM\mapsto\left(T_ms\right)\left(v\right)\right]
\]
is the $1$-jet of $s$. So if $v\in T_mM$ and $Y=T_m\left(j^1s\right)\left(v\right)$, we will have that
\[
T_{j^1_ms}\tau_{10}\left(Y\right)=T_ms\left(v\right);
\]
therefore
\begin{align*}
  \left.\omega\right|_{j^1_ms}\left(Y\right)&=\left[j^1_ms\right]_G\left(T_ms\left(v\right)\right)\\
&=T_ms\left(v\right)-j^1_ms\circ T_{s\left(m\right)}\tau\left(T_ms\left(v\right)\right)\\
&=T_ms\left(v\right)-j^1_ms\left(v\right)\\
&=0.
\end{align*}
It means that $\omega$ is in the algebraic closure of the contact forms.
\end{proof}
Thus we have the following result \cite{springerlink:10.1007/PL00004852}.
\begin{proposition}
  The bundle $p_G:J^1P\rightarrow J^1P/G=C\left(P\right)$ is a $G$-principal bundle, and $\omega$ defines a connection on it.
\end{proposition}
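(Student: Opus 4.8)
The plan is to prove the two assertions separately. For the principal bundle structure, I would invoke the earlier lemma identifying $C\left(P\right)$ with $J^1P/G$: the $G$-action on $J^1P$ is the lift of the free and proper $G$-action on $P$ along $\tau_{10}$, hence is itself free and proper, and by construction its orbit space is $C\left(P\right)$. Since $\tau_{10}:J^1P\rightarrow P$ intertwines the two actions it is $G$-equivariant, a fact I will use repeatedly below. This endows $p_G:J^1P\rightarrow C\left(P\right)$ with the structure of a $G$-principal bundle.

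For the second assertion I would verify the two axioms characterizing a connection form. First, the normalization on fundamental vector fields: given $\xi\in\g$, let $\xi_{J^1P}$ denote the infinitesimal generator of the $G$-action on $J^1P$. Because $\tau_{10}$ is equivariant, $T_\rho\tau_{10}$ maps $\xi_{J^1P}|_\rho$ to the fundamental vector field $\xi_P|_{\tau_{10}\left(\rho\right)}$, which lies in $V_{\tau_{10}\left(\rho\right)}P$. Since the projection $\Gamma_{\tau_{10}\left(\rho\right)}$ supplied by Proposition \ref{Prop:CElementsAsLocalConnections} restricts to the identity on the vertical subspace, the definition of $\omega$ gives $\omega|_\rho\left(\xi_{J^1P}\right)=\xi_P|_{\tau_{10}\left(\rho\right)}$, which is $\xi$ under the identification $VP\simeq\g$.

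The core computation is the equivariance $R_g^*\omega=\Ad_{g^{-1}}\circ\omega$. Starting from $(R_g^*\omega)|_\rho\left(Y\right)=\omega|_{\rho\cdot g}\left(T_\rho R_g\,Y\right)$, I would first use equivariance of $\tau_{10}$ to rewrite $T_{\rho\cdot g}\tau_{10}\circ T_\rho R_g=T_{\tau_{10}\left(\rho\right)}R_g\circ T_\rho\tau_{10}$, and then note that the projection attached to the class $\left[\rho\cdot g\right]_G=\left[\rho\right]_G$ at the point $\tau_{10}\left(\rho\right)\cdot g$ is governed by the transport formula $\Gamma_{p\cdot g}=T_pR_g\circ\Gamma_p\circ T_{p\cdot g}R_{g^{-1}}$ of Proposition \ref{Prop:CElementsAsLocalConnections}, with $p=\tau_{10}\left(\rho\right)$. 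Substituting reduces the expression to $T_pR_g\left(\Gamma_p\left(T_\rho\tau_{10}\,Y\right)\right)=T_pR_g\left(\omega|_\rho\left(Y\right)\right)$. The last step is to observe that, under the identification $V_pP\simeq\g$, right translation $T_pR_g$ on vertical vectors corresponds to $\Ad_{g^{-1}}$, yielding the claim.

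The main obstacle I anticipate is the bookkeeping around the fibrewise identifications $V_pP\simeq\g$: these depend on the base point $p$, so the translation of $T_pR_g$ into $\Ad_{g^{-1}}$ and the compatibility of the projections $\Gamma_p$ along a fibre must be tracked carefully. Everything else follows directly from equivariance of $\tau_{10}$ together with the transport formula already established in Proposition \ref{Prop:CElementsAsLocalConnections}.
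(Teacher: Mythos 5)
Your argument is correct, but note that the paper does not actually prove this proposition: it is stated as a quoted result and attributed to Castrill\'on L\'opez and Mu\~noz Masqu\'e, so your direct verification supplies a proof where the paper only gives a citation. The two checks you perform are exactly the right ones, and both go through: the normalization $\omega(\xi_{J^1P})=\xi$ follows because $\Gamma_{\tau_{10}(\rho)}=\mathrm{id}-\rho\circ T\tau$ is the identity on $V_{\tau_{10}(\rho)}P$, and the equivariance follows from the transport formula $\Gamma_{p\cdot g}=T_pR_g\circ\Gamma_p\circ T_{p\cdot g}R_{g^{-1}}$ together with the standard identity $T_pR_g\bigl(\xi^P_p\bigr)=\bigl(\Ad_{g^{-1}}\xi\bigr)^P_{p\cdot g}$, which resolves the bookkeeping issue you flag. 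The only places where you assert rather than argue are (i) properness of the lifted action --- this does not follow from properness of $\tau_{10}$ (the fibres of $J^1P\to P$ are non-compact affine spaces) but from the usual subsequence argument: if $j_n\to j$ and $j_n\cdot g_n\to j'$, push down by $\tau_{10}$ and use properness on $P$ to extract a convergent subsequence of $(g_n)$ --- and (ii) local triviality of $p_G$, which you get from the quotient manifold theorem once freeness and properness are in hand. Neither is a gap in substance; a sentence each would make the proof complete and self-contained.
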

We can form now the pullback bundle
\[
\begin{diagram}
  \node{\tau_1^*\left(\ad P\right)}\arrow{s,l}{p_1}\arrow{e,t}{p_2}\node{\ad P}\arrow{s,r}{}\\
  \node{C\left(P\right)}\arrow{e,b}{\tau_1}\node{M}
\end{diagram}
\]
where $\ad P:=\left(P\times\g\right)/G$, taking $\g$ as a $G$-space through the adjoint action.
Then there exists a $\g$-valued $2$-form $\Omega$ of the adjoint type on $J^1P$, namely the curvature form associated to the connection $\omega$; it defines a $2$-form on $C\left(P\right)$ with values in $\tau_1^*\left(\ad P\right)$ via
\[
\left.\Omega_2\right|_{\left[\rho\right]_G}\left(T_{\rho}\tau_1\left(X\right),T_{\rho}\tau_1\left(Y\right)\right):=\left[\rho,\left.\Omega\right|_\rho\left(X,Y\right)\right]_G
\]
for $X,Y\in T_{\rho}J^1P$.

\subsubsection{The form $\omega$ as a universal connection}

We will prove here that the form $\omega$ can be considered as a ``universal form'', namely, that every connection on $P$ can be built as a pullback of it along a suitable map.

\begin{proposition}\label{Prop:IsoBetweenPpalBundles}
  The $G$-principal bundle $p_G:J^1P\rightarrow J^1P/G$ is isomorphic to $p^*P$.
\end{proposition}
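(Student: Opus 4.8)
The plan is to produce the isomorphism explicitly as the map that records simultaneously the class of a jet in $C(P)=J^1P/G$ and its underlying point in $P$. Concretely, I would define
\[
\Phi:J^1P\rightarrow C(P)\times_M P,\qquad \Phi(\rho):=\left(\left[\rho\right]_G,\tau_{10}(\rho)\right),
\]
and first check that $\Phi$ actually lands in the pullback bundle $p^*P=C(P)\times_M P=\{(c,u):p(c)=\tau(u)\}$. This is immediate from the compatibility of the structure maps: since $p$ is induced by $\tau_1$ and $\tau_1=\tau\circ\tau_{10}$, one has $p([\rho]_G)=\tau_1(\rho)=\tau(\tau_{10}(\rho))$, so the two components of $\Phi(\rho)$ share the same image in $M$. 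Both $p_G=[\,\cdot\,]_G$ and $\tau_{10}$ are smooth, hence so is $\Phi$; and by construction $\Phi$ covers the identity on $C(P)$, since the projection $p^*P\rightarrow C(P)$ sends $(c,u)\mapsto c$.

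Next I would verify that $\Phi$ is $G$-equivariant. The $G$-action on $J^1P$ is the lift of the action on $P$, so $\tau_{10}(\rho\cdot g)=\tau_{10}(\rho)\cdot g$ while $[\rho\cdot g]_G=[\rho]_G$; on the other side $G$ acts on $p^*P$ through the second factor, $(c,u)\cdot g=(c,u\cdot g)$, as recorded earlier in the excerpt. Hence $\Phi(\rho\cdot g)=([\rho]_G,\tau_{10}(\rho)\cdot g)=\Phi(\rho)\cdot g$, and $\Phi$ is a smooth, fibre-preserving, $G$-equivariant morphism between two $G$-principal bundles over the common base $C(P)=J^1P/G$.

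Finally I would conclude that $\Phi$ is an isomorphism. The cleanest route is to invoke the standard fact that any $G$-equivariant smooth morphism of principal $G$-bundles over the same base, covering the identity, is automatically an isomorphism; but the inverse can equally be written down by hand. Given $(c,u)\in p^*P$ with $m:=p(c)=\tau(u)$, choose any representative $\rho$ of $c$ and set $p_0:=\tau_{10}(\rho)$, which lies in the same fibre $\tau^{-1}(m)$ as $u$; then by freeness and transitivity of the $G$-action on the fibres of $\tau$ there is a unique $g\in G$ with $u=p_0\cdot g$, and one puts $\Phi^{-1}(c,u):=\rho\cdot g$. Equivariance makes this independent of the representative $\rho$, and uniqueness of $g$ yields injectivity. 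The only step genuinely beyond formal bookkeeping — and thus the main obstacle — is precisely here: one must argue that $\tau_{10}$ restricts to a $G$-equivariant bijection between each $p_G$-fibre (a single $G$-orbit, since $p_G$ is a principal bundle) and the corresponding $\tau$-fibre, and that the resulting inverse is smooth, which is checked against the local trivializations of both bundles.
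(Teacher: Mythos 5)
Your proposal is correct and follows essentially the same route as the paper: the same forward map $\rho\mapsto\left(\left[\rho\right]_G,\tau_{10}\left(\rho\right)\right)$ and the same explicit inverse, obtained by translating a representative of the class $c$ by the unique group element that matches the second component $u$. The only difference is cosmetic: you get well-definedness of the inverse from $G$-equivariance of $\tau_{10}$ and freeness of the action, whereas the paper verifies it by a direct computation comparing two local sections $\tilde{s},\tilde{t}$ through the same point with the same jet class.
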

\begin{proof}
  The bundle $p^*P$ is defined through the diagram
  \[
  \begin{diagram}
    \node{p^*P}\arrow{s,l}{p_1}\arrow{e,t}{p_2}\node{P}\arrow{s,r}{\tau}\\
    \node{J^1P/G}\arrow{e,b}{p}\node{M}
  \end{diagram}
  \]
  where $p_1,p_2$ are the projections onto the factors of the cartesian product ($p^*P\subset J^1P/G\times P$), and
  \[
  p\left(\left[j^1_xs\right]_G\right):=x.
  \]
  The bundle isomorphism is defined by
  \[
  j^1_xs\in J^1P\mapsto\left(\left[j^1_xs\right]_G,s\left(x\right)\right)\in J^1P/G\times P,
  \]
  whose range is in $p^*P$, because
  \[
  p\left(\left[j_x^1s\right]_G\right)=x=\tau\left(s\left(x\right)\right).
  \]
  In order to show that it is a diffeomorphism, it is enough to show an inverse map, namely
  \[
  \left(\left[j_x^1s\right]_G,u\right)\in p^*P\mapsto j^1_x\tilde{s}
  \]
  where $\tilde{s}:U_x\subset M\rightarrow P$ is a local section for $P$ defined in a neighborhood $U_x$ of $x$ such that $\tilde{s}\left(x\right)=u$ and
  \[
  \left[j^1_x\tilde{s}\right]_G=\left[j^1_xs\right]_G.
  \]
  It is clear that such a section exists, by defining $\tilde{s}\left(y\right)=s\left(y\right)\cdot g_0$ for $y\in U_x$ and $g_0\in G$ such that $u=s\left(x\right)\cdot g_0$. Moreover, it is a well-defined map, and this can be seen as follows: If $\tilde{t}$ is another local section verifying that $\tilde{t}\left(x\right)=u$, then there exists $\gamma:U_x\rightarrow G$ such that
  \[
  \tilde{t}\left(y\right)=s\left(y\right)\cdot\gamma\left(y\right)
  \]
  for all $y\in U_x$, and in particular $\gamma\left(x\right)=g_0$; so
  \begin{equation}\label{Eq:JSTildeJS}
    j_x^1\tilde{t}:v\in T_xM\rightarrow T_x\tilde{s}\left(v\right)=T_xs\left(v\right)\cdot g_0+\left[\left(L_{g_0*}\right)^{-1}T_x\gamma\left(v\right)\right]^P_u\in T_uP,
  \end{equation}
where $\xi^P_u\in V_uP$ indicates the infinitesimal generator for the $G$-action on $P$ associated to the element $\xi\in\g$. Therefore from the condition $\left[j^1_x\tilde{t}\right]_G=\left[j^1_xs\right]_G$ we obtain that
  \[
  T_x\tilde{t}\left(v\right)=T_xs\left(v\right)\cdot g_1
  \]
  for some $g_1\in G$, and thus must be $g_1=g_0$, because $T_xs\left(v\right)\cdot g_1\in T_{s\left(x\right)\cdot g_1}$ and $T_x\tilde{t}\left(v\right)\in T_uP=T_{s\left(x\right)\cdot g_0}P$. Therefore
  \[
  \left[\left(L_{g_0*}\right)^{-1}T_x\gamma\left(v\right)\right]^P_u=0
  \]
  and using Eq. \eqref{Eq:JSTildeJS}, $j_x^1\tilde{t}=j_x^1s\cdot g_0=j_x^1\tilde{s}$.
\end{proof}

According to Prop. \ref{Prop:CElementsAsLocalConnections}, every section of $p:C\left(P\right)\rightarrow M$ defines a connection on $P$ and conversely, every connection gives rise to a section of the bundle $C\left(P\right)$; by using Prop. \ref{Prop:IsoBetweenPpalBundles}, we can state the following result.
\begin{proposition}
Every connection $\Gamma$ determines a section of the affine bundle $\tau_{10}:J^1P\rightarrow P$.
\end{proposition}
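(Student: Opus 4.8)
The plan is to construct the section explicitly, using the correspondence between connections and sections of the connection bundle together with the identification $J^1P\simeq p^*P$ supplied by Proposition \ref{Prop:IsoBetweenPpalBundles}. First I would recall the equivalence stated just above: by Proposition \ref{Prop:CElementsAsLocalConnections}, a connection $\Gamma$ on $P$ is the same datum as a global section $s_\Gamma:M\rightarrow C\left(P\right)=J^1P/G$ of the bundle of connections, that is, $p\circ s_\Gamma=\id_M$ where $p:C\left(P\right)\rightarrow M$ is the projection.

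Next I would make precise how $\tau_{10}$ is read under the isomorphism $J^1P\simeq p^*P$. That isomorphism sends $j^1_xs\mapsto\left(\left[j^1_xs\right]_G,s\left(x\right)\right)$, and since $\tau_{10}\left(j^1_xs\right)=s\left(x\right)$, it intertwines $\tau_{10}$ with the second projection $p_2:p^*P\rightarrow P$. Hence a section of the affine bundle $\tau_{10}$ is exactly a section of $p_2$. I would then define the section by pairing: for $u\in P$ set $x:=\tau\left(u\right)$ and
\[
\tilde\sigma_\Gamma\left(u\right):=\left(s_\Gamma\left(\tau\left(u\right)\right),u\right).
\]
This pair genuinely lies in $p^*P\subset\left(J^1P/G\right)\times P$ because $p\left(s_\Gamma\left(x\right)\right)=x=\tau\left(u\right)$, it is smooth as a composite of smooth maps, and it satisfies $p_2\circ\tilde\sigma_\Gamma=\id_P$; transporting back through the isomorphism yields $\tau_{10}\circ\tilde\sigma_\Gamma=\id_P$, as required. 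Concretely, $\tilde\sigma_\Gamma\left(u\right)$ is nothing but the horizontal lift $\rho_u:T_xM\rightarrow T_uP$ determined by $\Gamma$, i.e.\ the inverse of $T_u\tau$ restricted to the horizontal subspace $\ker\Gamma_u$, which by the very description of $J^1P$ as maps $\rho$ with $T_u\tau\circ\rho=\id$ belongs to the fibre of $\tau_{10}$ over $u$.

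I do not expect a genuine obstacle here. The only point demanding care is well-definedness, namely that $\left(s_\Gamma\left(x\right),u\right)$ actually lands in the fibred product $p^*P$, and this is immediate from the compatibility $p\circ s_\Gamma=\id_M$. The identification of $\tau_{10}$ with $p_2$ is a direct reading of the explicit isomorphism of Proposition \ref{Prop:IsoBetweenPpalBundles}, and the smoothness is routine, so the argument is essentially a matter of assembling results already established in this subsection.
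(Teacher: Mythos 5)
Your construction is exactly the paper's: you form $\tilde\sigma_\Gamma\left(u\right)=\left(\sigma_\Gamma\left(\tau\left(u\right)\right),u\right)$, check via $p\circ\sigma_\Gamma=\id_M$ that it lands in $p^*P$, and use that $\tau_{10}$ becomes the second projection under the identification $J^1P\simeq p^*P$ of Proposition \ref{Prop:IsoBetweenPpalBundles}. The proof is correct and matches the paper's argument step for step; the closing remark identifying $\tilde\sigma_\Gamma\left(u\right)$ with the horizontal lift is a harmless (and accurate) addition.
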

\begin{proof}
  We will denote by $\sigma_\Gamma:M\rightarrow C\left(P\right)$ the section associated by Prop. \ref{Prop:CElementsAsLocalConnections} to the connection $\Gamma$; thus we define the map
  \[
  \tilde{\sigma}_\Gamma\left(u\right):P\rightarrow C\left(P\right)\times P:u\mapsto\left(\sigma_\Gamma\left(\tau\left(u\right)\right),u\right).
  \]
  But it is immediate to show that its range is in $p^*P$, because of the identity
  \[
  p\left(\sigma_\Gamma\left(u\right)\right)=\tau\left(u\right);
  \]
  it is additionally a section of $\tau_{10}:J^1P\rightarrow P$, because under the bundle identification $J^1P\simeq p^*P$ the map $\tau_{10}$ reduces to
  \[
  \tau_{10}\left(\left[j_x^1s\right]_G,u\right)=u,
  \]
  and the proposition follows.
\end{proof}

We are ready to formulate the universal property of $\omega$.

\begin{proposition}
  For every connection $\Gamma$ on $P$ we have that $\tilde\sigma_\Gamma^*\omega=\omega_\Gamma$, where $\omega_\Gamma\in\Omega^1\left(P,\g\right)$ is the connection form associated to $\Gamma$.
\end{proposition}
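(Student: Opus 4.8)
The plan is to unwind the two definitions and evaluate the pulled-back form on an arbitrary tangent vector, reducing everything to the section property $\tau_{10}\circ\tilde\sigma_\Gamma=\id_P$. First I would fix a point $u\in P$ and a tangent vector $X\in T_uP$, and set $\rho:=\tilde\sigma_\Gamma\left(u\right)\in J^1P$. Since $\tilde\sigma_\Gamma$ is a section of $\tau_{10}$, we have $\tau_{10}\left(\rho\right)=u$, so $\rho$ lies over $u$ and, by the very definition of $\tilde\sigma_\Gamma$, the class $\left[\rho\right]_G$ equals $\sigma_\Gamma\left(\tau\left(u\right)\right)$.

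Next I would compute $T_\rho\tau_{10}\left(T_u\tilde\sigma_\Gamma\left(X\right)\right)$. By the chain rule this equals $T_u\left(\tau_{10}\circ\tilde\sigma_\Gamma\right)\left(X\right)$, and because $\tau_{10}\circ\tilde\sigma_\Gamma=\id_P$ it is simply $X$. Feeding this into the defining formula $\left.\omega\right|_\rho\left(Y\right):=\left[\rho\right]_G\left(T_\rho\tau_{10}\left(Y\right)\right)$ with $Y=T_u\tilde\sigma_\Gamma\left(X\right)$ gives
\[
\left(\tilde\sigma_\Gamma^*\omega\right)\big|_u\left(X\right)=\left[\rho\right]_G\left(X\right)=\Gamma_u\left(X\right),
\]
where $\Gamma_u:T_uP\rightarrow V_uP$ is the vertical projection attached to $\left[\rho\right]_G=\sigma_\Gamma\left(\tau\left(u\right)\right)$ by Proposition \ref{Prop:CElementsAsLocalConnections}.

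The remaining, and main, point is to verify that this projection $\Gamma_u$ coincides with the vertical projection determined by the connection $\Gamma$ itself at $u$; under the identification $V_uP\simeq\g$ used in the definition of $\omega$, its image $\Gamma_u\left(X\right)$ is then exactly $\left.\omega_\Gamma\right|_u\left(X\right)$. This is where I expect the real work to lie: one must trace through the correspondence $\Gamma\leftrightarrow\sigma_\Gamma$ of Proposition \ref{Prop:CElementsAsLocalConnections}, checking that the extension of the projection over the fibre by the right $G$-action reproduces the equivariant projection of $\Gamma$, and that the construction is independent of the chosen representative $\rho\in\left[\rho\right]_G$. Since $\sigma_\Gamma$ is by construction the section of $C\left(P\right)$ associated to $\Gamma$, this identification is essentially built into the definition of $\sigma_\Gamma$; once it is spelled out, the equality $\tilde\sigma_\Gamma^*\omega=\omega_\Gamma$ follows at once.
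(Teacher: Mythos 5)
Your proposal is correct and follows essentially the same route as the paper: both reduce the computation to the fact that $T_{\rho}\tau_{10}\circ T_u\tilde\sigma_\Gamma=\mathrm{id}_{T_uP}$ (you via the chain rule on $\tau_{10}\circ\tilde\sigma_\Gamma=\id_P$, the paper via the explicit decomposition $T_u\tilde\sigma_\Gamma\left(X\right)=\left(\left(T_x\sigma_\Gamma\right)\left(T_u\tau\left(X\right)\right),X\right)$ under $J^1P\simeq p^*P$), and then to the definitional identification of $\left[\tilde\sigma_\Gamma\left(u\right)\right]_G$ with the vertical projection of $\Gamma$ at $u$. The final step you flag as "the real work" is treated in the paper exactly as you suspect: it is absorbed into the definition of $\sigma_\Gamma$ via Proposition \ref{Prop:CElementsAsLocalConnections}.
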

\begin{proof}
  For $X\in T_uP,x=\tau\left(u\right)$ we have that
  \begin{align*}
    \left(\left.\tilde\sigma_\Gamma^*\omega\right)\right|_u\left(X\right)&=\left.\omega\right|_{\tilde\sigma_\Gamma\left(u\right)}\left(\tilde\sigma_{\Gamma*}\left(X\right)\right)\\
    &=\left.\omega\right|_{\tilde\sigma_\Gamma\left(u\right)}\left(\left(T_x\sigma_\Gamma\right)\left(T_u\tau\left(X\right)\right),X\right).
  \end{align*}
  Now we have that the definition of $\omega$ involves the projection $T_{j_x^1s}\tau_{10}:T_{j_x^1s}J^1P\rightarrow T_uP$, namely
  \[
  \left.\omega\right|_{j_x^1s}\left(Z\right)=\left[j_x^1s\right]_G\left(T_{j_x^1s}\tau_{10}\left(Z\right)\right)
  \]
  for every $Z\in T_{j_x^1s}J^1P$; under the identification $J^1P\simeq p^*P$ we have that $\tau_{10}$ is the projection onto the second factor, so
  \[
  \left.\omega\right|_{\tilde\sigma_\Gamma\left(u\right)}\left(\left(T_x\sigma_\Gamma\right)\left(T_u\tau\left(X\right)\right),X\right)=\left[\tilde\sigma_\Gamma\left(u\right)\right]_G\left(X\right)=\sigma_\Gamma\left(u\right)\left(X\right)=\left.\omega_\Gamma\right|_u\left(X\right).
  \]
  Then $\tilde\sigma_\Gamma^*\omega=\omega_\Gamma$, as we want to show.
\end{proof}

\subsubsection{Local expressions}\label{subsubsect:LocalExpressions}

It is time to describe locally these constructions, in order to find expressions in the coordinates usually found in the literature; we will use the references \cite{KN1,Naka,nla.cat-vn1828701} in this task.

Let $U\subset M$ be a coordinate neighborhood and $p:LM\rightarrow M$ the canonical projection of the frame bundle; on $p^{-1}\left(U\right)$ can be defined the coordinate functions
\[
u\in p^{-1}\left(U\right)\mapsto\left(x^\mu\left(u\right),e^\nu_k\left(u\right)\right)
\]
where $x^\mu\equiv x^\mu\circ p$ and
\[
u=\left\{e_1^\mu\left.\frac{\partial}{\partial x^\mu}\right|_{p\left(u\right)},\cdots,e_n^\mu\left.\frac{\partial}{\partial x^\mu}\right|_{p\left(u\right)}\right\}.
\]
If $\bar{U}\subset M$ is another coordinate neighborhood such that $U\cap\bar{U}\not=\emptyset$ and $u\in U\cap\bar{U}$, then
\[
u=\left\{\bar{e}_1^\mu\left.\frac{\partial}{\partial \bar{x}^\mu}\right|_{p\left(u\right)},\cdots,\bar{e}_n^\mu\left.\frac{\partial}{\partial \bar{x}^\mu}\right|_{p\left(u\right)}\right\},
\]
and the coordinates change on $p^{-1}\left(U\right)\cap p^{-1}\left(\bar{U}\right)\subset LM$ can be given as
\begin{align*}
  \bar{x}^\mu&=\bar{x}^\mu\left(x^1,\cdots,x^n\right)\\
  \bar{e}_k^\mu&=\frac{\partial\bar{x}^\mu}{\partial x^\nu}e_k^\nu.
\end{align*}
On the jet space $J^1E$ of any bundle $E\rightarrow M$, the change of adapted coordinates given by the rule $\left(x^\mu,u^A\right)\mapsto\left(\bar{x}^\nu\left(x\right),\bar{u}^B\left(x,u\right)\right)$ on $E$, transform the induced coordinates on $J^1E$ accordingly to \cite{saunders89:_geomet_jet_bundl}\\
\[
\bar{u}^A_\mu=\left(\frac{\partial\bar{u}^A}{\partial u^B}u^B_\nu+\frac{\partial\bar{u}^A}{\partial{x}^\nu}\right)\frac{\partial x^\nu}{\partial\bar{x}^\mu}.
\]
By supposing that the induced coordinates on $J^1LM$ are in the present case $\left(x^\mu,e^\mu_k,e^\mu_{k\nu}\right)$ and $\left(\bar{x}^\mu,\bar{e}^\mu_k,\bar{e}^\mu_{k\nu}\right)$, we will have that
\[
\bar{e}^\mu_{k\nu}=\left(\frac{\partial\bar{x}^\mu}{\partial x^\sigma}e^\sigma_{k\rho}+\frac{\partial^2\bar{x}^\mu}{\partial{x}^\rho\partial{x}^\sigma}{e}^\sigma_k\right)\frac{\partial x^\rho}{\partial\bar{x}^\nu}.
\]
Take note on the fact that the functions
\[
\Gamma_{\mu\nu}^\sigma:=-e^\sigma_{k\nu}e^k_\mu,
\]
where the quantities $e^k_\mu$ are uniquely determined by the conditions
\[
e_\mu^ke_k^\nu=\delta^\nu_\mu,
\]
transform accordingly to
\[
\bar{\Gamma}^\mu_{\rho\gamma}=-\frac{\partial\bar{x}^\mu}{\partial x^\nu}\frac{\partial x^\sigma}{\partial\bar{x}^\alpha}e^\alpha_{k\sigma}\bar{e}^k_\rho-\frac{\partial^2\bar{x}^\mu}{\partial{x}^\rho\partial{x}^\alpha}\frac{\partial x^\alpha}{\partial\bar{x}^\gamma}.
\]
But by using the previous definition, we can find the way in which $e^k_\mu$ and $\bar{e}^k_\mu$ are related, namely
\[
\bar{e}^k_\mu=\frac{\partial x^\gamma}{\partial \bar{x}^\mu}e_\gamma^k
\]
and therefore
\[
\bar{\Gamma}^\mu_{\delta\nu}=\frac{\partial\bar{x}^\mu}{\partial x^\sigma}\frac{\partial x^\rho}{\partial\bar{x}^\nu}\frac{\partial x^\gamma}{\partial\bar{x}^\delta}\Gamma^\sigma_{\gamma\rho}-\frac{\partial^2\bar{x}^\mu}{\partial{x}^\rho\partial{x}^\gamma}\frac{\partial x^\rho}{\partial \bar{x}^\nu}\frac{\partial x^\gamma}{\partial \bar{x}^\delta},
\]
which is the transformation rule for the Christoffel symbols, if the following identity
\[
\frac{\partial^2\bar{x}^\sigma}{\partial x^\rho\partial x^\gamma}\frac{\partial x^\rho}{\partial \bar{x}^\mu}\frac{\partial x^\gamma}{\partial \bar{x}^\nu}=-\frac{\partial^2 x^\rho}{\partial\bar{x}^\mu \partial\bar{x}^\nu}\frac{\partial \bar{x}^\sigma}{\partial x^\rho}.
\]
is used. So we are ready to calculate local expressions for the previously introduced canonical forms. First we have that
\[
\theta^k=e^k_\mu\dif x^\mu
\]
determines the components of the tautological form on $J^1LM$, and the canonical connection form will result
\[
\omega^k_l=e^k_\mu\left(\dif e^\mu_l-e^\mu_{l\sigma}\dif x^\sigma\right).
\]
It is immediate to show that
\[
\bar{\theta}^k=\theta^k,
\]
and moreover
\begin{align*}
  \bar{\omega}^k_l&=\bar{e}^k_\mu\left(\dif \bar{e}^\mu_l-\bar{e}^\mu_{l\nu}\dif \bar{x}^\nu\right)\\
  &=\frac{\partial x^\gamma}{\partial\bar{x}^\mu}e^k_\gamma\left[\dif\left(\frac{\partial\bar{x}^\mu}{\partial x^\gamma}e^\gamma_l\right)-\left(\frac{\partial\bar{x}^\mu}{\partial x^\sigma}e^\sigma_{l\rho}+\frac{\partial^2\bar{x}^\mu}{\partial{x}^\rho\partial{x}^\sigma}{e}^\sigma_l\right)\frac{\partial x^\rho}{\partial\bar{x}^\nu}\dif\bar{x}^\nu\right]\\
  &=\frac{\partial x^\gamma}{\partial\bar{x}^\mu}e^k_\gamma\left(\frac{\partial\bar{x}^\mu}{\partial x^\gamma}\dif e^\gamma_l-\frac{\partial\bar{x}^\mu}{\partial x^\sigma}e^\sigma_{l\rho}\dif{x}^\rho\right)\\
  &=e^k_\gamma\left(\dif e^\gamma_l-e^\gamma_{l\rho}\dif{x}^\rho\right)\\
  &=\omega^k_l.
\end{align*}
The associated curvature form can be calculated according to the formula
\begin{align*}
  \Omega^k_l&:=\dif\omega^k_l+\omega^k_p\wedge\omega^p_l\\
  &=\dif\left[e^k_\gamma\left(\dif e^\gamma_l-e^\gamma_{l\rho}\dif{x}^\rho\right)\right]+e^k_\gamma\left(\dif e^\gamma_p-e^\gamma_{p\sigma}\dif{x}^\sigma\right)\wedge\left[e^p_\sigma\left(\dif e^\sigma_l-e^\sigma_{l\rho}\dif{x}^\rho\right)\right]\\
  &=\dif e^k_\gamma\wedge\left(\dif e^\gamma_l-e^\gamma_{l\rho}\dif{x}^\rho\right)-e^k_\gamma\dif e^\gamma_{l\rho}\wedge\dif{x}^\rho+\\
  &\qquad+e^k_\gamma e^p_\sigma\left[\dif e^\gamma_p\wedge\dif e^\sigma_l+\left(e^\gamma_{p\beta}\dif e^\sigma_l\wedge\dif x^\beta-e^\sigma_{l\beta}\dif e^\gamma_p\wedge\dif x^\beta\right)+e^\gamma_{p\beta}e^\sigma_{l\delta}\dif x^\beta\wedge\dif x^\delta\right]\\
  &=-e^\gamma_{l\rho}\dif e^k_\gamma\wedge\dif{x}^\rho-e^k_\gamma\dif e^\gamma_{l\rho}\wedge\dif{x}^\rho+\\
  &\qquad+e^k_\gamma e^p_\sigma\left[\left(e^\gamma_{p\beta}\dif e^\sigma_l\wedge\dif x^\beta-e^\sigma_{l\beta}\dif e^\gamma_p\wedge\dif x^\beta\right)+e^\gamma_{p\beta}e^\sigma_{l\delta}\dif x^\beta\wedge\dif x^\delta\right]
\end{align*}
where in the passage from the third to the fourth line it was used the identity
\[
\dif e^k_\gamma\wedge\dif e^\gamma_l+e^k_\gamma e^p_\sigma\dif e^\gamma_p\wedge\dif e^\sigma_l=0.
\]
Because of the identity
\[
e_\gamma^k\dif e^\gamma_p=-e^\gamma_p\dif e_\gamma^k
\]
we can reduce further the expression for $\Omega^k_l$
\[
\Omega^k_l =e^k_\gamma\left[-\dif e^\gamma_{l\rho}\wedge\dif{x}^\rho+e^p_\sigma\left(e^\gamma_{p\beta}\dif e^\sigma_l\wedge\dif x^\beta+e^\gamma_{p\beta}e^\sigma_{l\delta}\dif x^\beta\wedge\dif x^\delta\right)\right].
\]
Take note that
\begin{equation}\label{Eq:CurvaturaIntermedia}
e^l_\mu\Omega^k_l=e^k_\gamma\left(\dif\Gamma^\gamma_{\mu\rho}\wedge\dif{x}^\rho+\Gamma^\gamma_{\sigma\beta}\Gamma^\sigma_{\mu\delta}\dif x^\beta\wedge\dif x^\delta\right),
\end{equation}
so that if we fix a connection $\Gamma$ through its Christoffel symbols $\left(\Gamma^\mu_{\nu\sigma}\right)$ in the canonical basis $\left\{\partial/\partial x^\mu\right\}$, then we will have that $e^\gamma_k=\delta^\gamma_k$ and this formula reduces to
\[
\Omega^\mu_\nu:=e_k^\nu e^l_\mu\Omega^k_l=\dif\Gamma^\mu_{\nu\rho}\wedge\dif{x}^\rho+\Gamma^\mu_{\sigma\beta}\Gamma^\sigma_{\nu\delta}\dif x^\beta\wedge\dif x^\delta
\]
providing us with the usual formula for the connection in terms of the local coordinates.

Next we can provide a local expression for the map $\tilde\sigma_\Gamma:LM\rightarrow J^1LM$. First we realize that a connection $\Gamma$ is locally described by a map
\[
\Gamma:x^\mu\mapsto\left(x^\mu,\Gamma^\sigma_{\mu\nu}\left(x\right)\right);
\]
in these terms, the map $\tilde\sigma_\Gamma$ is given by
\[
\tilde\sigma_\Gamma:\left(x^\mu,e^k_\nu\right)\mapsto\left(x^\mu,e^k_\nu,-e^\mu_k\Gamma_{\mu\nu}^\sigma\left(x\right)\right).
\]
It is convenient to stress about an abuse of language committed here: We are indicating with the same symbol $\tilde\sigma_\Gamma$ either the map itself and its local version. Nevertheless, we obtain the following local expression for the connection form associated to $\Gamma$, namely
\[
\left(\tilde\sigma_\Gamma^*\omega\right)_l^k=e^k_\mu\left(\dif e^\mu_l+e^\sigma_l\Gamma^\mu_{\sigma\rho}\left(x\right)\dif x^\rho\right).
\]
In our approach this equation is equivalent to the so called \emph{tetrad postulate}, which relates the components \emph{of the same connection} in the two representations provided by the theory developed here: As a section $\Gamma$ of the bundle of connections, and as an equivariant map $\tilde\sigma_\Gamma:LM\rightarrow J^1LM$ such that the following diagram commutes 
\[
\begin{diagram}
  \node{LM}\arrow{e,t}{\tilde\sigma_\Gamma}\arrow{s,l}{\tau}\node{J^1LM}\arrow{s,r}{p_{GL\left(n\right)}}\\
  \node{M}\arrow{e,b}{\Gamma}\node{C\left(LM\right)}
\end{diagram}
\]
According to the previous discussion, the pullback of these forms along the section $s:x^\mu\mapsto\left(x^\mu,e_k^\nu\left(x\right)\right)$ provides us with the expression for the connection forms associated to the underlying moving frame
\[
e_k\left(x\right):=e^\nu_k\left(x\right)\frac{\partial}{\partial x^k};
\]
in fact, given another such section $\bar{s}:x^\mu\mapsto\left(x^\mu,\bar{e}_k^\nu\left(x\right)\right)$, there exists a map $g:x^\mu\mapsto\left(g^k_l\left(x\right)\right)\in GL\left(n\right)$ relating them, namely
\[
\bar{e}^\mu_k\left(x\right)=g^l_k\left(x\right)e_l^\mu\left(x\right)
\]
and so
\[
\bar{s}^*\left(\tilde\sigma_\Gamma^*\omega\right)_l^k=h^k_pg^q_ls^*\left(\tilde\sigma_\Gamma^*\omega\right)_q^p+h^k_p\dif g^p_l.
\]
It allows us to answer the concerns raised in the introduction: The Palatini Lagrangian is a global form on $J^1LM$, but this is false for its pullback along a local section. Namely, its global description needs the inclusion of information about the $1$-jet of the vielbein involved in the local representation of the connection.

\newcommand{\etalchar}[1]{$^{#1}$}


\begin{thebibliography}{CdLdDM02}

\bibitem[ADM04]{citeulike:820116}
R.~Arnowitt, S.~Deser, and C.~W. Misner.
\newblock {The Dynamics of General Relativity}.
\newblock {\em General Relativity and Gravitation}, 40(9):1997--2027, May 2004.

\bibitem[AF05]{MR2151123}
I.~M. Anderson and M.~E. Fels.
\newblock Exterior differential systems with symmetry.
\newblock {\em Acta Appl. Math.}, 87(1-3):3--31, 2005.

\bibitem[BCG{\etalchar{+}}91]{BCG}
R.L. Bryant, S.S. Chern, R.B. Gardner, H.L. Goldschmidt, and P.A. Griffiths.
\newblock {\em Exterior differential systems}.
\newblock Springer-Verlag, 1991.

\bibitem[BK04]{brajercic04:_variat}
J.~Brajer\v{c}\'{\i}c and D.~Krupka.
\newblock Variational principles on the frame bundles.
\newblock {\em Preprint Series in Global Analysis and Applications, Departament
  of Algebra and Geometry, Palacky University}, 5:1--14, 2004.

\bibitem[BM94]{baez94:_gauge}
John Baez and Javier~P. Muniain.
\newblock {\em Gauge fields, knots and gravity}, volume~4 of {\em Series on
  Knots and Everything}.
\newblock World Scientific Publishing Co. Inc., River Edge, NJ, 1994.

\bibitem[Bry87]{MR924805}
Robert~L. Bryant.
\newblock On notions of equivalence of variational problems with one
  independent variable.
\newblock In {\em Differential geometry: the interface between pure and applied
  mathematics ({S}an {A}ntonio, {T}ex., 1986)}, volume~68 of {\em Contemp.
  Math.}, pages 65--76. Amer. Math. Soc., Providence, RI, 1987.

\bibitem[Car22]{opac-b1120127}
Élie~Joseph Cartan.
\newblock Leçons sur les invariants intégraux : cours professé à la
  faculté des sciences de paris, 1922.
\newblock Reproduction d'un cours professé pendant le semestre d'été
  1920-1921 à la Faculté des sciences de Paris. - Introd.

\bibitem[CdLdDM02]{cortes2002geometric}
Jorge Cort{\'e}s, Manuel de~Le{\'o}n, D~Mart{\'\i}n de~Diego, and Sonia
  Martinez.
\newblock Geometric description of vakonomic and nonholonomic dynamics.
  comparison of solutions.
\newblock {\em SIAM Journal on Control and Optimization}, 41(5):1389--1412,
  2002.

\bibitem[CL12]{castrillon12:_const_euler_poinc_reduc_field_theor}
M.~Castrillón~López.
\newblock Constraints in {E}uler-{P}oincaré reduction of field theories.
\newblock {\em Acta Applicandae Mathematicae}, 120:87--99, 2012.

\bibitem[CLMnM01]{springerlink:10.1007/PL00004852}
M.~Castrill\'{o}n~L\'{o}pez and J.~Mu\~{n}oz Masqu\'{e}.
\newblock The geometry of the bundle of connections.
\newblock {\em Mathematische Zeitschrift}, 236:797--811, 2001.
\newblock 10.1007/PL00004852.

\bibitem[CRS00]{lopez00:_reduc_princ_fiber_bundl}
Marco {Castrill\'{o}n L\'{o}pez}, Tudor~S. Ratiu, and Steve Shkoller.
\newblock Reduction in principal fiber bundles: Covariant euler-poincaré
  equations.
\newblock {\em Proceedings of the American Mathematical Society}, 128(7):pp.
  2155--2164, 2000.

\bibitem[Dac04]{dacorogna2004introduction}
Bernard Dacorogna.
\newblock {\em Introduction to the Calculus of Variations}.
\newblock Imperial College Press, 2004.

\bibitem[Dac07]{book:7463}
Bernard Dacorogna.
\newblock {\em Direct methods in the calculus of variations}.
\newblock Applied Mathematical Sciences. Springer, 2nd edition, 2007.

\bibitem[DI76]{PhysRevD.14.2505}
S.~Deser and C.~J. Isham.
\newblock Canonical vierbein form of general relativity.
\newblock {\em Phys. Rev. D}, 14:2505--2510, Nov 1976.

\bibitem[DP12]{2012GReGr..44.2337D}
N.~{Dadhich} and J.~M. {Pons}.
\newblock {On the equivalence of the Einstein-Hilbert and the Einstein-Palatini
  formulations of general relativity for an arbitrary connection}.
\newblock {\em General Relativity and Gravitation}, 44:2337--2352, September
  2012.

\bibitem[DVM87]{MR875299}
Michel Dubois-Violette and John Madore.
\newblock Conservation laws and integrability conditions for gravitational and
  {Y}ang-{M}ills field equations.
\newblock {\em Comm. Math. Phys.}, 108(2):213--223, 1987.

\bibitem[EGH80]{Eguchi1980213}
Tohru Eguchi, Peter~B. Gilkey, and Andrew~J. Hanson.
\newblock Gravitation, gauge theories and differential geometry.
\newblock {\em Physics Reports}, 66(6):213 -- 393, 1980.

\bibitem[FP90a]{0264-9381-7-6-007}
R~Floreanini and R~Percacci.
\newblock Canonical algebra of gl(4)-invariant gravity.
\newblock {\em Classical and Quantum Gravity}, 7(6):975, 1990.

\bibitem[FP90b]{0264-9381-7-10-011}
R~Floreanini and R~Percacci.
\newblock Palatini formalism and new canonical variables for gl(4)-invariant
  gravity.
\newblock {\em Classical and Quantum Gravity}, 7(10):1805, 1990.

\bibitem[Gar72]{MR0315624}
Pedro~L. Garc{\'{\i}}a.
\newblock Connections and {$1$}-jet fiber bundles.
\newblock {\em Rend. Sem. Mat. Univ. Padova}, 47:227--242, 1972.

\bibitem[Got91]{GotayCartan}
M.J. Gotay.
\newblock An exterior differential system approach to the {C}artan form.
\newblock In P.~Donato, C.~Duval, J.~Elhadad, and G.M. Tuynman, editors, {\em
  Symplectic geometry and mathematical physics. Actes du colloque de
  g\'eom\'etrie symplectique et physique math\'ematique en l'honneur de
  Jean-Marie Souriau, Aix-en-Provence, France, June 11-15, 1990.}, pages 160
  --188. Progress in Mathematics. 99. Boston, MA, Birkh\"auser, 1991.

\bibitem[Gri82]{book:852048}
Phillip Griffiths.
\newblock {\em Exterior Differential Systems and the Calculus of Variations}.
\newblock Progress in Mathematics. Birkhauser, 1982.

\bibitem[HMMN95]{Hehl19951}
Friedrich~W. Hehl, J.~Dermott McCrea, Eckehard~W. Mielke, and Yuval Ne'eman.
\newblock Metric-affine gauge theory of gravity: field equations, noether
  identities, world spinors, and breaking of dilation invariance.
\newblock {\em Physics Reports}, 258(1-2):1 -- 171, 1995.

\bibitem[Hsu92]{hsu92:_calcul_variat_griff}
L.~Hsu.
\newblock Calculus of variations via the {G}riffiths formalism.
\newblock {\em J. Diff. Geom.}, 36:551--589, 1992.

\bibitem[Hus75]{nla.cat-vn1828701}
Dale. Husemoller.
\newblock {\em Fibre bundles}.
\newblock Springer-Verlag, New York, 2d edition, 1975.

\bibitem[Ker78]{Friedric-1978}
Friedrich W. Hehl; G.~David Kerlick.
\newblock Metric-affine variational principles in general relativity. i.
  riemannian space-time.
\newblock {\em General Relativity and Gravitation}, 9, 1978.

\bibitem[KMS{\etalchar{+}}93]{Kolar93naturaloperations}
Ivan Kolar, Peter~W. Michor, Jan Slovák, Mailing Peter, and W.~Michor.
\newblock Natural operations in differential geometry, 1993.

\bibitem[KN63]{KN1}
S.~Kobayashi and K.~Nomizu.
\newblock {\em Foundations of Differential Geometry}, volume~1.
\newblock Wiley, 1963.

\bibitem[Kom84]{PhysRevD.30.305}
Arthur Komar.
\newblock Enlarged gauge symmetry of gravitation theory.
\newblock {\em Phys. Rev. D}, 30:305--308, Jul 1984.

\bibitem[LR03a]{lopez03:_reduc}
M.~Castrill\'{o}n L\'{o}pez and T.~Ratiu.
\newblock Reduction in principal bundles: {C}ovariant {L}agrange-{P}oincar\'{e}
  equations.
\newblock {\em Comm. Math. Phys.}, 236:223--250, 2003.

\bibitem[LR03b]{2003CMaPh.236..223L}
M.~{Castrill\'on} {L{\'o}pez} and T.~S. {Ratiu}.
\newblock {Reduction in Principal Bundles: Covariant Lagrange-Poincar{\'e}
  Equations}.
\newblock {\em Communications in Mathematical Physics}, 236:223--250, 2003.

\bibitem[Mar05]{martinez04:_class_field_theor_lie_variational}
Eduardo Mart\'{\i}nez.
\newblock Classical field theory on lie algebroids: variational aspects.
\newblock {\em Journal of Physics A: Mathematical and General}, 38(32):7145,
  2005.

\bibitem[Mos04]{moseley2004geometric}
Christopher~G Moseley.
\newblock Geometric control of quantum spin systems.
\newblock In {\em Defense and Security}, pages 319--323. International Society
  for Optics and Photonics, 2004.

\bibitem[MR94]{MarsdenRatiu}
J.E. Marsden and T.S. Ratiu.
\newblock {\em Introduction to Mechanics and Symmetry}, volume~17 of {\em Texts
  in Applied Mathematics}.
\newblock Springer-Verlag New York, Inc., 1994.

\bibitem[Nak96]{Naka}
M.~Nakahara.
\newblock {\em Geometry, Topology and Physics}.
\newblock Institute of Physics Publishing, 1996.

\bibitem[Pel94]{Peldan:1993hi}
Peter Peldan.
\newblock {Actions for gravity, with generalizations: A Review}.
\newblock {\em Class.Quant.Grav.}, 11:1087--1132, 1994.

\bibitem[Sar02]{sardanashvily2002classical}
Gennadi~Aleksandrovich Sardanashvily.
\newblock Classical gauge theory of gravity.
\newblock {\em Theoretical and mathematical physics}, 132(2):1163--1171, 2002.

\bibitem[Sau89]{saunders89:_geomet_jet_bundl}
D.~J. Saunders.
\newblock {\em The Geometry of Jet Bundles}.
\newblock Cambridge University Press, 1989.

\bibitem[Spi79]{MR532831}
M.~Spivak.
\newblock {\em A comprehensive introduction to differential geometry. {V}ol.
  {II}}.
\newblock Publish or Perish Inc., Wilmington, Del., second edition, 1979.

\bibitem[Thi86]{MR838749}
Walter Thirring.
\newblock {\em A course in mathematical physics. 2}.
\newblock Springer-Verlag, New York, second edition, 1986.
\newblock Classical field theory, Translated from the German by Evans M.
  Harrell.

\bibitem[Wal10]{wald2010general}
R.M. Wald.
\newblock {\em General Relativity}.
\newblock University of Chicago Press, 2010.

\bibitem[YCB04]{book:8133}
Cecile Dewitt-Morette Yvonne Choquet-Bruhat.
\newblock {\em Analysis, manifolds and physics}, volume Volume 1 of {\em Part
  I}.
\newblock North Holland, 2 edition, 2004.

\end{thebibliography}
\end{document}